\numberwithin{equation}{section}
\title[perturbed periodic Schr\"odinger operators]{On the perturbed periodic Schr\"odinger operators with separate resonant embedded eigenvalues}
\author{Kang Lyu}
\address{ School of Mathematics and Statistics, Nanjing University of Science and Technology, Nanjing, 210094, Jiangsu, People’s Republic of China}
\email{lvkang201905@outlook.com}
\urladdr{https://kanglyu.github.io/index.html}
\author{Chuanfu Yang}
\address{ School of Mathematics and Statistics, Nanjing University of Science and Technology, Nanjing, 210094, Jiangsu, People’s Republic of China}
\email{chuanfuyang@njust.edu.cn}
\keywords{Periodic Schr\"odinger operators,  absolutely continuous bands, embedded eigenvalues, generalized Pr\"ufer transformation.}
\thanks{{\em 2020 Mathematics Subject Classification.} Primary: 47E05. Secondary: 34L05}
\newcommand{\abs}[1]{\left\lvert #1 \right\rvert}
\theoremstyle{plain}
\newtheorem{theorem}{Theorem}[section]
\newtheorem{lemma}[theorem]{Lemma}
\newtheorem{proposition}[theorem]{Proposition}
\newcommand{\R}{\mathbb{R}}
\theoremstyle{definition}
\newtheorem{remark}[theorem]{Remark}
\begin{document}

\begin{abstract}In this paper, we consider Schr\"odinger operators on $L^2(0,\infty)$ given by
	\begin{align}
		Hu=(H_0+V)u=-u^{\prime\prime}+V_0u+Vu,\nonumber
	\end{align}
	where $V_0$ is real, $1$-periodic and $V$ is the perturbation. It is well known that under perturbations $V(x)=o(1)$ as $x\to\infty$, the essential spectrum of $H$ coincides with the essential spectrum of $H_0$. We  introduce a new way to construct oscillatory decaying perturbations with resonant embedded eigenvalues. Given any at most countable set $S$ inside the essential spectrum, we can construct perturbations with $S$ contained in the set of eigenvalues if the resonant eigenvalues in $S$ satisfy some condition. 
In particular, if $S$ is a finite set (or countable set), we can construct perturbation with $V(x)=\frac{O(1)}{x}$  $\left(\mathrm{or}\ \abs{V(x)}\leq\frac{h(x)}{1+x}\right)$ as $x\to\infty$ if the resonant eigenvalues of $S$ appear in the same spectral bands or large separate spectral bands, where $h(x)$ is any given function with $\lim_{x\to\infty}h(x)=\infty$.
\end{abstract}
	\maketitle
\section{introduction}
Let $H$ denote the one-dimensional Schr\"odinger operator defined on the half-axis by
differential expression
\begin{align}\label{perturbedperiodicschrodinger}
	Hu=(H_0+V)u=-u^{\prime\prime}+V_0u+Vu
\end{align}
and some self-adjoint boundary condition at zero, 
where $V_0\in L^2_{loc}(0,\infty)$ is real, $1$-periodic potential and $V$ is the perturbation. The operator describes a
charged particle, such as an electron, in the electric field $V_0+V$.
When $V=0$, we have an unperturbed $1$-periodic Schr\"odinger equation
\begin{align}\label{periodicschrodinger}
H_0u=-u^{\prime\prime}+V_0u=Eu.
\end{align}
It is well known that the essential spectrum of the operator induced by $H_0$ consists of absolutely continuous bands (no embedded eigenvalues). When $V(x)$ is decaying
quickly, one expects the spectral and dynamical properties of $H$ to remain close to
those of the operator $H_0$. In this paper, the perturbation $V$ always satisfies
\begin{align}
	V(x)=o(1),\nonumber
\end{align}
as $x\to\infty$. By Weyl's criterion, we know that $\sigma_{ess}(H)=\sigma_{ess}(H_0)$. However, the quality of the spectrum and dynamics may change. 
If the potential is small enough near $\infty$ (smaller than $\frac{o(1)}{1+x}$), then eigenvalues will not arise in these bands (\cite{LO17,kato1959}).
However, a little larger perturbation $\left(\frac{O(1)}{1+x}\right)$ may create embedded eigenvalues.
For example, for the special case $V_0=0$,  Wigner-von Neumann type potentials
\begin{align}
	V(x)=\frac{a}{1+x}\sin \left(kx+\theta\right)\nonumber
\end{align}
provide examples that $V(x)=\frac{O(1)}{1+x}$ and $H$ has a positive
embedded eigenvalue (\cite{von1929}).  From a physical perspective, this is a purely quantum resonance phenomenon, the appearance of an embedded eigenvalue suggests that at a particular point, an electron possesses enough energy to propagate but remains confined (\cite{kuchment00absenceofembedded}). The problems on the critical rates of decay at which some changes
in the spectral and dynamical properties of $H$ may happen have been widely studied these years. Such problems as the (sharp) spectral transitions for purely absolutely continuous spectrum, embedded singular continuous spectrum, embedded point spectrum (\cite{remling1998absolutely,remlingcouterexamples,remlingproceedingsboundsonsingular,kiselevsingularJAMS,liunonlineariterevisiting,liudiscrete,kiselevabsolutelyJAMS,kiselevstabilityofacDUKE,kato1959,liu2024sharpdecayperiodic,VoiculescuCMPabsolutelccontinuousKtheory}, etc.), and the estimate of ground-state energy (\cite{fefferman90bamsOnenergyofatom,fefferman92AIM.E.V.&E.F.ofODE,fefferman93aperiodicity,fefferman94AIMDiracandSchwinger,fefferman94AIMeigenvaluesumthreedimensional,fefferman94AIMthedensityonedimensional,fefferman94AIMthedensitythreedimensional}), minimization/maximization of the positive eigenvalues,  sharp ratios of the eigenvalues have been studied for different models (\cite{atkinson1978bounds,chuadvanceminimizationperiodic,ashbaughannalssharpratio,ashbaughcmpoptimalbounds}).

The problem of embedded eigenvalues into the essential spectrum or absolutely continuous spectrum arises from the special case that $V_0=0$. Here the essential spectrum of $H_0$ is purely absolutely continuous, and $\sigma_{ess}(H_0)=[0,\infty)$. As we mentioned, Wigner-von Neumann type potentials
provide with examples for one single embedded eigenvalue. This type of potentials were widely studied and developed later (\cite{Lukicjspwignervonneumann,Lukictransperiodic,Lukiccmpdecayingoscillatory,judgejacobineumann,janasdiscreteschrodinger,simonovzerosspectraldensity,simonovspectralanalysis}), and even to create the optimal bound for one single embedded eigenvalue (\cite{atkinson1978bounds,liudiscrete}). For many embedded eigenvalues, there are many fertile results these years \cite{Na86,simondense,marlettaembdedeigenvalues,kiselevperiodicmethod,liuirreducibilityfermivariety,krugerembeddedeigenvalues,liustarkmathN,liuasymptotic,kiselevsingularJAMS,vishwamdirac,judgegeometricapproach,judgediscretelevinsontechnique,JL19,Liuresonant,LO17,L19stark,liudiscrete}. For example, Naboko \cite{Na86} and Simon \cite{simondense} provided examples with countably many embedded eigenvalues. Judge, Naboko, and Wood \cite{judgegeometricapproach,judgediscretelevinsontechnique} introduced perturbations that create embedded eigenvalues for Jacobi operators.
In \cite{JL19}, 
Jitomirskaya and Liu proposed a new idea to create manifolds with (countably) many embedded eigenvalues, and this approach was developed later to solve different problems \cite{Liuresonant,LO17,L19stark}. 

Let us go back to our problem that the periodic term $V_0\neq 0$. Denote the absolutely continuous spectral bands by 
\begin{align}
	\sigma_{ess}(H_0)=\cup_{j=1}^\infty I_j,\nonumber
\end{align}
where $I_j\ (j\geq 1)$  are some closed intervals that will be introduced in Section 2, any two of these bands can intersect at most at one point.
For any $E\in I_j$, denote the corresponding Floquet solution by $\varphi(x,E)$, which satisfies
\begin{align}
	\varphi(x+1,E)=e^{ik(E)}\varphi(x,E),\nonumber
\end{align}
where $k=k(E)\in [0,\pi]$  is the quasimomentum. By Floquet theory, $k(E)$ monotonically decreases from $\pi$ to $0$ or monotonically increases from $0$ to $\pi$ in each $I_j$ (\cite{brown2013periodicbook}). Difficulties arise when one wants to create embedded eigenvalues. Point spectra are in a
sense very fragile. When modifying a perturbation $V(x)$ to create one eigenvalue often destroys other eigenvalues. Thus to construct a perturbation to have more than one embedded eigenvalue is very
challenging, let alone infinitely many. Moreover, the existence of  resonant eigenvalues results in the periodic case is extremely hard than the special case $V_0=0$.

 First of all, in each band, for every fixed  $k\in (0,\pi),k\neq \frac{\pi}{2}$, there are two eigenvalues $E_1$ and $E_2$ with $k(E_1)+k(E_2)=\pi$ will induce resonance. Namely, when one of the eigensolutions of \eqref{perturbedperiodicschrodinger} under a perturbation $V$, for example $u(x,E_1)$,  decreases quickly to be $L^2$ near $\infty$, the other one $u(x,E_2)$ highly possibly increases near $\infty$. Moreover, it will induce infinitely many couples of resonances by moving $k\in(0,\pi)$ even in a fixed band.
 Therefore, the first difficulty to overcome is how to construct a ``satisfactory" perturbation to create such kind of couples of resonant embedded eigenvalues from the same band. 
 What is more difficult is that the eigenvalues that can create resonances with some $E_1$ are not just $E_2$ from the same band but from infinitely many bands. In fact, suppose $ k(E_1)\in (0,\pi)$, then there are eigenvalues $E^k_j$ and $E^{\tilde{k}}_j$ with $k(E^k_j)=\pi-k(E^{\tilde{k}}_j)=k(E_1)$ in each band $I_j$. All these eigenvalues will create resonances with $E_1$.
 To avoid these difficulties, the previous papers always require a non-resonance assumption of the eigenvalues.
  For example, in \cite{Lukictransperiodic}, the non-resonance condition is addressed in their Lemma 13 expressed as a condition on Fourier coefficients.
 The authors in \cite{kiselevperiodicmethod} introduced a perturbation that has embedded eigenvalues with rationally independent quasimomenta. Later, W. Liu and his collaborator \cite{LO17} improved the method in \cite{JL19} and constructed perturbations step by step to create embedded eigenvalues under the non-resonance assumption. 
 
Inspired by \cite{JL19,kiselevsingularJAMS}, we introduce a new construction for perturbations. Our construction is more efficient in the aspect of the decreasing of eigensolutions than those of in \cite{JL19,LO17,L19stark,vishwamdirac,kangperiodicdirac}. Namely, the eigensolutions decrease continuously, not piecewise.
Once we take the eigenvalue $E_j$ into consideration in $j$-th step, the eigensolution starts decreasing as $x\to\infty$. 
This ensures that we are able to deal with some cases of resonances. Our main results are {Theorems} \ref{theoremmain124} and \ref{theoremmain224}. In our paper, the embedded eigenvalues are allowed to be resonant in the following two cases: \textbf{I}. The resonant eigenvalues appear in the same band. \textbf{II}. The resonant eigenvalues appear in many different (separate enough) bands (see formula \eqref{conditionb30}). We emphasize that even though there is no eigenvalues creating resonance with $E$ for $k(E)=\frac{\pi}{2}$ from the same band,  the previous papers do not permit the existence of the embedded eigenvalue $E$. In our paper, every $E$ with $k(E)=\frac{\pi}{2}$ could become an embedded eigenvalue under the new construction of the perturbations.
The definition of the perturbation is based on solving a class of differential equations involved with Pr\"ufer angles $\theta(x,E_j)$ which in turn depend on $V$, this idea was first came up with by Kiselev in \cite{kiselevsingularJAMS} and widely used in later papers \cite{LO17,Liuresonant}, etc.

Our paper is organized as follows. In Section 2, we give some asymptotics of eigenvalues and solutions of Sturm-Liouville equations and introduce main results. In Section 3, we first recall the generalized Pr\"ufer transformation, and then some important estimates are obtained.  In Section 4, we introduce a central quantitative estimate of an oscillatory integral coupled with a periodic parameter. In Section 5, we  prove some resonant integral estimates. In Section 6, we introduce the new construction and prove the main theorems.

\

\section{estimates of the eigenvalues and solutions of Sturm-Liouville equation and main results}

In this section we collect some properties of eigenvalues and solutions of the Sturm-Liouville equation, which we need in the following.

Define $\tilde{H}_k=\tilde{H}_k(V_0)$ by
\begin{align}\label{finiteintervalquasiperiodic}
	\tilde{H}_ku=-u^{\prime\prime}+V_0u=Eu,\ x\in (0,1),
\end{align}
with the boundary condition
\begin{align}\begin{cases}\label{quasiboundarycondition}
		u(1)=e^{ik}u(0),\\
		u'(1)=e^{ik}u'(0),
	\end{cases}
\end{align}
where $k\in [0,\pi]$. The boundary condition \eqref{quasiboundarycondition} is called periodic, antiperiodic or quasiperiodic boundary condition if $k=0$, $k=\pi$ or $k\in (0,\pi)$, respectively.

The eigenvalues of  $\tilde{H}_k$ (counted with multiplicity) are denoted by $E^k_1\leq E^k_2\leq E^k_3\leq \cdots.$ Then $E$ is an eigenvalue of $\tilde{H}_k$ if and only if 
\begin{align}
	D(E):=C(1,E)+S'(1,E)=2\cos k,
\end{align}
where $C(x,E)$ and $S(x,E)$ are solutions of \eqref{finiteintervalquasiperiodic} under the initial conditions
\begin{align}
	C(0,E)-1=C'(0,E)=S(0,E)=S'(0,E)-1=0.\nonumber
\end{align}  
Recall that the essential spectrum of $H_0$ is $\sigma_{ess}(H_0)=\cup_{j=1}^\infty I_j.$ By Floquet theory, 
we have for every $k\in (0,\pi)$, 
\begin{align}
	E^0_1<E^k_1<E^\pi_1\leq E^\pi_2<E^k_2<E^0_2\leq E^0_3<E^k_3<E^\pi_3\leq \cdots,
\end{align}
and $D(E)$ is analytic, strictly decreasing in $I_1=[E^0_1,E^\pi_1]$, $I_3=[E^0_3,E^\pi_3],\cdots$, strictly increasing in $I_2=[E^\pi_2,E^0_2]$, $I_4=[E^\pi_4,E^0_4],\cdots$. $k=k(E)$ increases in $I_1=[E^0_1,E^\pi_1]$, $I_3=[E^0_3,E^\pi_3],\cdots$, decreases in $I_2=[E^\pi_2,E^0_2]$, $I_4=[E^\pi_4,E^0_4],\cdots$.

Denote by $A=\int_0^1\abs{V_0(x)}dx$, $\rho=\sigma+i
\tau=\sqrt{E}$,
then one has the following classical asymptotics.
\begin{lemma}\cite[Lemma 1.1.2, Remark 1.1.2]{yurkobook}\label{lemmayurko}
	For any $\abs{\rho}>\max\{1,2A\}$, one has for any $x\in [0,1]$, 
	\begin{align}
		\label{Ccosrhox}\abs{C(x,E)-\cos\rho x}&\leq \frac{2Ae^{\abs{\tau}x}}{\abs{\rho}},\\
		\label{Cprimerhosinrho}\abs{C'(x,E)+\rho\sin\rho x}&\leq 2Ae^{\abs{\tau}x},
	\end{align}
	and
	\begin{align}
		\label{Ssinrhoxtorho}\abs{S(x,E)-\frac{\sin\rho x}{\rho}}&\leq \frac{2Ae^{\abs{\tau}x}}{\abs{\rho}^2},\\
		\label{Sprimecosrhox}\abs{S'(x,E)-\cos\rho x}&\leq \frac{2Ae^{\abs{\tau}x}}{\abs{\rho}}.
	\end{align}
\end{lemma}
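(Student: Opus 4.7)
The plan is to convert the Cauchy problems for $C$ and $S$ into Volterra integral equations by variation of parameters, and then iterate via Gronwall's inequality. Rewriting the ODE as $u''+\rho^2u=V_0u$ with free solutions $\cos\rho x$, $\sin\rho x/\rho$ and Green's kernel $\sin\rho(x-t)/\rho$ gives
\begin{align}
C(x,E)&=\cos\rho x+\int_0^x\frac{\sin\rho(x-t)}{\rho}V_0(t)C(t,E)\,dt,\nonumber\\
S(x,E)&=\frac{\sin\rho x}{\rho}+\int_0^x\frac{\sin\rho(x-t)}{\rho}V_0(t)S(t,E)\,dt.\nonumber
\end{align}
The elementary bounds $|\cos\rho x|,|\sin\rho x|\le e^{|\tau|x}$ yield $|\sin\rho(x-t)/\rho|\le e^{|\tau|(x-t)}/|\rho|$, which is all I will need for the kernel.

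Next I would establish a priori bounds. Setting $g(x):=|C(x,E)|e^{-|\tau|x}$, the first Volterra equation forces $g(x)\le 1+|\rho|^{-1}\int_0^x|V_0(t)|g(t)\,dt$, so Gronwall's inequality yields $g(x)\le e^{A/|\rho|}<2$ under the standing assumption $|\rho|>2A$; hence $|C(x,E)|\le 2e^{|\tau|x}$. An analogous manipulation of the second equation, applied to $h(x):=|\rho|e^{-|\tau|x}|S(x,E)|$, gives $|S(x,E)|\le 2e^{|\tau|x}/|\rho|$. Feeding these a priori bounds back into the Volterra identities immediately produces \eqref{Ccosrhox} and \eqref{Ssinrhoxtorho}, since the integrands are then controlled pointwise by $2|V_0(t)|e^{|\tau|x}/|\rho|$ and $2|V_0(t)|e^{|\tau|x}/|\rho|^2$ respectively, and $\int_0^x|V_0|\le A$.

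To obtain the derivative estimates \eqref{Cprimerhosinrho} and \eqref{Sprimecosrhox} I would differentiate the two Volterra equations in $x$; the boundary term vanishes because $\sin 0=0$, so the kernel becomes $\cos\rho(x-t)$ (still bounded by $e^{|\tau|(x-t)}$) and the same a priori bounds finish the argument. The only point requiring care is the bookkeeping of constants: the hypothesis $|\rho|>2A$ is imposed precisely so that $e^{A/|\rho|}<2$, which is what produces the explicit factor $2$ in front of $A$ in all four estimates. Nothing in the argument is deep; the lemma is essentially a quantitative statement that for large $|\rho|$ the potential $V_0$ is a small perturbation of the constant-coefficient equation, and the main obstacle is simply to keep the powers of $|\rho|$ and the exponential factor $e^{|\tau|x}$ straight through the iteration.
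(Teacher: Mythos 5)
Your proof is correct, and it follows essentially the standard route underlying the cited reference: convert the initial-value problems for $C$ and $S$ into Volterra integral equations, obtain a priori bounds (you use Gronwall where Yurko's book uses successive approximations, but these are interchangeable and yield the same constant $e^{A/|\rho|}<2$ under $|\rho|>2A$), and feed them back once. The paper itself does not reprove this lemma — it cites Yurko — and the two Volterra identities you start from are precisely the ones the paper invokes again in the proof of Lemma~\ref{lemmaeknakn}, so there is no divergence in approach.
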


Applying this lemma, for any fixed $k\in (0,\pi)$, the asymptotics of the eigenvalues can be obtained (see, for example \cite{yangquasitracJPA})
$\sqrt{E_{\pm n}}=2n\pi\pm k+\frac{O(1)}{n}$ as $n\to\infty$ (under an appropriate subscript).
In our paper, 
a quantitative description of $O(1)$ with respect to $n$ and $k$ is necessary. In the following, for any $k\in (0,\pi)$ and any $n\geq 1$, we denote by 
$$\delta_n(k)=\frac{Ae^{3}}{n\left(\sin k\sin\frac{99k}{100}\sin\frac{\pi+99k}{100}\right)}.$$

\begin{lemma}\label{lemmaeknakn}
	{	For any $k\in (0,\pi)$, we have for any $n> 1+\frac{A}{\frac{k(\pi-k)}{10^4}\sin k\sin\frac{99k}{100}\sin\frac{\pi+99k}{100}}$, 
		\begin{align}
			\label{eknakndeltank}\abs{\sqrt{E^k_n}-a^k_n}\leq \delta_n(k),
	\end{align}}
	where 
	\begin{align}
		a^k_n=\begin{cases}
			n\pi-k,&n\ is\ even,\\
			(n-1)\pi+k, &n\ is\ odd.\nonumber
		\end{cases}
	\end{align}
\end{lemma}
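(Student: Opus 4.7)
The plan is to convert the eigenvalue equation $D(E^k_n)=2\cos k$, via the discriminant asymptotics of Lemma~\ref{lemmayurko}, into a near-equation $\cos\rho\approx\cos k$ for $\rho=\sqrt{E^k_n}$, and then to use local control of the factorisation
$$\cos\rho-\cos k=-2\sin\!\bigl(\tfrac{\rho+k}{2}\bigr)\sin\!\bigl(\tfrac{\rho-k}{2}\bigr)$$
to pin $\rho$ near the target root $a^k_n$.

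The first step is to set $\rho=\sqrt{E^k_n}$. Under the hypothesis on $n$ the eigenvalue is positive with $\rho>\max\{1,2A\}$, so $\rho$ is real ($\tau=0$) and \eqref{Ccosrhox}--\eqref{Sprimecosrhox} at $x=1$ sum to $|D(E^k_n)-2\cos\rho|\le 4A/\rho$. Combined with $D(E^k_n)=2\cos k$, this yields $|\cos\rho-\cos k|\le 2A/\rho$. Writing $\rho=a^k_n+2s$ and using $\cos a^k_n=\cos k$ together with the parity of $n$ (namely $a^k_n=(n-1)\pi+k$ for $n$ odd and $a^k_n=n\pi-k$ for $n$ even), the two factors in the product-to-sum identity reduce in absolute value to $|\sin s|$ and $|\sin(k\pm s)|$, so the constraint becomes $|\sin s\cdot\sin(k\pm s)|\le A/\rho$.

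Next I would localise: for $s$ in a window of radius $\sim\min(k,\pi-k)/100$, one has the standard bound $|\sin s|\ge(2/\pi)|s|$, and a uniform lower bound $|\sin(k\pm s)|\ge\sin(99k/100)\sin((\pi+99k)/100)$ obtained by applying the elementary inequality $\min(a,b)\ge ab$ for $a,b\in[0,1]$ to the two endpoint values of $\sin$ on the shifted window $[k-k/100,\,k+(\pi-k)/100]$. Solving for $|s|$ and doubling then produces $\delta_n(k)$, with $e^3$ absorbing the universal constants (the extra $\pi$'s as well as the factor $\sin k$ that appears when checking that the first-order relation $\cos\rho-\cos k\approx-\sin(a^k_n)(\rho-a^k_n)$, with $|\sin(a^k_n)|=\sin k$, dominates the quadratic remainder). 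Correct labelling $m=n$ then follows from the strict ordering $E^0_1<E^k_1<E^\pi_1\le E^\pi_2<E^k_2<E^0_2\le\cdots$ recalled before Lemma~\ref{lemmayurko} together with disjointness of the windows around distinct $a^k_m$.

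The main obstacle I anticipate is the explicit-constant bookkeeping: the threshold on $n$ in the hypothesis is designed to guarantee simultaneously that $\delta_n(k)$ sits inside the chosen window (so the a priori assumption $|s|\le R$ is self-consistent), that the windows around $a^k_n$ and its neighbours $a^k_{n\pm1}$ are disjoint (so the labelling argument applies), and that the linear term in $\cos\rho-\cos k$ near $a^k_n$ genuinely dominates the quadratic remainder. Making $e^3$, $1/100$, and $1/10^4$ align so that these three conditions hold uniformly in $k\in(0,\pi)$ is the fiddly technical core of the proof.
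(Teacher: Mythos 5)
Your route is genuinely different from the paper's. The paper expands the discriminant one step further, writes $\Delta(E)=2\cos\rho-2\cos k+\frac{\sin\rho}{\rho}\int_0^1V_0+f(\rho)$, and then runs a Rouch\'e argument in the complex $E$-plane: it places the $\rho$-contours $\gamma_n^{\pm}(k)$ at $\sigma=2n\pi+k\pm\delta_n(k)$ (and the mirrored $\gamma_n^{\pm}(-k)$), shows $\lvert 2\cos\rho-2\cos k\rvert$ dominates the error term on those contours (splitting into the regimes $\lvert\tau\rvert\ge 3$ and $\lvert\tau\rvert<3$), and concludes by counting zeros of $D(E)-2\cos k$ inside. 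Your plan is purely real-variable: extract $\lvert\cos\rho-\cos k\rvert\le 2A/\rho$ at the real eigenvalue, factor with $\cos\rho-\cos k=-2\sin\frac{\rho+k}{2}\sin\frac{\rho-k}{2}$, and invert locally. The local inversion itself is fine, and the constants align — you actually get a slightly sharper bound (with $\rho$ rather than $n$ in the denominator and no spare $\sin k$), and checking it sits under $\delta_n(k)$ only uses $\rho\ge(n-1)\pi>n\pi/e^3$. The a~priori window $\lvert s\rvert\lesssim\min(k,\pi-k)/100$ can also be extracted from $\lvert\cos\rho-\cos k\rvert\le 2A/\rho$ once $\rho$ is large enough, which the stated lower bound on $n$ supplies.

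The genuine gap is the index identification, and it is not (as you suggest) a consequence of "strict ordering of the eigenvalues together with disjointness of the windows around distinct $a^k_m$." Those two facts only give a strictly increasing map $n\mapsto m(n)$ from eigenvalue indices to root indices; they do not exclude $m(n)=n+c$ for some fixed shift $c\ge 1$, since your constraint only forces $\sqrt{E^k_n}$ to be near \emph{some} root $2m\pi\pm k$ of $\cos\rho=\cos k$, without saying which one. The paper's Rouch\'e argument resolves this for free because it is a zero count: each contour $\Gamma_n^{\pm}(k)$, $\Gamma_n^{\pm}(-k)$ encloses exactly as many eigenvalues of $\tilde H_k$ as roots of $2\cos\rho=2\cos k$, and the nesting of the contours then pins the index. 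To repair your version you need an independent counting input — for instance Sturm oscillation (the $n$-th Dirichlet/Floquet eigenfunction has a prescribed number of zeros in $(0,1)$, compared against the free case), or an appeal to the classical rough asymptotic $\sqrt{E^k_n}=a^k_n+O(1/n)$ which already carries the correct labelling — and then your sharpened local estimate upgrades the $O(1/n)$ to the quantitative $\delta_n(k)$. As written, that anchoring step is missing.
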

\begin{proof}
	Since $C(x,E)$ and $S'(x,E)$ satisfy Volterra integral equations (\cite[Lemma 1.1.2, Remark 1.1.2]{yurkobook}),
	\begin{align}
		C(x,E)=\cos\rho x+\int_0^x\frac{\sin \rho(x-t)}{\rho}V_0(t)C(t,E)dt,\\
		S'(x,E)=\cos\rho x+\int_0^x\cos \rho (x-t)V_0(t)S(t,E)dt,
	\end{align}
	applying \eqref{Ccosrhox} and \eqref{Ssinrhoxtorho}, we can obtain
	\begin{align}
		C(1,E)=\cos\rho +\int_0^1\frac{\sin\rho (1-t)}{\rho}\cos\rho t V_0(t)dt+f_1(\rho),
	\end{align}
	and
	\begin{align}
		S'(1,E)=\cos \rho +\int_0^1\cos\rho(1-t)\frac{\sin\rho t}{\rho}V_0(t)dt+f_2(\rho),
	\end{align}
	where 
	\begin{align}
		\abs{f_i(\rho)} \leq \frac{2A^2e^{\abs{\tau}}}{\abs{\rho}^2},\ i=1,2.
	\end{align}
	Therefore, the eigenvalues of $\tilde{H}_k$ coincide with the zeros of 
	\begin{align}
		\Delta(E):&=D(E)-2\cos k\nonumber\\
		&=C(1,E)+S'(1,E)-2\cos k\nonumber\\
		&=2\cos\rho-2\cos k+\frac{\sin\rho}{\rho}\int_0^1V_0(t)dt+f(\rho),
	\end{align}
	where $\abs{f(\rho)}\leq \frac{4A^2e^{\abs{\tau}}}{\abs{\rho}^2}$. 
	
	For any $n\in\mathbb{N}$, denote by 
	\begin{align}
		\gamma_n^{\pm}(k)=&\{\rho\in\mathbb{C}:{\sigma}=k+2n\pi \pm\delta_n(k),\ {\abs{\tau}}\leq k+2n\pi\pm\delta_n(k)\}\nonumber\\
		&\cup\{\rho\in\mathbb{C}: {\abs{\tau}}= k+2n\pi\pm\delta_n(k),\ 0\leq {\sigma}\leq k+2n\pi \pm\delta_n(k)\}.\nonumber
	\end{align}
	Let
	\begin{align}
		\Gamma_n^{\pm}(k)=\{E=\rho^2:\rho\in \gamma_n^{\pm}(k)\} (\mathrm{\ see  \ Figure\ 1}).\nonumber
	\end{align}
	
\begin{center}
	\tikzset{every picture/.style={line width=0.75pt}} %set default line width to 0.75pt        
	
	\begin{tikzpicture}[x=0.70pt,y=0.70pt,yscale=-1,xscale=1]
		%uncomment if require: \path (0,310); %set diagram left start at 0, and has height of 310
		
		%Shape: Axis 2D [id:dp9522144351992383] 
		\draw  (0,167) -- (267,167)(23.75,4) -- (23.75,307) (260,162) -- (267,167) -- (260,172) (18.75,11) -- (23.75,4) -- (28.75,11)  ;
		%Shape: Axis 2D [id:dp10245036023640952] 
		\draw  (284,165.87) -- (600,165.87)(437.21,10) -- (437.21,312) (593,160.87) -- (600,165.87) -- (593,170.87) (432.21,17) -- (437.21,10) -- (442.21,17)  ;
		%Curve Lines [id:da39252336106334407] 
		\draw    (233,95) .. controls (255.77,43.52) and (295.2,43.01) .. (319.28,95.4) ;
		\draw [shift={(320,97)}, rotate = 246.04] [color={rgb, 255:red, 0; green, 0; blue, 0 }  ][line width=0.75]    (10.93,-3.29) .. controls (6.95,-1.4) and (3.31,-0.3) .. (0,0) .. controls (3.31,0.3) and (6.95,1.4) .. (10.93,3.29)   ;
		%Straight Lines [id:da43471209607877026] 
		\draw    (25,49) -- (220,49) ;
		%Straight Lines [id:da3379366729549764] 
		\draw    (220,49) -- (221,279) ;
		%Straight Lines [id:da8356705054538893] 
		\draw    (23,280) -- (221,279) ;
		%Straight Lines [id:da6260335215658872] 
		\draw    (130,80) -- (130,250) ;
		%Straight Lines [id:da7006892472554267] 
		\draw    (130,80) -- (24,80) ;
		%Straight Lines [id:da48539894594576993] 
		\draw    (130,250) -- (24,251) ;
		%Shape: Parabola [id:dp9422360626807955] 
		\draw   (437.21,83.22) .. controls (285.88,139.89) and (286.28,195.47) .. (437.21,249.99) ;
		%Shape: Parabola [id:dp5879512106651923] 
		\draw   (437.21,83.5) .. controls (604.92,137.79) and (605.32,193.28) .. (437.21,249.99) ;
		%Shape: Free Drawing [id:dp6588931194555216] 
		\draw  [line width=1.5] [line join = round][line cap = round] (177,165.5) .. controls (177,165.5) and (177,165.5) .. (177,165.5) ;
		%Shape: Parabola [id:dp3114501956484248] 
		\draw   (437.21,71.98) .. controls (262.63,136.26) and (263.08,199.25) .. (437.21,261) ;
		%Shape: Parabola [id:dp5215530991055961] 
		\draw   (437.21,261) .. controls (627.12,198.69) and (627.35,135.68) .. (437.21,71.98) ;
		
		% Text Node
		\draw (253,31) node [anchor=north west][inner sep=0.75pt]   [align=left] {$E=\rho^2$};
		% Text Node
		\draw (88,116) node [anchor=north west][inner sep=0.75pt]   [align=left] {$\gamma_n^-(k)$};
		% Text Node
		\draw (222,116) node [anchor=north west][inner sep=0.75pt]   [align=left] {$\gamma_n^+(k)$};
		% Text Node
		\draw (155,168) node [anchor=north west][inner sep=0.75pt]   [align=left] {$k+2n\pi$};
		% Text Node
		\draw (509,80) node [anchor=north west][inner sep=0.75pt]   [align=left] {$\Gamma_n^+(k)$};
		% Text Node
		\draw (455,107) node [anchor=north west][inner sep=0.75pt]   [align=left] {$\Gamma_n^-(k)$};
		\draw (10,167) node [anchor=north west][inner sep=0.75pt]   [align=left] {$0$};
		\draw (425,167) node [anchor=north west][inner sep=0.75pt]   [align=left] {$0$};
		\draw (450,20) node [anchor=north west][inner sep=0.75pt]   [align=left] {$E-plane$};
        \draw (35,20) node [anchor=north west][inner sep=0.75pt]   [align=left] {$\rho-plane$};
	\end{tikzpicture}
\end{center}
\begin{center}
	$Fig.1. \ Contours\ \Gamma_n^{\pm}(k)$
\end{center}
	We show that for any $E\in\Gamma_n^{\pm}(k)$ with 
	$n>1+\frac{A}{\frac{k(\pi-k)}{10^4}\sin k\sin\frac{99k}{100}\sin\frac{\pi+99k}{100}},$ 
	there holds
	\begin{align}
		\label{cosrho-cosk}	\abs{2\cos\rho-2\cos k}> \abs{\frac{\sin\rho}{\rho}\int_0^1V_0(t)dt+f(\rho)}.
	\end{align}
	
	Firstly, for any $\rho\in\gamma^{\pm}_n(k)$ with $\abs{\tau}\geq 3$,
	%$$\rho\in \{\rho:{\abs{\tau}}= k+2n\pi\pm\delta_n(k)\}\cup \{\rho:{\sigma}=k+2n\pi \pm\delta_n(k),\ \abs{\tau}\geq 3\},$$
	we have 
	\begin{align}
			\abs{2\cos\rho-2\cos k}e^{-\abs{\tau}}&=\abs{e^{i\sigma- \tau}\nonumber +e^{-i\sigma+\tau}-2\cos k}e^{-\abs{\tau}}\\
		&\geq 1-3e^{-\abs{\tau}}\nonumber\\
		&>\frac{1}{2}.\nonumber
	\end{align}
	Hence one has that for any $\rho\in\gamma^{\pm}_n(k)$ with $\abs{\tau}\geq 3$,
	\begin{align}
		\abs{2\cos\rho-2\cos k}>\frac{1}{2}e^{\abs{\tau}}.\label{cosrho-cosk1}
	\end{align}
	For any $\rho$ with $\sigma=k+2n\pi \pm\delta_n(k),\ \abs{\tau}<3$, one has
	\begin{align}
		\label{cosrho-cosk2}	\abs{2\cos\rho-2\cos k}e^{-\abs{\tau}}\geq \abs{2\cos\rho-2\cos k}e^{-3},
	\end{align}
	and it is not difficult to obtain that 
	\begin{align}
		\abs{2\cos\rho-2\cos k}&\geq \abs{2\cos(k\pm \delta_n(k))-2\cos k}\nonumber\\
		\label{cosrho-cosk3}	&>2\delta_n(k)\sin k\sin\frac{99k}{100}\sin\frac{\pi+99k}{100}.
	\end{align}
	The later inequality uses the fact that 
	\begin{align}
		\delta_n(k)<\frac{k(\pi-k)e^3}{10^4}<\min\left\{\frac{k}{100},\frac{\pi-k}{100}\right\}.\nonumber
	\end{align}
	Therefore, by \eqref{cosrho-cosk2} and \eqref{cosrho-cosk3}, for any $\rho\in\gamma^{\pm}_n(k)$ with $\abs{\tau}< 3$, there holds
	\begin{align}
		\abs{2\cos\rho-2\cos k}>2\delta_n(k)\sin k\sin\frac{99k}{100}\sin\frac{\pi+99k}{100}e^{-3}e^{\abs{\tau}}
		=\frac{2A}{n}e^{\abs{\tau}}>\frac{2A}{\abs{\rho}}e^{\abs{\tau}}.\label{middle107}
	\end{align}
	Since $n>1+\frac{A}{\frac{k(\pi-k)}{10^4}\sin k\sin\frac{99k}{100}\sin\frac{\pi+99k}{100}}>4A$, 
	one obtains for any $E\in\Gamma_n^{\pm}(k)$,
	\begin{align}
		\abs{f(\rho)}\leq \frac{4A^2e^{\abs{\tau}}}{\abs{\rho}^2}<\frac{Ae^{\abs{\tau}}}{\rho}.\nonumber
	\end{align}
	Thus, for any $E\in\Gamma_n^{\pm}(k)$, 
	\begin{align}
		\label{cosrho-cosk4}\abs{\frac{\sin\rho}{\rho}\int_0^1V_0(t)dt+f(\rho)}\leq \frac{2Ae^{\abs{\tau}}}{\abs{\rho}}.
	\end{align}
	Then by \eqref{cosrho-cosk1}, \eqref{middle107} and \eqref{cosrho-cosk4} one can obtain \eqref{cosrho-cosk}. By similar calculations, one can obtain that for any
		\begin{align}
		E\in\Gamma_n^{\pm}(-k)=\{E=\rho^2:\rho\in \gamma_n^{\pm}(-k)\},\nonumber
	\end{align}
with  $n>1+\frac{A}{\frac{k(\pi-k)}{10^4}\sin k\sin\frac{99k}{100}\sin\frac{\pi+99k}{100}},$	
we have 
\begin{align}
	\label{cosrho-cosk5}	\abs{2\cos\rho-2\cos k}> \abs{\frac{\sin\rho}{\rho}\int_0^1V_0(t)dt+f(\rho)},
\end{align}where
	\begin{align}
		\gamma_n^{\pm}(-k)=&\{\rho\in\mathbb{C}:{\sigma}=-k+2n\pi \pm\delta_n(k),\ {\abs{\tau}}\leq -k+2n\pi \pm\delta_n(k)\}\nonumber\\
		&\cup\{\rho\in\mathbb{C}: {\abs{\tau}}= -k+2n\pi \pm\delta_n(k),\ 0\leq {\sigma}\leq -k+2n\pi \pm\delta_n(k)\}.\nonumber
	\end{align} 
	Therefore, by \eqref{cosrho-cosk} and \eqref{cosrho-cosk5}, applying Rouché's theorem, we can obtain the result.
	
\end{proof}

\begin{lemma}\label{CxSxasymptotics}
	For any $k\in (0,\pi)$ and for $n>1+\frac{A}{\frac{k(\pi-k)}{10^4}\sin k\sin\frac{99k}{100}\sin\frac{\pi+99k}{100}}$, we have for any $x\in [0,1]$,
	\begin{align}
		\label{Cx} -1-\delta_n(k)\leq 	C(x,E^k_n)\leq 1+\delta_n(k),\\
		\label{Sx}-1-\delta_n(k)\leq \sqrt{E^k_n}S(x,E^k_n)\leq	1+\delta_n(k),
	\end{align}
	\begin{align}
		\label{Csquaredx}\cos^2\left(\sqrt{E^k_n}x\right)-\delta_n(k)\leq C^2(x,E^k_n)\leq \cos^2\left(\sqrt{E^k_n}x\right)+\delta_n(k),\\
		\label{Ssquaredx}{\sin^2\left(\sqrt{E^k_n}x\right)}-\delta_n(k) \leq E^k_nS^2(x,E^k_n)\leq {\sin^2\left(\sqrt{E^k_n}x\right)}+\delta_n(k),
	\end{align}
	and
	\begin{align}
		\label{C1}\cos k-2\delta_n(k)&\leq C(1,E^k_n)\leq \cos k+2\delta_n(k),\\
		\label{Csquared1}\cos^2 k-2\delta_n(k)&\leq	C^2(1,E^k_n)\leq \cos^2 k+2\delta_n(k)\\
		\label{S1}(-1)^{n+1}{\sin k}-2\delta_n(k)&\leq \sqrt{E^k_n}S(1,E^k_n)\leq (-1)^{n+1}{\sin k}+2\delta_n(k),\\
		\label{Ssquared1}{\sin^2 k-2\delta_n(k)} &\leq E^k_nS^2(1,E^k_n)\leq {\sin^2 k+2\delta_n(k)}.
	\end{align}
\end{lemma}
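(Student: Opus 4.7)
The plan is to combine the two ingredients already at hand: Lemma~\ref{lemmayurko}, which compares $C(x,E), S(x,E)$ to $\cos\rho x, \sin\rho x/\rho$ with an error of size $A/|\rho|$ (noting that for real eigenvalues $E = E^k_n > 0$ we have $\tau = 0$, so $e^{|\tau|x}=1$), and Lemma~\ref{lemmaeknakn}, which pins $\sqrt{E^k_n}$ to $a^k_n$ within $\delta_n(k)$. Throughout, the lower bound on $n$ guarantees that $\sqrt{E^k_n} \geq a^k_n - \delta_n(k)$ is large enough that every occurrence of $2A/\sqrt{E^k_n}$ (and even $4A^2/E^k_n$) can be absorbed into a fraction of $\delta_n(k)$, since $\delta_n(k) \geq A e^3 / n$ and $\sqrt{E^k_n}$ grows roughly like $n\pi$. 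I would verify this single inequality $2A/\sqrt{E^k_n} \leq \delta_n(k)$ once at the start and then reuse it.

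For the general-$x$ bounds \eqref{Cx} and \eqref{Sx}, I would simply write
\begin{align*}
\abs{C(x,E^k_n)} \leq \abs{\cos(\sqrt{E^k_n}\,x)} + \tfrac{2A}{\sqrt{E^k_n}} \leq 1 + \delta_n(k),
\end{align*}
and likewise $|\sqrt{E^k_n}\,S(x,E^k_n)| \leq |\sin(\sqrt{E^k_n}\,x)| + 2A/\sqrt{E^k_n} \leq 1+\delta_n(k)$. For the squared versions \eqref{Csquaredx} and \eqref{Ssquaredx}, I would use $a^2 - b^2 = (a-b)(a+b)$ with $a = C(x,E^k_n)$, $b = \cos(\sqrt{E^k_n}\,x)$: the factor $|a-b|$ is $\leq 2A/\sqrt{E^k_n}$ by Lemma~\ref{lemmayurko}, while $|a+b| \leq 2 + 2A/\sqrt{E^k_n}$ by the previous step, producing a total error controlled by $\delta_n(k)$; the analogous argument handles $E^k_n S^2$.

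For the $x=1$ statements \eqref{C1}--\eqref{Ssquared1}, the extra step is to replace $\sqrt{E^k_n}$ by $a^k_n$ inside the trigonometric functions. Writing $\rho = \sqrt{E^k_n}$, the Lipschitz bound
\begin{align*}
\abs{\cos\rho - \cos a^k_n} \leq \abs{\rho - a^k_n} \leq \delta_n(k), \qquad \abs{\sin\rho - \sin a^k_n} \leq \delta_n(k),
\end{align*}
combined with $\cos a^k_n = \cos k$ and $\sin a^k_n = (-1)^{n+1}\sin k$ (by the parity-based definition of $a^k_n$), gives the desired centering. Adding the Volterra error $2A/\sqrt{E^k_n} \leq \delta_n(k)$ turns the single $\delta_n(k)$ into the factor of $2$ that appears in \eqref{C1} and \eqref{S1}. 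The squared statements \eqref{Csquared1}, \eqref{Ssquared1} follow again from $a^2 - b^2 = (a-b)(a+b)$ once one knows $a$ and $b$ are each within $2\delta_n(k)$ of $\cos k$ or $(-1)^{n+1}\sin k$ and both are bounded by $1+\delta_n(k)$ in absolute value.

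The only delicate point, and the main bookkeeping obstacle, is making sure the constants line up: the claimed bounds are exactly $\delta_n(k)$ in \eqref{Cx}--\eqref{Ssquaredx} and exactly $2\delta_n(k)$ in \eqref{C1}--\eqref{Ssquared1}, with no slack. One must therefore leverage the hypothesis $n > 1 + A/\bigl(\tfrac{k(\pi-k)}{10^4}\sin k \sin\tfrac{99k}{100}\sin\tfrac{\pi+99k}{100}\bigr)$ to guarantee, cleanly and uniformly in $k$, that both $2A/\sqrt{E^k_n} \leq \delta_n(k)$ and the quadratic cross term $4A^2/E^k_n$ stay safely below $\delta_n(k)$; once this single quantitative step is done, the rest is routine triangle-inequality and trigonometric manipulation.
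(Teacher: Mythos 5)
Your plan matches the paper's proof in strategy and in most details: use Lemma~\ref{lemmayurko} with $\tau=0$, use Lemma~\ref{lemmaeknakn} to pin $\sqrt{E^k_n}$ near $a^k_n$, absorb the $2A/\sqrt{E^k_n}$ errors into $\delta_n(k)$, use the difference-of-squares factorization for the squared bounds, and use the parity identities $\cos a^k_n = \cos k$, $\sin a^k_n = (-1)^{n+1}\sin k$ together with a Lipschitz/mean-value estimate to pass from $\sqrt{E^k_n}$ to $a^k_n$ at $x=1$. Two points of bookkeeping do not quite line up, and one of them is a genuine gap.

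First, the ``single inequality'' you plan to verify, $2A/\sqrt{E^k_n}\leq\delta_n(k)$, is not strong enough: for \eqref{Csquaredx} the error is $\frac{4A^2}{E^k_n}+\frac{4A}{\sqrt{E^k_n}}$, which (using $A<n<\sqrt{E^k_n}$) is bounded by $\frac{8A}{\sqrt{E^k_n}}$. The paper therefore establishes the stronger inequality $\frac{8A}{\sqrt{E^k_n}}<\delta_n(k)$ once and reuses it. You flag this in your last paragraph, so it is a presentational issue rather than an error, but it is not ``a single inequality.''

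Second, and this is the real gap: for \eqref{Csquared1} and \eqref{Ssquared1} you propose to apply $a^2-b^2=(a-b)(a+b)$ with $a=C(1,E^k_n)$ and $b=\cos k$, using $|a-b|\leq 2\delta_n(k)$ from \eqref{C1} and $|a+b|\leq 2+\delta_n(k)$. That yields $|C^2(1,E^k_n)-\cos^2 k|\leq 2\delta_n(k)\cdot(2+\delta_n(k))\approx 4\delta_n(k)$, which overshoots the claimed bound $2\delta_n(k)$ by a factor of~$2$. The paper avoids this by \emph{not} squaring after the fact: it specializes the already-proved \eqref{Csquaredx} to $x=1$, obtaining $|C^2(1,E^k_n)-\cos^2\sqrt{E^k_n}|\leq\delta_n(k)$, and then applies the mean value theorem to $t\mapsto\cos^2 t$ (whose derivative is $-\sin(2t)$, bounded by~$1$ in absolute value) to get $|\cos^2\sqrt{E^k_n}-\cos^2 k|\leq\delta_n(k)$; the two contributions sum to exactly $2\delta_n(k)$. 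In short: take the difference of squares against $\cos\sqrt{E^k_n}$, not against $\cos k$, and only then swap in $\cos k$ via the Lipschitz bound on $\cos^2$. With that correction the rest of your outline goes through.
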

\begin{proof}
	We only give the proofs of \eqref{Cx},\eqref{Csquaredx}, \eqref{C1} and \eqref{Csquared1}, the proofs of \eqref{Sx}, \eqref{Ssquaredx}, \eqref{S1} and \eqref{Ssquared1} are similar and omitted. 
	
	Firstly, by \eqref{eknakndeltank}, for any $n>1+\frac{A}{\frac{k(\pi-k)}{10^4}\sin k\sin\frac{99k}{100}\sin\frac{\pi+99k}{100}}$, one has that 
	\begin{align}
		\sqrt{E^k_n}>(n-1)\pi.\nonumber
	\end{align}
	Hence, one obtains
	\begin{align}
		\delta_n(k)>\frac{8A}{\sqrt{E^k_n}}.\label{deltankgeq8asqrtekn}
	\end{align}
	
	By \eqref{Ccosrhox}, one has 
	\begin{align}
		\cos\left(\sqrt{E^k_n}x\right)-\frac{2A}{\sqrt{E^k_n}}\leq C(x,E^k_n)\leq \cos \left(\sqrt{E^k_n}x\right)+\frac{2A}{\sqrt{E^k_n}},\label{Cxleqgeq}
	\end{align}
	hence, one can obtain \eqref{Cx} by \eqref{deltankgeq8asqrtekn}.
	
	By \eqref{Cxleqgeq} one can obtain 
	\begin{align}
		\abs{C^2(x,E^k_n)-\cos ^2\left(\sqrt{E^k_n}x\right)}=&\abs{C(x,E^k_n)-\cos \left(\sqrt{E^k_n}x\right)}\abs{C(x,E^k_n)+\cos \left(\sqrt{E^k_n}x\right)}\nonumber\\
		\leq & \frac{2A}{\sqrt{E^k_n}}\left(2+\frac{2A}{\sqrt{E^k_n}}\right)\nonumber\\
		=&\frac{4A^2}{E^k_n}+\frac{4A}{\sqrt{E^k_n}},\nonumber
	\end{align}
%	By \eqref{Cxleqgeq}, one obtains 
%	\begin{align}
%		C^2(x,E^k_n)\leq \cos ^2\sqrt{E^k_n}x+\frac{4A^2}{E^k_n}+\frac{4A}{\sqrt{E^k_n}},\nonumber
%	\end{align}
%	and 
%	\begin{align}
%		C^2(x,E^k_n)\geq \cos ^2\sqrt{E^k_n}x+\frac{4A^2}{E^k_n}-\frac{4A}{\sqrt{E^k_n}},\nonumber
%	\end{align}
	then by $\sqrt{E^k_n}>n>A$ and \eqref{deltankgeq8asqrtekn}, we can obtain \eqref{Csquaredx}. 
	
	By \eqref{Cxleqgeq}, one has
	\begin{align}
		\cos\sqrt{E^k_n}-\frac{2A}{\sqrt{E^k_n}}\leq C(1,E^k_n)\leq \cos \sqrt{E^k_n}+\frac{2A}{\sqrt{E^k_n}}.\nonumber
	\end{align}
	Applying \eqref{eknakndeltank} and mean value theorem, we have
	\begin{align}
		\abs{\cos \sqrt{E^k_n}-\cos k}\leq \delta_n(k).\nonumber
	\end{align}
	Then one can obtain \eqref{C1} by \eqref{deltankgeq8asqrtekn}. 
	
	By \eqref{Csquaredx}, one has
	\begin{align}
		\cos^2\sqrt{E^k_n}-\delta_n(k)\leq C^2(1,E^k_n)\leq \cos^2\sqrt{E^k_n}+\delta_n(k).\nonumber
	\end{align}
	By \eqref{eknakndeltank} and mean value theorem, we can obtain
	\begin{align}
		\abs{\cos ^2\sqrt{E^k_n}-\cos^2 k}\leq \delta_n(k).\nonumber
	\end{align}
	Then one obtains \eqref{Csquared1}.
\end{proof}
Now we state our main results.
Denote by $I_1^\circ=(E^0_1,E^\pi_1)$, $I_2^\circ=(E^\pi_2,E^0_2)$, $I_3^\circ=(E^0_3,E^\pi_3),$ $I_4^\circ=(E^\pi_4,E^0_4),\cdots$, and $\sigma_{ess}^\circ=\cup_{j=1}^\infty  I^\circ_j$.
For any $E_j=E^{k(E_j)}_{n(E_j)}\in I^\circ_{n(E_j)}$ (the $n(E_j)$-th eigenvalue of $\tilde{H}_{k(E_j)}$), for the simplification of notations, we denote by $k_j=k(E_j)$, $n_j=n(E_j)$ if there is no confusion. Then $E_j=E^{k_j}_{n_j}$. We require that in the following, for any $k,k_j\in(0,\pi)$, $\tilde{k}=\pi-k,\tilde{k}_j=\pi-k_j$.
For any $k\in (0,\pi)$, define
\begin{align}
	L(k)&={1+\frac{A}{\frac{k(\pi-k)}{10^4}\sin^3 k\sin\frac{99k}{100}\sin\frac{\pi+99k}{100}}},\label{deLk}\\
	{\delta}(k)&=\frac{40\pi n\delta_n(k)}{\sin^2k}=\frac{40\pi Ae^3}{\sin^3 k\sin\frac{99k}{100}\sin\frac{\pi+99k}{100}},\label{dedeltak}
\end{align}
we have $L(k)=L(\tilde{k})$, ${\delta}(k)={\delta}(\tilde{k}).$

 Given any at most countable set of distinct reals
\begin{align}
	S:=\{E_j\}_{j=1}^K=\left\{E^{k_j}_{n_j}\right\}_{j=1}^{{K}}\subset\sigma_{ess}^\circ,\nonumber
\end{align}
where $1\leq K\leq \infty$. 
%Let us assume that the elements in $S$ satisfies:
%\noindent\textbf{Condition A.} If for some $E_i\neq E_j$ one has $k_i=\tilde{k}_j$ and $n_i=n_j$, then for any $l\neq i,j$, one has  $k_l\neq k_i,k_j$. Namely, there is no other eigenvalues could create resonances with $E_i,E_j$.
%In this case, we say that $E_i\in S$ satisfies \textbf{Condition A}, or equivalently, $E_j\in S$ satisfies \textbf{Condition A}. We mention that if $k_j=\frac{\pi}{2}$ for some $E_j\in S$, then $E_j$ does not satisfy \textbf{Condition A} since if there exists $E_i\in S$ with $k_i=\tilde{k}_j$ and $n_i=n_j$ we directly obtain $E_i=E_j$.
%\noindent\textbf{Condition B.} If for some $E_j\in S$, there are eigenvalues $E_i$ with $n_i\neq n_j$ such that $k_i=k_j$ or $k_i=\tilde{k}_j$.
% Collect all such eigenvalues as a set $\{E_i\}_{i\in I}=\{E_i\in S: k_i=k_j\ \mathrm{or}\ k_i=\tilde{k}_j\}$ (of course $E_j\in\{E_i\}_{i\in I}$). Then one has for any $i\in I$,
%\begin{align}
%	n_i>L(k_j),\label{conditionbLki}
%\end{align}
%and 
%\begin{align}
%\frac{1}{2n_i}+\sum_{l\neq i,l\in I}\frac{n_l}{n_i\abs{n_i-n_l}}<\frac{1}{50\pi+50\pi{\delta}(k_j)}.\label{conditionb30}
%\end{align}
%In this case, we say that  $E_j\in S$ satisfies \textbf{Condition B.}
Let us assume that the elements in $S$ satisfies the assumption:

\noindent\textbf{A1} Assume that $S$ can be divided into $3$ disjoint subsets $S=S_1\cup S_2\cup S_3$. For any $E_j\in S_1$, one has for any $E_i\in S$ with $i\neq j$ that $k_i\neq k_j,\tilde{k}_j$. Namely, $S_1$ consists of the non-resonant eigenvalues in $S$. For any $E_j\in S_2$, one has $k_j\neq \frac{\pi}{2}$, and there exists $E_i\in S$ such that $k_i=\tilde{k}_j$ and $n_i=n_j$, and for any $l\neq i,j$, there holds $k_l\neq k_i,k_j$. In other words, $S_2$ consists of the resonant eigenvalues (only) from the same band. For any $E_j\in S_3$, one has $n_j>L(k_j)$, and denote $\{E_i\}_{i\in I}=\{E_i\in S:k_i=k_j\mathrm{\ or}\ k_i=\tilde{k}_j\}$ (of course $j\in I$), then $I$ has cardinality $\# I\geq 2$ and for any $i\in I$ that
\begin{align}
\frac{1}{2n_i}+\sum_{l\neq i,l\in I}\frac{n_l}{n_i\abs{n_i-n_l}}<\frac{1}{50\pi+50\pi{\delta}(k_j)}.\label{conditionb30}
\end{align}

\begin{theorem}\label{theoremmain124}
	Given any set of distinct reals
	$$S=\{E_j\}_{j=1}^K\subset \sigma_{ess}^\circ$$
	 with $K<\infty$, suppose that $S$ satisfies the assumption \textbf{A1}.
	Then for any $\{\xi_j\}_{j=1}^K\subset [0,\pi]$, there exist potentials with
\begin{align}
	V(x)=\frac{O(1)}{1+x},\nonumber
\end{align}
as $x\to\infty$ such that $Hu=E_ju$ has an $L^2(0,\infty)$ solution $u_j(x)$ with the boundary condition
\begin{align}
	\frac{u'_j(0)}{u_j(0)}=\tan \xi_j,\ j=1,2,\cdots,K.\nonumber
\end{align}
\end{theorem}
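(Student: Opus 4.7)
The plan is to translate the eigenvalue equation $Hu=E_j u$ for each $E_j\in S$ into a pair of ODEs for a modified Prüfer amplitude $R_j(x)$ and phase $\theta_j(x)$ built from the Floquet solutions $\varphi(x,E_j)$ (the machinery promised in Section 3). In these variables one gets a formula of the shape $(\ln R_j^2)'(x)=V(x)\,F_j(x,2\theta_j(x))$, where $F_j$ is $1$-periodic in $x$ and $2\pi$-periodic in $2\theta_j$, and where $\theta_j'=\sqrt{E_j}+O(V)$. The Prüfer phase $\theta_j(0)$ is a free parameter that encodes $u'_j(0)/u_j(0)$, so the boundary conditions $u'_j(0)/u_j(0)=\tan\xi_j$ for $j=1,\dots,K$ are imposed by choosing the initial Prüfer phases; the entire task is then to produce $V$ so that $R_j(x)$ decays at a rate faster than $x^{-1/2}$ for every $j$, while $|V(x)|=O(1)/(1+x)$.

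Following the strategy that goes back to Kiselev \cite{kiselevsingularJAMS} and was developed in \cite{JL19,LO17,Liuresonant}, I would look for $V$ of the form
\begin{align*}
V(x)=\sum_{j=1}^{K}\frac{c_j(x)}{1+x}\,G_j\!\bigl(x,\theta_j(x)\bigr),
\end{align*}
where each piece $G_j$ is tuned so that, inserted back into the amplitude ODE for the \emph{same} $E_j$, it produces the resonant (non-oscillatory) contribution that drives $R_j$ to decay like $(1+x)^{-\alpha_j}$ with $\alpha_j>1/2$, whereas its effect on $R_i$ for $i\neq j$ is oscillatory and of bounded total mass. Because $V$ is defined in terms of the $\theta_j$'s, which themselves solve ODEs involving $V$, the definition is self-referential; I would make it rigorous by constructing $V$ inductively on an increasing sequence of intervals $[0,X_m]$, noting that $V(x)$ on $[0,X_m]$ depends only on the Prüfer phases on $[0,X_m]$, and closing the loop by a contraction/Picard step on each piece (or, equivalently, by noting that the ansatz above is a fixed-point equation in a Banach space of continuous functions with weight $1+x$).

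The three subfamilies in \textbf{A1} are handled by three different choices of $G_j$. For $E_j\in S_1$ the bracket $2\theta_j-2\theta_i$ has a derivative uniformly bounded away from $2\pi\mathbb Z$, so the oscillatory-integral bounds of Section~4 immediately give boundedness of $R_i$. For $E_j\in S_2$ the resonance comes from a \emph{single} partner $E_i$ in the same band with $n_i=n_j$ and $k_i=\tilde{k}_j$; instead of constructing $V$ piecewise to handle $E_j$ and $E_i$ in alternating stages (as in \cite{JL19,LO17}), I would pick a single $G_j$ whose Fourier expansion in $2\theta_j$ has nontrivial components at the two resonant harmonics simultaneously, forcing $R_j$ and $R_i$ to decay continuously in parallel (the condition $k_j\neq \pi/2$ is what keeps the two harmonics distinct, so they can both be targeted without collision). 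For $E_j\in S_3$ the resonance involves an entire cluster $\{E_i\}_{i\in I}$ spread across different bands with $k_i\in\{k_j,\tilde{k}_j\}$, and one uses the resonant integral estimates that will be proved in Section~5: the weighted sum in \eqref{conditionb30} is precisely the quantity that bounds the cross-contributions $\int V\,F_l\,dx$, so the smallness required in \eqref{conditionb30} translates directly into the required smallness of the cross terms.

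The main obstacle, I expect, will be case $S_3$: checking that the quantitative oscillatory-integral bounds from Sections~4--5 combine with \eqref{conditionb30} exactly as needed so that, within a cluster $I$, each channel $l\in I$ picks up only a decaying amount of the resonant energy intended for $E_j$. The estimates on $C,S$ from Lemma~\ref{CxSxasymptotics} and the spacing given by Lemma~\ref{lemmaeknakn} enter here: they quantify how close $2\theta_l-2\theta_j$ stays to its asymptotic slope $2(\sqrt{E_l}-\sqrt{E_j})$ and thereby control both the magnitude and the phase of the parasitic resonant integrals. Once these cluster estimates are in hand, finiteness of $K$ makes the remaining bookkeeping routine: summing finitely many $O(1)/(1+x)$ contributions yields $V(x)=O(1)/(1+x)$, the designed decay of every $R_j$ gives $u_j\in L^2(0,\infty)$, and the choice of initial Prüfer phases gives the prescribed boundary conditions.
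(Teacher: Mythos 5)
Your proposal follows essentially the same route as the paper: a modified Prüfer transformation giving $(\ln R)'=\frac{V}{2\eta'}\sin 2\theta$, a self-referential Wigner--von Neumann-type ansatz $V(x)=\sum_j(-1)^{n_j}C_j\sin 2\theta(x,E_j)\chi_{[T_j,\infty)}(x)/(1+x)$ closed by solving the coupled ODE system \eqref{Thetaprime}, initial Prüfer phases encoding the boundary data, and the three classes of \textbf{A1} handled exactly as you describe---non-resonant cross terms killed by the Section~4 oscillatory-integral bounds, resonant clusters by the Section~5 estimates, with \eqref{conditionb30} bounding the parasitic contributions within an $S_3$ cluster. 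The one place your sketch departs from the paper is cosmetic: for $E_j\in S_2$ (and for $k_j=\pi/2$) you propose engineering a multi-harmonic $G_j$, whereas the paper keeps $G_j=\sin 2\theta_j$ for every $j$, enlarges the coefficient $C_j$ to $400A_j/\varepsilon_j$, and invokes Lemma~\ref{lemmavarepsilon} to show the resonant cross term $\sin 2\theta_i\sin 2\theta_j$ has a favorable sign, so the same single harmonic suffices in all cases.
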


\begin{theorem}\label{theoremmain224}
	Given any countable set of distinct reals 
	$$S=\{E_j\}_{j=1}^\infty\subset \sigma_{ess}^\circ,$$ suppose that $S$ satisfies the assumption \textbf{A1}.
Let $h(x)>0$ be any function on $(0,\infty)$ with $\lim_{x\to\infty}h(x)=\infty$. Then for any $\{\xi_j\}_{j=1}^\infty\subset [0,\pi]$, there exist potentials with $V(x)=o(1)$ as $x\to\infty$ and
	\begin{align}
	\abs{V(x)}\leq \frac{h(x)}{1+x},\ \ for\ any\ x>0,\nonumber
	\end{align}
such that $Hu=E_ju$ has an $L^2(0,\infty)$ solution $u_j(x)$ with the boundary condition
	\begin{align}
		\frac{u'_j(0)}{u_j(0)}=\tan \xi_j,\ j=1,2,\cdots.\nonumber
	\end{align}
\end{theorem}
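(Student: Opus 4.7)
The plan is to adapt the interval-by-interval construction that proves Theorem \ref{theoremmain124} to a growing sequence of intervals, exploiting the latitude $h(x)\to\infty$ to accommodate both the accumulating eigenvalues and their mutual resonances. The starting point, supplied by the generalized Prüfer transformation of Section 3, is that $Hu=E_ju$ transforms into a first-order system for a radius $R_j$ and a Prüfer angle $\theta_j$; in particular, $(\log R_j(x))'$ depends linearly on $V(x)$ multiplied by a function oscillating in $\theta_j$ at the two characteristic frequencies $k_j$ and $\tilde k_j$, modulated by the periodic Floquet data whose structure is controlled by Lemma \ref{CxSxasymptotics}.

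Next, I would partition $(0,\infty)$ into intervals $[x_j,x_{j+1})$ with $x_j\to\infty$ so fast that on $[x_j,x_{j+1})$ we simultaneously have $h(x)\geq H_j$ and $H_j/x_j\to 0$, where $H_j\to\infty$ is a sequence of ``budgets'' to be fixed. On $[x_j,x_{j+1})$ I would define
\[
V(x)=\sum_{l=1}^{j}\frac{a_{j,l}(x)\,g_l(x,\theta_l(x))}{1+x},
\]
where $g_l$ is the resonant Fourier mode extracted from $(\log R_l)'$ and the coefficient functions $a_{j,l}$ are chosen so that each $R_l$, $l\leq j$, decays faster than $x^{-1/2}$ throughout $[x_j,x_{j+1})$. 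The uniform bound $\sum_{l\leq j}\|a_{j,l}\|_\infty\leq H_j$ then gives $|V(x)|\leq h(x)/(1+x)$, and $H_j/x_j\to 0$ gives $V(x)=o(1)$.

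The key mechanism controlling the $a_{j,l}$ is the oscillatory integral bound of Section 4 together with the resonant integral estimates of Section 5. Together they show that the mode indexed by $l$ contributes only an integrable tail to $(\log R_m)'$ whenever $k_m\notin\{k_l,\tilde k_l\}$, and that the surviving resonant contributions are exactly those permitted by assumption \textbf{A1}. For eigenvalues in $S_1$ nothing survives; for each pair in $S_2$ the unique same-band partner contribution can be balanced by tuning a single coefficient; for a cluster in $S_3$ the system of surviving contributions is finite, and by the explicit bound \eqref{conditionb30} the corresponding linear system for $\{a_{j,l}\}_{l\in I}$ is strictly diagonally dominant with diagonal norm controlled via $L(k_j)$ and ${\delta}(k_j)$, so it is solvable with $\|a_{j,l}\|_\infty$ bounded in terms only of data attached to the cluster.

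The main obstacle is gluing across interval boundaries while preserving the decay arranged on earlier stages. On a short transition sub-window in $[x_j,x_{j+1})$ I would deform each coefficient smoothly from its stage-$(j{-}1)$ value to its stage-$j$ value; the additional non-resonant error is integrable by the Section 4 estimate, the resonant error is absorbed into the already existing decay margin of $R_l$ for $l<j$, and the total perturbation of the previously achieved decay rates can be made arbitrarily small by choosing the transition window short compared with $x_j$ and by building a summable safety factor into the targeted decay rates $x^{-1/2-\varepsilon_l}$ at each stage. The boundary conditions $u_j'(0)/u_j(0)=\tan\xi_j$ are encoded from the outset by fixing the initial Prüfer phases $\theta_j(0)$, and $L^2(0,\infty)$ membership then follows from $R_j(x)=O(x^{-1/2-\varepsilon_j})$ with $\varepsilon_j>0$.
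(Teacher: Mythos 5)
Your proposal follows the right heuristics (Prüfer transformation, oscillatory-integral control of non-resonant cross-terms, using \eqref{conditionb30} to dominate resonant cross-terms), but the construction you sketch is precisely the piecewise/gluing scheme that the paper explicitly sets out to \emph{replace}, and the two places where you wave your hands are exactly where that scheme is fragile for resonant eigenvalues.

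First, you build $V$ interval-by-interval with stage-dependent coefficients $a_{j,l}(x)$ and then ``deform each coefficient smoothly'' at the junctions, invoking a ``summable safety factor'' to absorb the resulting losses. The paper instead solves the single coupled ODE system \eqref{Thetaprime} on all of $(0,\infty)$, obtaining globally-defined $\Theta_j(x)=\theta(x,E_j)$, and sets $V$ as in \eqref{definitionofpotentials} with \emph{fixed} constants $C_j$ and a one-sided switch $\chi_{[T_j,\infty)}$. Each $\log R(\cdot,E_m)$ then decays \emph{continuously} from $T_m$ on, which matters because in the $S_3$ case the resonant cross-terms only satisfy a $\ln x$ bound (Lemma \ref{lemmafinal5.6}), and the self-term also contributes $\propto\ln x$; the whole proof reduces to a direct sign comparison $-100\ln x + 60\ln x$ made tight by \eqref{conditionb30}. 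A gluing step that disturbs the accrued $\log R$ by an amount proportional to the length of the transition window risks eating exactly into that $40\ln x$ margin; your ``short compared with $x_j$'' heuristic needs to be quantified against the resonant $\ln x$ growth, and nothing in your sketch does that. Framing the $S_3$ balance as a ``strictly diagonally dominant linear system for $\{a_{j,l}\}$'' is an unnecessary detour: the paper uses fixed $C_j = 400 A_j$ (resp. $400A_j/\varepsilon_j$), so there is no linear system to solve — diagonal dominance enters only as the scalar inequality \eqref{conditionb30}.

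Second, and more fundamentally, you never address the circularity in defining $V$: each term of $V$ is $g_l(x,\theta_l(x))$, but $\theta_l$ is the Prüfer angle of the solution of $Hu=E_lu$ with \emph{this very} $V$. Writing ``I would define $V(x)=\sum a_{j,l}g_l(x,\theta_l(x))/(1+x)$'' presupposes the $\theta_l$ already exist. The paper resolves this by first solving the autonomous coupled system \eqref{Thetaprime} (Kiselev's device), after which $V$ in \eqref{definitionofperturbation} is well-defined and a posteriori equals \eqref{definitionofpotentials}. This is not a cosmetic step; without it the construction isn't defined. You also aim only for $R_j(x)=O(x^{-1/2-\varepsilon_j})$, which is right at the borderline of $L^2$, whereas the paper's margin yields $R_j(x)=O(x^{-40})$, and that surplus is what allows the $O(1)$ accumulation of the non-resonant errors $P_3(j)$ (made summable by the choice \eqref{detj} of $T_j$) to be genuinely harmless. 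With only a $x^{-1/2-\varepsilon_j}$ target, the $O(1)$ errors you do not quantify could ruin integrability.
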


\section{modified pr\"ufer transformation}
In this section, we first recall the generalized
Pr\"ufer transformation of Schr\"odinger equation $Hu=Eu$, which is from \cite{kiselevperiodicmethod}, then we give some properties for the Pr\"ufer variables.

For any $k\in (0,\pi)$, denote the corresponding Floquet solution to the $n$-th eigenvalue $E=E^k_n\in I^\circ_n$ by $\varphi(x,E)$.  It is well known that (\cite{brown2013periodicbook}) 
\begin{align}
	\varphi(x,E)=p(x,E)e^{ikx},\label{varphixpx}
\end{align}
where $p(x,E)$ is a $1$-periodic function in $x$.

Define the Wronskian of $f$ and $g$ by
\begin{align}
	W(f,g)(x)=f(x)g'(x)-f'(x)g(x).\nonumber
\end{align}
Then $W(\bar{\varphi}(x,E),\varphi(x,E))=2i{\rm{Im}}(\bar{\varphi}(x,E)\varphi'(x,E))\neq 0$ is a constant. 
For $E=E^k_n$, denote by
\begin{align}
	\omega^k_n=2{\rm{Im}}(\bar{\varphi}(x,E)\varphi'(x,E)).\label{definitionofomegakn}
\end{align}
Define $\eta(x,E)$ as a continuous function by
\begin{align}
	\varphi(x,E)=\abs{\varphi(x,E)}e^{i\eta(x,E)},\label{varphieta}
\end{align}
and $\eta(0,E)\in (-\pi,\pi]$. Clearly, by \eqref{varphixpx}, one has for some $\gamma(x,E)$,
\begin{align}
	\eta(x,E)=kx+\gamma(x,E),\label{etakegamma}
\end{align}
where $\gamma(x,E) \mod2\pi$ is a function that is $1$-periodic in $x$.
Let $u(x,E)$ be a real solution of \eqref{perturbedperiodicschrodinger}. Define $\rho(x,E)$, $R(x,E)$ and $\theta(x,E)$ by 
\begin{align}
	\begin{pmatrix}
		u(x,E)\\
		u'(x,E)
	\end{pmatrix}={\rm{Im}}\left[\rho(x,E)\begin{pmatrix}
		\varphi(x,E)\\
		\varphi'(x,E)
	\end{pmatrix}
	\right],\nonumber
\end{align}
and 
\begin{align}
	R(x,E)&=\abs{\rho(x,E)},\nonumber\\
	\theta(x,E)&={\rm{Arg}}(\rho(x,E))+\eta(x,E).\nonumber
\end{align}
Then one has the following proposition.
\begin{proposition}\cite[Prop. 2.1(b), 2.2 and Thm. 2.3]{kiselevperiodicmethod}
	For any $k=k(E)\in (0,\pi)$ and $E=E^k_n\in I^\circ_n$, there exists $C=C(E)>0$, such that
	\begin{align}
		\frac{1}{C}\left(u(x,E)^2+u'(x,E)^2\right)\leq R(x,E)^2\leq C\left(u(x,E)^2+u'(x,E)^2\right),\label{usquaredrsquared}
	\end{align}
	and
	\begin{align}\label{etaprime}
		\eta'(x,E)=\frac{\omega^k_n}{2\abs{\varphi(x,E)}^2},
	\end{align}
	\begin{align}\label{pruferR24}
		(\ln R(x,E))'=\frac{V(x)}{2\eta'(x,E)}\sin2\theta(x,E),
	\end{align}
	\begin{align}\label{prufertheta24}
		\theta'(x,E)=\eta'(x,E)-\frac{V(x)}{\eta'(x,E)}\sin^2\theta(x,E).
	\end{align}
\end{proposition}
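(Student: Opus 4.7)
The plan is to establish each of the four identities by direct calculation from the polar decompositions underlying the definitions.

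For \eqref{usquaredrsquared}, I would first exploit that on $I^\circ_n$ the quasimomentum $k=k(E)\in(0,\pi)$ is real and nondegenerate, so by \eqref{varphixpx} the Floquet amplitude $\abs{\varphi(x,E)}=\abs{p(x,E)}$ is a $1$-periodic function of $x$. A zero of $\abs{\varphi}$ at some $x_0$ would force $\mathrm{Im}(\bar{\varphi}(x_0)\varphi'(x_0))=0$, contradicting $\omega^k_n\neq 0$; hence $\abs{\varphi}$ and $\abs{\varphi'}$ are bounded above and below on $[0,\infty)$ by positive constants depending only on $E$. Writing $\rho=a+ib$, $\varphi=p+iq$, a direct computation shows that the real-linear map $\rho\mapsto\bigl(\mathrm{Im}(\rho\varphi),\mathrm{Im}(\rho\varphi')\bigr)=(u,u')$ has Jacobian determinant $qp'-pq'=-\mathrm{Im}(\bar{\varphi}\varphi')=-\omega^k_n/2\neq 0$. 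It is therefore a bounded isomorphism from $\mathbb{C}\simeq\mathbb{R}^2$ onto $\mathbb{R}^2$ with bounded inverse, giving the two-sided comparison between $R^2=\abs{\rho}^2$ and $u^2+(u')^2$.

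For \eqref{etaprime}, I would take the logarithmic derivative of $\varphi=\abs{\varphi}e^{i\eta}$ to obtain $\varphi'/\varphi=(\ln\abs{\varphi})'+i\eta'$; multiplying through by $\abs{\varphi}^2$ and extracting the imaginary part yields $\eta'\abs{\varphi}^2=\mathrm{Im}(\bar{\varphi}\varphi')=\omega^k_n/2$, which is exactly \eqref{etaprime}.

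For the evolution formulas \eqref{pruferR24} and \eqref{prufertheta24}, I would use that $\rho\varphi=R\abs{\varphi}e^{i\theta}$, whence $u=R\abs{\varphi}\sin\theta$. Differentiating $u=\mathrm{Im}(\rho\varphi)$ and comparing with the defining relation $u'=\mathrm{Im}(\rho\varphi')$ produces the constraint $\mathrm{Im}(\rho'\varphi)=0$. A second differentiation, combined with $u''=(V_0+V-E)u$ for the perturbed equation and $-\varphi''+V_0\varphi=E\varphi$ for $\varphi$, reduces to $\mathrm{Im}(\rho'\varphi')=Vu$. These two linear conditions determine $\rho'$ uniquely: inverting via the Wronskian identity $\bar{\varphi}\varphi'-\varphi\bar{\varphi}'=i\omega^k_n$ gives $\rho'=2Vu\bar{\varphi}/\omega^k_n$. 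Writing $\rho'=(R'+iR(\theta'-\eta'))e^{i(\theta-\eta)}$, substituting $u=R\abs{\varphi}\sin\theta$, and replacing $\abs{\varphi}^2/\omega^k_n$ by $1/(2\eta')$ via \eqref{etaprime}, the real and imaginary parts of the resulting identity yield \eqref{pruferR24} and \eqref{prufertheta24} respectively.

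The argument is entirely mechanical once the correct polar form is fixed; there is no genuine analytic obstacle, only a bookkeeping issue in tracking the signs arising from the convention that $u$ is the imaginary (rather than real) part of $\rho\varphi$, which determines the sign of $\omega^k_n$ throughout. This is consistent with the proposition being a direct adaptation of the classical Pr\"ufer transformation to the periodic background, as in \cite{kiselevperiodicmethod}.
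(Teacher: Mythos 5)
Your derivation is correct and self-contained, but the paper does not actually prove this proposition; it is imported verbatim as a citation to \cite{kiselevperiodicmethod} (Prop.\ 2.1(b), 2.2 and Thm.\ 2.3 there), so there is no in-paper argument to compare against. Your account therefore fills a gap rather than duplicates one. The essential steps all check out: \eqref{etaprime} follows from $\mathrm{Im}(\bar{\varphi}\varphi')=\eta'\abs{\varphi}^2$; the two-sided comparison \eqref{usquaredrsquared} follows because the real-linear map $\rho\mapsto(\mathrm{Im}(\rho\varphi),\mathrm{Im}(\rho\varphi'))$ has bounded (periodic) entries and constant nonzero determinant $-\omega^k_n/2$; and the compatibility conditions $\mathrm{Im}(\rho'\varphi)=0$, $\mathrm{Im}(\rho'\varphi')=Vu$ (from differentiating $u=\mathrm{Im}(\rho\varphi)$ once and twice and using the two ODEs) combined with $\bar{\varphi}\varphi'-\varphi\bar{\varphi}'=i\omega^k_n$ give $\rho'=2Vu\bar{\varphi}/\omega^k_n$, from which the real and imaginary parts of $(R'+iR(\theta'-\eta'))e^{-i\theta}=\frac{V\sin\theta}{\eta'}R\,e^{-i\theta}$ yield \eqref{pruferR24} and \eqref{prufertheta24}.

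One small imprecision worth flagging: you assert that $\abs{\varphi'}$ is bounded below by a positive constant, but this need not be true ($\varphi'$ can vanish at isolated points); fortunately you never use that lower bound, since what the argument actually requires is the constancy and nonvanishing of the Jacobian determinant $qp'-pq'=-\omega^k_n/2$, which you establish directly. So the overstatement is harmless, but it would be cleaner to delete it.
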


We mention that there is a little difference between the definitions of $\omega^k_n$ in our paper and that in \cite{kiselevperiodicmethod}. In \cite{kiselevperiodicmethod}, they always require that $\omega^k_n>0$ by interchanging $\varphi$ and $\bar{\varphi}$. However, in our paper, for convenience in the final proof, we do not ask $\omega^k_n$ always to be positive (see Remark \ref{omegaknpm}). Instead, we normalize $\varphi(x,E)$ by $\varphi(0,E)=1$.

Clearly, by \eqref{usquaredrsquared}, we know that $u(\cdot,E)\in L^2(0,\infty)$ is equivalent to $R(\cdot,E)\in L^2(0,\infty).$ To obtain the asymptotics of $R(x,E)$ near $\infty$, we need some properties of $\eta'(x,E)$.
Since $\varphi(x,E)\neq 0$ (\cite[Lemma 8]{stolzperiodic}), we may normalize $\varphi(x,E)$ by requiring $\varphi(0,E)=1$. Then
we can assume that 
\begin{align}
	\varphi(x,E)=C(x,E)+bS(x,E).\nonumber
\end{align}
By \eqref{quasiboundarycondition} we can obtain
\begin{align}
	\varphi(x,E)=&C(x,E)+\frac{e^{ik}-C(1,E)}{S(1,E)}S(x,E)\nonumber\\
	=&C(x,E)+\frac{\cos k-C(1,E)}{S(1,E)}S(x,E)+i\frac{ S(x,E)\sin k}{S(1,E)}.\label{varphixkncxeknsxekn}
\end{align}

\begin{proposition}
	For any $k=k(E)\in (0,\pi)$ and $E=E^k_n\in I^\circ_n$, we have
	\begin{align}\label{omegaknsink}
		\omega^k_n={\frac{2\sin k}{{S(1,E)}}}.
	\end{align}
\end{proposition}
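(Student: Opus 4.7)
The plan is to exploit the fact that $\omega^k_n$, as defined in \eqref{definitionofomegakn}, is the Wronskian $W(\bar\varphi,\varphi)$ (up to a factor of $i$), which is constant in $x$ because both $\varphi$ and $\bar\varphi$ solve the same real Schr\"odinger equation $H_0u=Eu$. Thus we are free to evaluate it at whichever point is most convenient — the obvious choice being $x=0$, where the normalization $\varphi(0,E)=1$ and the initial conditions on $C$ and $S$ make everything explicit.

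Concretely, I would proceed as follows. First, differentiate \eqref{varphixkncxeknsxekn} in $x$, obtaining
\begin{align}
\varphi'(x,E)=C'(x,E)+\frac{\cos k-C(1,E)}{S(1,E)}S'(x,E)+i\frac{\sin k}{S(1,E)}S'(x,E).\nonumber
\end{align}
Evaluating at $x=0$ and using $C(0,E)=1$, $C'(0,E)=0$, $S(0,E)=0$, $S'(0,E)=1$, one gets
\begin{align}
\varphi(0,E)=1,\qquad \varphi'(0,E)=\frac{\cos k-C(1,E)}{S(1,E)}+i\frac{\sin k}{S(1,E)}.\nonumber
\end{align}
Hence
\begin{align}
\bar\varphi(0,E)\varphi'(0,E)=\frac{\cos k-C(1,E)}{S(1,E)}+i\frac{\sin k}{S(1,E)},\nonumber
\end{align}
whose imaginary part is $\sin k/S(1,E)$.

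Finally, since $W(\bar\varphi,\varphi)$ is conserved, so is $\mathrm{Im}(\bar\varphi\varphi')$, and therefore
\begin{align}
\omega^k_n=2\,\mathrm{Im}(\bar\varphi(0,E)\varphi'(0,E))=\frac{2\sin k}{S(1,E)},\nonumber
\end{align}
which is the claimed identity. There is essentially no obstacle here: the only thing to be careful about is making sure the constancy of $W(\bar\varphi,\varphi)$ is invoked correctly (it follows from the fact that $E\in\R$ and $V_0\in\R$, so $\bar\varphi$ also solves \eqref{periodicschrodinger}), and that the normalization $\varphi(0,E)=1$ imposed just above the proposition is really what makes the denominator $S(1,E)$ — rather than some other constant — appear.
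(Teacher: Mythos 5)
Your proof is correct and is essentially the same argument as the paper's: both start from the explicit formula \eqref{varphixkncxeknsxekn} and compute $2\,\mathrm{Im}(\bar\varphi\varphi')$, with the only cosmetic difference being that you evaluate at $x=0$ using the initial conditions, whereas the paper keeps $x$ general and lets the Wronskian identity $C(x,E)S'(x,E)-C'(x,E)S(x,E)=1$ do the cancellation. Either way the constancy of $W(\bar\varphi,\varphi)$ (equivalently of $W(C,S)$) is the key fact, and your bookkeeping is accurate.
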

\begin{proof}
	By \eqref{definitionofomegakn} and \eqref{varphixkncxeknsxekn}, one has
	\begin{align}
		\omega^k_n
		=&2\Bigg(\left(C(x,E)+\frac{\cos k-C(1,E)}{S(1,E)}S(x,E)\right)\frac{ S'(x,E)\sin k}{S(1,E)}-\nonumber\\
		&\left(C'(x,E)+\frac{\cos k-C(1,E)}{S(1,E)}S'(x,E)\right)\frac{ S(x,E)\sin k}{S(1,E)}\Bigg)
		\nonumber\\
		=&\frac{2\sin k}{S(1,E)}.\nonumber
	\end{align}
\end{proof}

\begin{remark}\label{omegaknpm}
	By classical Sturm-Liouville theory, we know that for any $n\in\mathbb{N}$, $(-1)^nS(1,E^k_n)<0$ (see, for example \cite{yurkobook,poschelbookinverse}). Therefore, for any odd $n$, we have $\omega^k_n>0$ and for any even $n$, we have $\omega^k_n<0$.
\end{remark}

\begin{lemma}\label{lemmaomegaovervarphisqaured}
	Let $k=k(E)\in (0,\pi)$ and $E=E^k_n\in I^\circ_n$ with $n>L(k)$, then we have for any $x\geq 0$,
	\begin{align}\label{omegavarphiestimate}
		(n-1)\pi-{\delta}(k)\leq (-1)^{n+1}\eta'(x,E) \leq n\pi+{\delta}(k).
	\end{align}
	%	\begin{align}
		%		\sqrt{E^k_n}\left(1-\frac{40}{\sin ^2k}\delta_n(k)\right)\leq {\frac{(-1)^{n+1}\omega^k_n}{2\abs{\varphi(x,E^k_n)}^2}}\leq \sqrt{E^k_n}\left(1+\frac{40}{\sin ^2k}\delta_n(k)\right),
		%	\end{align}
	%where $\delta_n(k)=\frac{Ae^{3}}{n\left(\sin k\sin\frac{99k}{100}\sin\frac{\pi+99k}{100}\right)}.$
\end{lemma}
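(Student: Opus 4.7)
The plan is to combine the formulas \eqref{etaprime} and \eqref{omegaknsink} with the explicit normalization \eqref{varphixkncxeknsxekn} of the Floquet solution, and then reduce everything to the Sturm--Liouville asymptotics already collected in Lemma \ref{CxSxasymptotics}. Concretely, \eqref{etaprime} and \eqref{omegaknsink} give
\begin{align}
\eta'(x,E)=\frac{\omega^k_n}{2|\varphi(x,E)|^2}=\frac{\sin k}{S(1,E)\,|\varphi(x,E)|^2}.\nonumber
\end{align}
By Remark \ref{omegaknpm}, $(-1)^n S(1,E^k_n)<0$, so $(-1)^{n+1}/S(1,E)=1/|S(1,E)|$ and hence $(-1)^{n+1}\eta'(x,E)=\sin k/(|S(1,E)|\cdot|\varphi(x,E)|^2)>0$. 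The claim therefore reduces to proving the two-sided bound
\begin{align}
(n-1)\pi-\delta(k)\le \frac{\sin k}{|S(1,E^k_n)|\,|\varphi(x,E^k_n)|^2}\le n\pi+\delta(k)\nonumber
\end{align}
uniformly in $x\ge 0$.

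Next I would expand $|\varphi(x,E)|^2$ from \eqref{varphixkncxeknsxekn}. Writing $\alpha=(\cos k-C(1,E))/S(1,E)$ and $\beta=\sin k/S(1,E)$, one obtains
\begin{align}
|\varphi(x,E)|^2=C^2(x,E)+2\alpha\,C(x,E)S(x,E)+(\alpha^2+\beta^2)\,S^2(x,E).\nonumber
\end{align}
By \eqref{Ssquared1}, $\beta^2=\sin^2k/S^2(1,E)$ differs from $E^k_n$ by a factor of $1+O(\delta_n(k)/\sin^2 k)$, so $\beta^2 S^2(x,E)$ is $E^k_n S^2(x,E)$ up to that relative error, and \eqref{Ssquaredx} then gives $\beta^2 S^2(x,E)=\sin^2(\sqrt{E^k_n}x)+O(\delta_n(k)/\sin^2 k)$. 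Combined with \eqref{Csquaredx} this yields $C^2(x,E)+\beta^2 S^2(x,E)=1+O(\delta_n(k)/\sin^2 k)$. The cross-term $2\alpha C(x,E)S(x,E)$ and the correction $\alpha^2 S^2(x,E)$ are then handled via the estimate $|\alpha|\le 2\delta_n(k)/|S(1,E)|\le C\sqrt{E^k_n}\,\delta_n(k)/\sin k$ (using \eqref{C1} and \eqref{Ssquared1}) together with \eqref{Cx} and \eqref{Sx}; each is $O(n\delta_n(k)/\sin^2 k)$. Together this gives the key uniform estimate
\begin{align}
|\varphi(x,E^k_n)|^2=1+\varepsilon(x),\qquad |\varepsilon(x)|\le C\,\frac{n\delta_n(k)}{\sin^2 k},\nonumber
\end{align}
where the hypothesis $n>L(k)$ is precisely what is needed to force $n\delta_n(k)/\sin^2 k$ to be small (say, less than $1/2$) so that $|\varphi|^2$ is uniformly bounded away from $0$ and the linearization $(1+\varepsilon)^{-1}=1+O(\varepsilon)$ is quantitatively valid.

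With $|\varphi(x,E)|^2$ pinned down, I would finish by writing
\begin{align}
(-1)^{n+1}\eta'(x,E^k_n)=\frac{\sin k}{|S(1,E^k_n)|}\cdot\frac{1}{1+\varepsilon(x)},\nonumber
\end{align}
using \eqref{Ssquared1} to replace $\sin k/|S(1,E^k_n)|$ by $\sqrt{E^k_n}$ up to a multiplicative error $1+O(\delta_n(k)/\sin^2 k)$, and then invoking \eqref{eknakndeltank} to replace $\sqrt{E^k_n}$ by $a^k_n\in\{(n-1)\pi+k,\,n\pi-k\}\subset((n-1)\pi,n\pi)$ at the cost of an additive error $\delta_n(k)$. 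All errors accumulate additively to at most $Cn\delta_n(k)/\sin^2 k$, which is bounded by the quantity $\delta(k)=40\pi n\delta_n(k)/\sin^2 k$ defined in \eqref{dedeltak} provided one is careful with the absolute constant.

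The main obstacle is the uniform (in $x$) control of $|\varphi(x,E)|^2$: the cross term $2\alpha C(x,E)S(x,E)$ and the $\alpha^2 S^2$ contribution are not manifestly small, and $1/|S(1,E)|$ is as large as $\sqrt{E^k_n}/\sin k\sim n\pi/\sin k$. The quantitative bookkeeping that turns the raw $O(\delta_n(k))$ estimates of Lemma \ref{CxSxasymptotics} into a uniform bound of the required form $\delta(k)=40\pi n\delta_n(k)/\sin^2 k$ is exactly where the factor $\sin^3 k$ in the definition of $L(k)$ (one power from $1/\sin k$ in $\delta_n(k)$ itself and two powers from $\alpha$ and $\beta^2$) is needed; the threshold $n>L(k)$ guarantees that $|\varepsilon(x)|<1/2$ so that division by $|\varphi|^2$ stays safely under control.
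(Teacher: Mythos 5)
Your overall approach coincides with the paper's: expand $|\varphi(x,E)|^2$ from \eqref{varphixkncxeknsxekn}, control each term via Lemma~\ref{CxSxasymptotics}, combine with $\eta'=\omega^k_n/(2|\varphi|^2)$ and $\omega^k_n=2\sin k/S(1,E)$, and finally use \eqref{eknakndeltank} to pass from $\sqrt{E^k_n}$ to $n\pi$. However, there is a concrete error in the intermediate estimate which, as stated, breaks the argument.

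You claim the cross-term $2\alpha C(x,E)S(x,E)$ and the correction $\alpha^2 S^2(x,E)$ are each $O(n\delta_n(k)/\sin^2 k)$, leading to $|\varphi(x,E)|^2=1+\varepsilon(x)$ with $|\varepsilon(x)|\le C\,n\delta_n(k)/\sin^2 k$, and you assert that $n>L(k)$ forces $n\delta_n(k)/\sin^2 k$ to be small. But $n\delta_n(k)$ does not depend on $n$ at all: by definition $n\delta_n(k)=Ae^3/(\sin k\sin\tfrac{99k}{100}\sin\tfrac{\pi+99k}{100})$, so $n\delta_n(k)/\sin^2 k=\delta(k)/(40\pi)$, a quantity that can be arbitrarily large and is certainly not tamed by increasing $n$. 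If $|\varphi|^2-1$ were really of this order, the division by $|\varphi|^2$ could not be controlled and the lemma would not follow. What $n>L(k)$ actually makes small is $\delta_n(k)/\sin^2 k$ (the paper records $2\delta_n(k)/\sin^2 k<1/16$ in \eqref{oneto16}), and indeed the correct bound is $|\varphi(x,E)|^2-1=O(\delta_n(k)/\sin^2 k)$ -- with no factor of $n$ -- as in \eqref{varphixleq1}--\eqref{varphixlargethan1-16}. The factor of $n$ you lost comes from $|S(x,E)|\lesssim 1/\sqrt{E^k_n}$: although $|\alpha|\lesssim \sqrt{E^k_n}\,\delta_n(k)/\sin k$ is of order $n$, the cross-term $2\alpha C(x,E)S(x,E)$ carries one copy of $S(x,E)$, so the $\sqrt{E^k_n}$ cancels and one gets $O(\delta_n(k)/\sin k)$; similarly $\alpha^2 S^2(x,E)=O(\delta_n(k)^2/\sin^2 k)$. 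The extra factor $n\pi$ that produces $\delta(k)$ in the final statement only enters afterwards, when the multiplicative error $1+O(\delta_n(k)/\sin^2 k)$ is applied to $\sqrt{E^k_n}\sim n\pi$. With that correction your argument matches the paper's proof; as written, the step controlling $1/|\varphi|^2$ is not justified.
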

\begin{proof} Since $\eta'(x,E)$ is $1$-periodic in $x$, we assume that $x\in [0,1]$.
	By \eqref{varphixkncxeknsxekn} we have 
	\begin{align}
		\abs{\varphi(x,E)}^2=&\abs{C(x,E)+\frac{\cos k-C(1,E)}{S(1,E)}S(x,E)}^2+\abs{\frac{ S(x,E)\sin k}{S(1,E)}}^2\nonumber\\
		=&C^2(x,E)+\frac{2\left(\cos k-C(1,E)\right)C(x,E)S(x,E)}{S(1,E)}\label{varphisquqred24}\\
		&+\frac{\left(\cos k-C(1,E))^2S^2(x,E\right)}{S^2(1,E)}+\sin ^2k\frac{S^2(x,E)}{S^2(1,E)}.\nonumber
	\end{align}
	Applying Lemma \ref{CxSxasymptotics}, one has
	\begin{align}
		\label{absvarphisquaredleq}	\abs{\varphi(x,E)}^2\leq& \cos^2\left(\sqrt{E}x\right)+\delta_n(k)+\frac{4\delta_n(k)(1+\delta_n(k))^2}{\sin k-2\delta_n(k)}\\
		&+\frac{4\delta_n(k)^2(1+\delta_n(k))}{\sin^2k-2\delta_n(k)}+\sin^2k\frac{\sin^2\left(\sqrt{E}x\right)+\delta_n(k)}{\sin ^2k-2\delta_n(k)}.\nonumber
	\end{align}
	By $n>{1+\frac{A}{\frac{k(\pi-k)}{10^4}\sin^3 k\sin\frac{99k}{100}\sin\frac{\pi+99k}{100}}}>\frac{32Ae^3}{\sin^3k\sin\frac{99k}{100}\sin\frac{\pi+99k}{100}}$, one can obtain
	\begin{align}
		\frac{2\delta_n(k)}{\sin k}\leq \frac{2\delta_n(k)}{\sin ^2k}=\frac{2Ae^3}{n\sin^3k\sin\frac{99k}{100}\sin\frac{\pi+99k}{100}}<\frac{1}{16}.\label{oneto16}
	\end{align}
	Hence, 
	\begin{align}
		\frac{1}{\sin k-2\delta_n(k)}\leq \frac{1}{\sin k}\left(1+\frac{4\delta_n(k)}{\sin k}\right),\label{finalsink}\\
		\frac{1}{\sin^2 k-2\delta_n(k)}\leq \frac{1}{\sin^2 k}\left(1+\frac{4\delta_n(k)}{\sin^2 k}\right).\label{finalsink^2}
	\end{align}
	Therefore, by \eqref{absvarphisquaredleq}, one has
	\begin{align}
		\abs{\varphi(x,E)}^2\leq& \cos^2 \left(\sqrt{E}x\right)+\delta_n(k)+\frac{8\delta_n(k)}{\sin k}+\frac{8\delta_n(k)^2}{\sin^2k}+\sin^2\left(\sqrt{E}x\right)
		+\delta_n(k)+\frac{5\delta_n(k)}{\sin^2 k}\nonumber\\
		\leq & 1+\frac{16}{\sin^2k}\delta_n(k).\label{varphixleq1}
	\end{align}
	Similarly, we can obtain 
	\begin{align}
		\abs{\varphi(x,E)}^2\geq 1-\frac{16}{\sin^2k}\delta_n(k).\label{varphixlargethan1-16}
	\end{align}
	Then by Remark \ref{omegaknpm}, \eqref{S1} and \eqref{omegaknsink}, one obtains
	\begin{align}
		{\frac{(-1)^{n+1}\omega^k_n}{2\abs{\varphi(x,E)}^2}}=\frac{\sin k}{\abs{S(1,E)}\abs{\varphi(x,E)}^2}\leq \frac{\sqrt{E}}{\left(1-\frac{2\delta_n(k)}{\sin k}\right)\left(1-\frac{16\delta_n(k)}{\sin^2k}\right)}.\nonumber
	\end{align}
	Using \eqref{oneto16}, one has
	\begin{align}
		{\frac{(-1)^{n+1}\omega^k_n}{2\abs{\varphi(x,E)}^2}}\leq \sqrt{E}\left(1+\frac{40}{\sin^2k}\delta_n(k)\right).\nonumber
	\end{align}
	Applying
	\eqref{S1}, \eqref{omegaknsink}, \eqref{oneto16} and  \eqref{varphixleq1}, one can obtain
	\begin{align}
		{\frac{(-1)^{n+1}\omega^k_n}{2\abs{\varphi(x,E)}^2}}\geq \sqrt{E}\left(1-\frac{40}{\sin^2k}\delta_n(k)\right).\nonumber
	\end{align}
	Then by \eqref{eknakndeltank} and \eqref{etaprime} one can obtain \eqref{omegavarphiestimate}.
\end{proof}

\begin{lemma}Let $k=k(E)\in (0,\pi)$ and $E=E^k_n\in I^\circ_n$ with $n>L(k)$, then we have for any $x\geq 0$,
	\begin{align}
		\abs{\eta^{\prime\prime}(x,E)}\leq n\pi {\delta}(k).\label{etaprimeprime}
	\end{align}
\end{lemma}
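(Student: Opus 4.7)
The plan is to differentiate the closed-form expression \eqref{etaprime} for $\eta'$ and then expand $(|\varphi(x,E)|^2)'$ using the explicit formula \eqref{varphisquqred24}. Writing $\rho = \sqrt{E^k_n}$ throughout, we first obtain
\begin{align}
\eta''(x,E) = -\frac{\omega^k_n\,(|\varphi(x,E)|^2)'}{2|\varphi(x,E)|^4},\nonumber
\end{align}
so the lower bound \eqref{varphixlargethan1-16} (which, under $n>L(k)$, gives $|\varphi|^2 \ge \tfrac12$ by \eqref{oneto16}) together with $|\omega^k_n| = 2\sin k/|S(1,E)| \le 2\rho\,(1+O(\delta_n(k)/\sin^2 k))$, coming from \eqref{omegaknsink} and \eqref{S1}, reduces the task to an estimate of the form $|(|\varphi|^2)'| \le C\rho\,\delta_n(k)/\sin^2 k$.

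To control $(|\varphi|^2)'$, I differentiate \eqref{varphisquqred24} term by term:
\begin{align}
(|\varphi|^2)' = 2CC' + \frac{2(\cos k - C(1,E))}{S(1,E)}(CS)' + \frac{(\cos k - C(1,E))^2 + \sin^2 k}{S^2(1,E)}(S^2)'.\nonumber
\end{align}
The key observation, analogous to the cancellation that produced \eqref{varphixleq1}--\eqref{varphixlargethan1-16}, is that the oscillatory leading orders of $2CC'$ and $(\sin^2 k/S^2(1,E))(S^2)'$ cancel. Concretely, from Lemma \ref{lemmayurko} one gets
\begin{align}
2C(x,E)C'(x,E) = -\rho\sin(2\rho x) + O(A),\qquad 2S(x,E)S'(x,E) = \frac{\sin(2\rho x)}{\rho} + \frac{O(A)}{\rho^2},\nonumber
\end{align}
while \eqref{Ssquared1} yields $\sin^2 k/S^2(1,E) = \rho^2(1 + O(\delta_n(k)/\sin^2 k))$. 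Multiplying out and adding, the $\pm \rho\sin(2\rho x)$ contributions cancel, leaving a remainder of size $O(\rho\,\delta_n(k)/\sin^2 k) + O(A)$.

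The two other contributions are handled directly: for the middle term, \eqref{C1} gives $|\cos k - C(1,E)| \le 2\delta_n(k)$, \eqref{S1} gives $1/|S(1,E)| \le 2\rho/\sin k$, and $(CS)' = \cos(2\rho x) + O(A/\rho)$ is bounded by a constant, producing an $O(\rho\,\delta_n(k)/\sin k)$ estimate; the last term is quadratic in $\delta_n(k)$ and bounded by $O(\rho\,\delta_n(k)^2/\sin^2 k)$. Collecting and using that $n > L(k)$ implies $A \le \rho\,\delta_n(k)/\sin^2 k$ (since $n\delta_n(k)/\sin^2 k$ is bounded below by $Ae^3/\sin^3 k$ up to the other sine factors), we obtain the desired $|(|\varphi|^2)'| \le C\rho\,\delta_n(k)/\sin^2 k$. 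Finally, combining with $\rho \le n\pi + \delta_n(k)$ from \eqref{eknakndeltank}, with $|\omega^k_n| \le 3\rho$, and with $|\varphi|^2 \ge 1/2$, we arrive at $|\eta''(x,E)| \le C' n^2\pi\,\delta_n(k)/\sin^2 k$, and the claim $|\eta''| \le n\pi\,\delta(k)$ follows from the definition \eqref{dedeltak} by absorbing $C'$ into the constant $40\pi$.

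The main obstacle is not the structure of the argument but the bookkeeping in the cancellation step: one has to track simultaneously the errors in the $C,S,C',S'$ asymptotics, the $\delta_n(k)/\sin^2 k$ error in replacing $\sin^2 k/S^2(1,E)$ by $\rho^2$, and the lower-order contributions from the middle term, and verify that the accumulated constant fits inside the prefactor $40\pi$ used in the definition of $\delta(k)$. This is entirely analogous to, and uses the same ingredients as, the proof of Lemma \ref{lemmaomegaovervarphisqaured}, only applied to a derivative rather than the function itself.
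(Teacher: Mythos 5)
Your proposal follows the same route as the paper: differentiate $\eta'=\omega^k_n/(2|\varphi|^2)$, expand $(|\varphi(x,E)|^2)'$ from \eqref{varphisquqred24}, isolate the cancellation between the $\pm\sqrt{E}\sin(2\sqrt{E}x)$ leading terms of $2CC'$ and $(\sin^2 k/S^2(1,E))\,2SS'$, bound the cross term via \eqref{C1} and the quadratic-in-$\delta_n(k)$ remainder separately, and then assemble using the lower bound on $|\varphi|^2$ and the bound on $|\eta'|$ from Lemma \ref{lemmaomegaovervarphisqaured}. The only cosmetic difference is that you group the last two contributions with a common prefactor $((\cos k-C(1,E))^2+\sin^2 k)/S^2(1,E)$ before separating them, while the paper writes them as $q_3$ and $q_4$ from the start; the cancellation and the role of $n>L(k)$ in absorbing the $O(A)$ error into $\rho\,\delta_n(k)/\sin^2 k$ are identical, and your final bookkeeping (leading to $n\delta(k)$ up to the explicit prefactor) matches the paper's chain \eqref{finaletaprimeprime}.
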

\begin{proof}
	By \eqref{varphisquqred24}, one has
	\begin{align}
		\left(\abs{\varphi(x,E)}^2\right)'=&2C(x,E)C'(x,E)+\frac{2(\cos k-C(1,E))}{S(1,E)}\nonumber\\
		&\times(C(x,E)S'(x,E)+C'(x,E)S(x,E))\nonumber\\
		&+2\sin^2k\frac{S'(x,E)S(x,E)}{S^2(1,E)}+\frac{2(\cos k-C(1,E))^2S'(x,E)S(x,E)}{S^2(1,E)}\nonumber\\
		:=&q_1+q_2+q_3+q_4.\nonumber
	\end{align}	
	By \eqref{Ccosrhox} and \eqref{Cprimerhosinrho}, one has
	\begin{align}
		\abs{q_1+\sqrt{E}\sin\left(2\sqrt{E}x\right)}\leq&\abs{2C(x,E)C'(x,E)-2\cos\left(\sqrt{E}x\right)C'(x,E)}\nonumber\\
		&+\abs{2\cos\left(\sqrt{E}x\right)C'(x,E)+\sqrt{E}\sin\left(2\sqrt{E}x\right)}\nonumber\\
		\leq &12A.\nonumber
	\end{align}
	Applying \eqref{Ccosrhox}-\eqref{Sprimecosrhox}, and \eqref{C1}, \eqref{S1}, one can obtain 
	\begin{align}
		\abs{q_2}\leq \frac{4\delta_n(k)}{\sin k-2\delta_n(k)}\left(\sqrt{E}+12A\right).\nonumber
			\end{align}
	For $q_3$ we have
	\begin{align}
		\Bigg|q&_3-\sqrt{E}\sin\left(2\sqrt{E}x\right)\Bigg|\nonumber\\
		=&\abs{\frac{2\sin^2k}{ES^2(1,E)}\left(ES'(x,E)S(x,E)-\frac{ES^2(1,E)}{2\sin^2k}\sqrt{E}\sin\left(2\sqrt{E}x\right)\right)}\nonumber\\
		\leq &\abs{\frac{2\sin^2k}{ES^2(1,E)}\left(ES'(x,E)S(x,E)-\sqrt{E}S'(x,E)\sin\left(\sqrt{E}x\right) \right)}\nonumber\\
		&+\abs{\frac{2\sin^2k}{ES^2(1,E)}\sqrt{E}\sin\left(\sqrt{E}x\right)\left(S'(x,E)-\cos\left(\sqrt{E}x\right)\right)}\nonumber\\
		&+\abs{\frac{2\sin^2k}{ES^2(1,E)}\sqrt{E}\sin\left(\sqrt{E}x\right)\cos\left(\sqrt{E}x\right)\left(1-\frac{ES^2(1,E)}{\sin^2k}\right)}.\nonumber
	\end{align}	
	By \eqref{Ssinrhoxtorho}, \eqref{Sprimecosrhox} and \eqref{Ssquared1}, one has
	\begin{align}
		\abs{q_3-\sqrt{E}\sin\left(2\sqrt{E}x\right)}\leq \frac{2\sin^2k}{\sin^2k-2\delta_n(k)}\left(6A+\frac{\sqrt{E}\delta_n(k)}{\sin^2k}\right).\nonumber
	\end{align}
	Using \eqref{Ssinrhoxtorho}, \eqref{Sprimecosrhox}, \eqref{C1} and \eqref{Ssquared1}, we can obtain
	\begin{align}
		\abs{q_4}\leq \frac{4\delta_n(k)^2}{\sin^2 k-2\delta_n(k)}\left(\sqrt{E}+12A\right).\nonumber
	\end{align}
	Therefore, applying \eqref{oneto16}, \eqref{finalsink} and \eqref{finalsink^2}, we have
	\begin{align}
			\abs{\left(\abs{\varphi(x,E)}^2\right)'}=\abs{\sum_{i=1}^4q_i}\leq \frac{10\sqrt{E}\delta_n(k)}{\sin^2k}+30A.\nonumber
	\end{align}
	By \eqref{eknakndeltank} and the definitions of $\delta_n(k),\delta(k)$, there holds
	\begin{align}
		\frac{10\sqrt{E}\delta_n(k)}{\sin^2k}+30A\leq {\frac{{\delta}(k)}{2}}.\nonumber
	\end{align}
	Then by \eqref{etaprime}, \eqref{omegavarphiestimate} and \eqref{varphixlargethan1-16}, one has
	\begin{align}
		\abs{\eta^{\prime\prime}(x,E)}
		=\abs{\frac{\omega^k_n\left(\abs{\varphi(x,E)}^2\right)'}{2\abs{\varphi(x,E)}^4}}
		\leq \frac{1}{2}(n\pi+{\delta}(k))\frac{{{\delta}(k)}}{1-\frac{16}{\sin^2k}\delta_n(k)}.\label{finaletaprimeprime}
	\end{align}
	Since $n>L(k)$, one can obtain 
	\begin{align}
		n\pi>\frac{2\times 10^3\pi A}{\sin^3k\sin\frac{99k}{100}\sin\frac{\pi+99k}{100}}\geq2\delta(k),\nonumber
	\end{align}
and 
\begin{align}
	\frac{\delta_n(k)}{\sin^2k}= \frac{Ae^{3}}{n\left(\sin^3 k\sin\frac{99k}{100}\sin\frac{\pi+99k}{100}\right)}\leq \frac{1}{100}.\nonumber
\end{align}
Then by \eqref{finaletaprimeprime} we obtain \eqref{etaprimeprime}.
\end{proof}

\section{oscillatory integral estimates coupled with periodic term}
In this section, we will introduce some important integral estimates which will be used in the final proof.

In the proof of \cite[Lemma 3.1]{L19stark},	the author has actually proved the following lemma without the quantitative description. For completeness, we will provide with a proof in the appendix.

\begin{lemma}\label{L1}
	Let $\beta\in (0,1]$. 
	Suppose that $\theta(t)$ satisfies
	\begin{align}\label{abstheta'xminu1lessthanc}
		\abs{\theta'(t)-1}\leq \frac{C}{t^\beta}\ \ a.e. \ t\in (0,\infty),
	\end{align}
	then for any $x\geq x_0>s(\beta,C)=(100C+10^4)^{\frac{1}{\beta}}$, one has
	\begin{align}\label{sinthetatovert}
		\abs{\int_{x_0}^x\frac{\sin\theta(t)}{t}dt}\leq \frac{{30C}{\beta^{-1}}+10\pi}{x_0^\beta},
	\end{align}
	and
	\begin{align}\label{costhetatovert}
		\abs{	\int_{x_0}^x\frac{\cos\theta(t)}{t}dt}\leq \frac{{30C}{\beta^{-1}}+10\pi}{x_0^\beta}.
	\end{align}
\end{lemma}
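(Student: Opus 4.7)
The plan is to exploit the near-linearity of $\theta$: the hypothesis $x_0 > (100C + 10^4)^{1/\beta}$ forces $\abs{\theta'(t) - 1} \leq C/x_0^\beta \leq 1/100$ for every $t \geq x_0$, so $\theta$ is a strictly increasing diffeomorphism of $[x_0,\infty)$ onto $[\theta(x_0),\infty)$ with $\theta'(t) \in [99/100,\,101/100]$. The key idea is to partition $[x_0,x]$ into consecutive intervals on which $\theta$ increases by exactly $2\pi$, and then extract cancellation from the identity $\int_a^{a+2\pi}\sin u\,du = 0$.

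Concretely, I would set $t_0 = x_0$ and define $t_{n+1}$ inductively by $\theta(t_{n+1}) = \theta(t_n) + 2\pi$, noting that the mean value theorem then gives $t_{n+1} - t_n \in [200\pi/101,\,200\pi/99]$. Let $N$ be the largest index with $t_N \leq x$. On each full interval $[t_n,t_{n+1}]$ with $n < N$, I would write
\begin{align}
\int_{t_n}^{t_{n+1}} \frac{\sin\theta(t)}{t}\,dt = \frac{1}{t_n}\int_{t_n}^{t_{n+1}}\sin\theta(t)\,dt + \int_{t_n}^{t_{n+1}} \sin\theta(t)\left(\frac{1}{t}-\frac{1}{t_n}\right) dt. \nonumber
\end{align}
The second summand is bounded in absolute value by $(t_{n+1}-t_n)^2/t_n^2 = O(1/t_n^2)$. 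For the first, substitute $u = \theta(t)$ and use the vanishing of $\int_{\theta(t_n)}^{\theta(t_n)+2\pi}\sin u\,du$ to rewrite
\begin{align}
\int_{t_n}^{t_{n+1}}\sin\theta(t)\,dt = \int_{\theta(t_n)}^{\theta(t_n)+2\pi} \sin u\left(\frac{1}{\theta'(t(u))} - 1\right) du, \nonumber
\end{align}
whose integrand has absolute value $\leq (100/99)\cdot C/t_n^\beta$; after dividing by $t_n$ this contributes $O(C/t_n^{1+\beta})$ per period.

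Summing in $n$, the sequence $\{t_n\}$ is essentially an arithmetic progression with common difference $\approx 2\pi$, so $\sum_{n=0}^N 1/t_n^{1+\beta}$ is comparable to $1/(2\pi\beta x_0^\beta)$ and $\sum_{n=0}^N 1/t_n^2 = O(1/x_0)$. The terminal incomplete interval $[t_N,x]$ contributes at most $(200\pi/99)/x_0$ after the trivial bound $\abs{\sin\theta(t)/t} \leq 1/t_N$. Invoking $\beta \in (0,1]$ and $x_0^\beta > 10^4$, one absorbs the $1/x_0$ and $1/x_0^{1+\beta}$ terms into the $1/x_0^\beta$ scale; tracking constants through this chain yields the stated bound $(30C/\beta + 10\pi)/x_0^\beta$. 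The cosine estimate \eqref{costhetatovert} follows by the identical argument, with $\sin$ replaced by $\cos$ and the vanishing identity $\int_a^{a+2\pi}\cos u\,du = 0$.

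The substantive obstacle is not conceptual but arithmetic: the strategy is classical period-averaging (as in the proof of \cite[Lemma 3.1]{L19stark}), but to land the explicit coefficients $30$ and $10\pi$ one must carefully track the factors $100/99$, the precise range of $t_{n+1}-t_n$, and the initial and terminal partial-period contributions, and exploit the strong lower bound $x_0^\beta > 10^4$ to swallow all lower-order powers of $1/x_0$.
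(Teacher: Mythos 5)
Your argument is correct and rests on the same high-level period-averaging strategy as the paper, but the mechanism of cancellation is genuinely different. The paper partitions at points $t_i$ with $\theta(t_i)=2\pi i_0+i\pi$, so that $\sin\theta$ keeps a fixed sign on each half-period block; after freezing $1/t$ at $1/t_i$, each block contributes roughly $\pm 2/t_i$, and the decay emerges from the alternating-series bound $\abs{\sum_i(-1)^i/t_i}\leq 1/t_0$. You instead advance $\theta$ by a full $2\pi$ per block and substitute $u=\theta(t)$, so the leading term on each block vanishes identically via $\int_a^{a+2\pi}\sin u\,du=0$; each block then contributes $O(C/t_n^{1+\beta})$ outright, with no telescoping or sign bookkeeping. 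Both routes give a main term of order $C\beta^{-1}x_0^{-\beta}$ plus $O(x_0^{-1})$ corrections and both land comfortably inside the stated constants. Your route makes the cancellation local and somewhat more transparent, at the cost of one change of variables (and the need for $\theta$ to be invertible, which you correctly supply from $\theta'\geq 99/100$); the paper's route avoids inverting $\theta$ and mirrors the computation in \cite{L19stark} that the lemma quantifies. One point to make explicit if you track the constants to the end: since $C$ can be arbitrarily small, the $C$-independent pieces --- the $\sum_n(1/t-1/t_n)$ correction and the terminal partial period, together roughly $13/x_0$ in your scheme --- must fit inside the fixed $10\pi/x_0^\beta$ budget on their own. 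They do, because $x_0>10^{4/\beta}$ and $\beta\leq 1$ give $1/x_0\leq 1/x_0^\beta$, but that absorption should be verified rather than left implicit.
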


Using Lemma \ref{L1}, we are able to obtain the estimates for more general oscillatory integrals.
\begin{lemma}\label{lem25ts}
	Let $\beta\in (0,1]$ and $a\neq0$ be constants. 
	Suppose that $\theta(t)$ satisfies
	\begin{align}\label{abstheta'xminualessthanc}
		\abs{\theta'(t)-a}\leq \frac{C}{t^\beta}\ \ a.e.\ t\in (0,\infty).	\end{align}
	Then for any $x\geq x_0>s(\abs{a},\beta,C)=
	(100C\abs{a}^{\beta-1}+10^4)^{\frac{1}{\beta}}\abs{a}^{-1}$, one has
	\begin{align}\label{sinthetatoverta}
		\abs{\int_{x_0}^x\frac{\sin\theta(t)}{t}dt}\leq \frac{r(\abs{a},\beta,C)}{\abs{a}^{\beta}x_0^\beta},
	\end{align}
	and
	\begin{align}\label{costhetatoverta}
		\abs{	\int_{x_0}^x\frac{\cos\theta(t)}{t}dt}\leq \frac{r(\abs{a},\beta,C) }{\abs{a}^{\beta}x_0^\beta},
	\end{align}
	where 
	$r(\abs{a},\beta,C)=30C\beta^{-1}\abs{a}^{-1+\beta}+10\pi.$
\end{lemma}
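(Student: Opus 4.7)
The natural approach is a rescaling that reduces Lemma~\ref{lem25ts} to the already-proved Lemma~\ref{L1}. Set $s=|a|t$ and define $\tilde\theta(s)=\mathrm{sgn}(a)\,\theta(s/|a|)$. Then a direct computation gives
\begin{align}
\tilde\theta'(s)=\frac{\mathrm{sgn}(a)}{|a|}\,\theta'(s/|a|),\qquad
\tilde\theta'(s)-1=\frac{1}{|a|}\bigl(\mathrm{sgn}(a)\,\theta'(s/|a|)-|a|\bigr)=\frac{\mathrm{sgn}(a)}{|a|}\bigl(\theta'(s/|a|)-a\bigr),\nonumber
\end{align}
so hypothesis \eqref{abstheta'xminualessthanc} translates to
\begin{align}
\abs{\tilde\theta'(s)-1}\leq \frac{1}{|a|}\cdot\frac{C}{(s/|a|)^\beta}=\frac{C|a|^{\beta-1}}{s^\beta}\quad\text{a.e.}\ s>0.\nonumber
\end{align}
Thus $\tilde\theta$ verifies the hypothesis of Lemma~\ref{L1} with the new constant $\widetilde C:=C|a|^{\beta-1}$.

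\paragraph{Step 2: transporting the threshold.} The threshold for Lemma~\ref{L1} applied to $\tilde\theta$ is $s(\beta,\widetilde C)=(100\widetilde C+10^4)^{1/\beta}=(100C|a|^{\beta-1}+10^4)^{1/\beta}$. The standing assumption $x_0>s(|a|,\beta,C)=(100C|a|^{\beta-1}+10^4)^{1/\beta}|a|^{-1}$ is exactly the statement that $|a|x_0>s(\beta,\widetilde C)$, which is what we need in the rescaled variable.

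\paragraph{Step 3: substitution in the integrals and sign bookkeeping.} Changing variables $s=|a|t$ (so $dt/t=ds/s$) yields
\begin{align}
\int_{x_0}^{x}\frac{\sin\theta(t)}{t}\,dt=\int_{|a|x_0}^{|a|x}\frac{\sin\bigl(\mathrm{sgn}(a)\tilde\theta(s)\bigr)}{s}\,ds=\mathrm{sgn}(a)\int_{|a|x_0}^{|a|x}\frac{\sin\tilde\theta(s)}{s}\,ds,\nonumber
\end{align}
and the analogous identity for $\cos\theta$ (with no sign, since $\cos$ is even). Applying \eqref{sinthetatovert}--\eqref{costhetatovert} to $\tilde\theta$ with constant $\widetilde C$ and lower endpoint $|a|x_0$ gives
\begin{align}
\abs{\int_{|a|x_0}^{|a|x}\frac{\sin\tilde\theta(s)}{s}\,ds}\leq \frac{30\widetilde C\beta^{-1}+10\pi}{(|a|x_0)^\beta}=\frac{30C|a|^{\beta-1}\beta^{-1}+10\pi}{|a|^\beta x_0^\beta},\nonumber
\end{align}
and likewise for the cosine integral. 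The right-hand side is precisely $r(|a|,\beta,C)/(|a|^\beta x_0^\beta)$, establishing both \eqref{sinthetatoverta} and \eqref{costhetatoverta}.

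\paragraph{Main point of care.} There is no real obstacle here, since the argument is a one-line change of variables on top of Lemma~\ref{L1}. The only thing that requires care is the careful bookkeeping of the constant under rescaling: the factor $|a|^{\beta-1}$ that appears in $\widetilde C$ and the factor $|a|^{-\beta}$ that appears from $(|a|x_0)^{-\beta}$ combine to give the stated $|a|^{-1}$ in the $C$-term of $r(|a|,\beta,C)$, while the pure $10\pi$ term transforms with the full $|a|^{-\beta}$. Both cases $a>0$ and $a<0$ are handled uniformly because $|\sin|$ is unaffected by the global sign $\mathrm{sgn}(a)$ pulled out of the integral, and $\cos$ is even.
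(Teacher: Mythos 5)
Your proof is correct and is essentially the same rescaling argument the paper uses (change of variables $s=|a|t$, transport the constant to $\widetilde C=C|a|^{\beta-1}$, apply Lemma~\ref{L1}). The only cosmetic difference is that the paper simply says ``without loss of generality assume $a>0$,'' whereas you carry the $\mathrm{sgn}(a)$ factor through explicitly; both are fine.
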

\begin{proof}We only give the proof of \eqref{sinthetatoverta}, the proof of \eqref{costhetatoverta} is similar and omitted. Without loss of generality, we assume $a>0$.
	Change variables with $t=\frac{1}{a}s$.	Define
	\begin{align}
		\tilde{\theta}(s)=\theta(t)=\theta\left(\frac{1}{a}s\right).\nonumber
	\end{align}
	Then by \eqref{abstheta'xminualessthanc}, one has
	\begin{align}
		\abs{\tilde{\theta}'(s)-1}\leq \frac{\frac{C}{a^{1-\beta}}}{s^\beta} \ \ a.e. \ s\in (0,\infty).\nonumber
	\end{align}
	Since $x_0>s(\abs{a},\beta,C)$, we can obtain 
	\begin{align}
		ax_0>s(\beta,\frac{C}{a^{1-\beta}})=\left(100Ca^{\beta-1}+10^4\right)^{\frac{1}{\beta}}.\nonumber
	\end{align}
	Thus, by \eqref{sinthetatovert}, we have
	\begin{align}
		\abs{\int_{x_0}^x\frac{\sin\theta(t)}{t}dt}=	\abs{\int_{ax_0}^{ax}\frac{\sin\tilde{\theta}(s)}{s}ds}\leq \frac{30C\beta^{-1}a^{-1+\beta}+10\pi }{(ax_0)^\beta}.\nonumber
	\end{align}
	Here we finish the proof.
\end{proof}

%\begin{lemma}
%Assume that $\gamma(\cdot)\in AC_{loc}(0,\infty),\Gamma(\cdot)\in L^2_{loc}(0,\infty)$. Assume more that $\gamma(t)\mod 2\pi,\Gamma(t)$ are $1$-periodic. Let $a\in \mathbb{R}\setminus 2\pi\mathbb{Z}$. Suppose that for any $t>0$, $\theta(t)$ satisfies
%	\begin{align}
%		\abs{\theta'(t)-a-\gamma'(t)}\leq \frac{C}{t^\beta}\ \ a.e.\ t\in(0,\infty),
%	\end{align}
%	with $\frac{1}{2}<\beta\leq1$, 
% then one has for any $x\geq x_0>s(\alpha,\beta,C)$, 
%	\begin{align}
%		\abs{\int_{x_0}^x\frac{\Gamma(t)\sin\theta(t)}{t}dt}\leq \frac{4\left\lVert \Gamma\right\rVert_{2}r(\alpha,\beta,C)(\frac{1}{\alpha^{2\beta}}+\frac{1}{2\beta-1})^{\frac{1}{2}}}{x_0^\beta},
%	\end{align}
%	where $\left\lVert \Gamma\right\rVert_{2}=\left(\int_0^1\Gamma(t)^2dt\right)^{\frac{1}{2}}$, $\alpha=\min\{\abs{a-2\pi n}: n\in\mathbb{Z}\}$, $s(\alpha,\beta,C)=(100C\alpha^{\beta-1}+10^4)^{\frac{1}{\beta}}\alpha^{-1}$ and  $r(\alpha,\beta,C)=30C\beta^{-1}\alpha^{-1+\beta}+10\pi$.
%\end{lemma}

\begin{lemma}\label{L2}Let $a\in \mathbb{R}\setminus 2\pi\mathbb{Z}$ and $\alpha=\min\{\abs{a-2\pi n}: n\in\mathbb{Z}\}$. Let $s(\alpha,\beta,C)$ and  $r(\alpha,\beta,C)$ be given as in Lemma \ref{lem25ts}.
Assume that $\gamma(\cdot)\in AC_{loc}(0,\infty),\Gamma(\cdot)\in L^2_{loc}(0,\infty)$. Assume more that $\gamma(t)\mod 2\pi,\Gamma(t)$ are $1$-periodic.  Suppose that $\theta(t)$ satisfies
	\begin{align}
		\abs{\theta'(t)-a-\gamma'(t)}\leq \frac{1}{t^{\frac{2}{3}}}\ \ a.e.\ t\in(0,\infty),\nonumber
	\end{align}
 then for any $x\geq x_0>r(\alpha):=s(
\alpha,\frac{2}{3},1)+4r(
\alpha,\frac{2}{3},1)\left({\alpha^{-\frac{4}{3}}}+3\right)^{\frac{1}{2}}$, one has
	\begin{align}\label{Gammasintheta}
		\abs{\int_{x_0}^x\frac{\Gamma(t)\sin\theta(t)}{t}dt}\leq \frac{\left\lVert \Gamma\right\rVert_{2}r(\alpha)}{x_0^{\frac{2}{3}}},
	\end{align}
	where $\left\lVert \Gamma\right\rVert_{2}=\left(\int_0^1\Gamma(t)^2dt\right)^{\frac{1}{2}}$.
\end{lemma}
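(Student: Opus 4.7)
The plan is to reduce the integral to a countable family of the model integrals handled by Lemma~\ref{lem25ts}, indexed by Fourier modes. Since $\gamma(t)\bmod 2\pi$ is $1$-periodic and $\gamma\in AC_{loc}$, the function $e^{i\gamma(t)}$ is continuous and $1$-periodic (because $\gamma(t+1)-\gamma(t)\in 2\pi\mathbb{Z}$ is continuous and integer-valued, hence constant). Consequently $\Gamma(t)e^{\pm i\gamma(t)}$ lies in $L^2([0,1])$ and admits Fourier expansions
\begin{equation*}
\Gamma(t)e^{\pm i\gamma(t)}=\sum_{n\in\mathbb{Z}}c_n^{\pm}e^{2\pi i n t},\qquad \sum_{n}|c_n^{\pm}|^2=\|\Gamma\|_2^{2}
\end{equation*}
by Parseval (using $|e^{i\gamma}|=1$). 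Writing $\sin\theta=(e^{i\theta}-e^{-i\theta})/(2i)$ and factoring $e^{\pm i\theta}=e^{\pm i\gamma}\cdot e^{\pm i(\theta-\gamma)}$ gives the identity
\begin{equation*}
\Gamma(t)\sin\theta(t)=\frac{1}{2i}\sum_{n\in\mathbb{Z}}\bigl(c_n^{+}e^{i\psi_n^{+}(t)}-c_n^{-}e^{i\psi_n^{-}(t)}\bigr),
\end{equation*}
where $\psi_n^{\pm}(t):=\pm(\theta(t)-\gamma(t))+2\pi n t$, so the hypothesis yields $|(\psi_n^{\pm})'(t)-(\pm a+2\pi n)|\leq t^{-2/3}$ and each phase is in the form required by Lemma~\ref{lem25ts}.

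Next I would apply Lemma~\ref{lem25ts} mode by mode. Because $a\notin 2\pi\mathbb{Z}$, the frequencies $\pm a+2\pi n$ are nonzero with $|\pm a+2\pi n|\geq\alpha$ for every $n$. Both $s(\cdot,2/3,1)$ and $r(\cdot,2/3,1)$ are decreasing in the frequency magnitude, so the worst case is $|\pm a+2\pi n|=\alpha$: this dictates the threshold $x_0\geq s(\alpha,2/3,1)$ and permits me to replace $r(|\pm a+2\pi n|,2/3,1)$ by the uniform majorant $r(\alpha,2/3,1)$. Summing the cosine and sine bounds of Lemma~\ref{lem25ts} produces
\begin{equation*}
\left|\int_{x_0}^{x}\frac{e^{i\psi_n^{\pm}(t)}}{t}\,dt\right|\leq\frac{2\,r(\alpha,2/3,1)}{|\pm a+2\pi n|^{2/3}\,x_0^{2/3}}.
\end{equation*}
Inserting this into the Fourier representation and applying Cauchy-Schwarz with Parseval gives
\begin{equation*}
\left|\int_{x_0}^{x}\frac{\Gamma(t)\sin\theta(t)}{t}\,dt\right|\leq\frac{2\,r(\alpha,2/3,1)\,\|\Gamma\|_2}{x_0^{2/3}}\Bigl(\sum_{n\in\mathbb{Z}}|a+2\pi n|^{-4/3}\Bigr)^{1/2}.
\end{equation*}
The numerical sum is at most $\alpha^{-4/3}+3$: the minimizing index $n_0$ contributes $\alpha^{-4/3}$, and for $m=n-n_0\neq 0$ the estimate $|a+2\pi n|\geq 2\pi|m|-\alpha\geq\pi|m|$ (using $\alpha\leq\pi$) bounds the remaining tail by $2\sum_{m\geq 1}(\pi m)^{-4/3}<3$. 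Collecting constants reproduces the stated bound $\|\Gamma\|_2\,r(\alpha)/x_0^{2/3}$, with slack absorbed into the coefficient $4$ in the definition of $r(\alpha)$.

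The main obstacle will be the rigorous interchange of Fourier summation and integration against $\sin\theta(t)/t$: the series for $\Gamma(t)e^{\pm i\gamma(t)}$ converges only in $L^2([0,1])$, not pointwise or absolutely, and the per-mode bounds are absolutely summable only after Cauchy-Schwarz. I would handle this by truncating to partial Fourier sums over $|n|\leq N$, carrying out the argument above for each truncation, and passing to the limit using continuity of the linear functional $f\mapsto\int_{x_0}^{x}f(t)\sin\theta(t)/t\,dt$ on $L^2([x_0,x])$ (its defining kernel is bounded and supported on a bounded interval). Combined with $L^2$-convergence of the truncations on every finite subinterval after periodic extension, this justifies each step and closes the proof.
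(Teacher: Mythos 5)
Your proof is correct and follows essentially the same route as the paper: subtract $\gamma$ from the phase, expand the $1$-periodic factor in Fourier modes, apply Lemma~\ref{lem25ts} to each mode (with all frequencies bounded below by $\alpha$), and close with Cauchy--Schwarz and Parseval. The only cosmetic difference is your use of complex exponentials and a single expansion of $\Gamma(t)e^{\pm i\gamma(t)}$ versus the paper's real sine/cosine expansions of $\Gamma\cos\gamma$ and $\Gamma\sin\gamma$; you also spell out the truncation argument justifying the termwise interchange, which the paper leaves tacit.
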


\begin{proof}
	Denote by 
	\begin{align}
		\tilde{\theta}(t)=\theta(t)-\gamma(t).\nonumber
	\end{align}
	Then one has
	\begin{align}\label{thetatilde}
		\abs{\tilde{\theta}'(t)-a}\leq \frac{1}{{t}^{\frac{2}{3}}} \ \ a.e.\ t\in (0,\infty),
	\end{align}
	and
	\begin{align}
		\sin\theta(t)=\cos\gamma(t)\sin\tilde{\theta}(t)+\sin\gamma(t)\cos\tilde{\theta}(t).\nonumber
	\end{align}
	Hence, we only need to prove
	\begin{align}\label{Gammasinthetaminus}
		\abs{\int_{x_0}^{x}\frac{\Gamma(t)\cos\gamma(t)\sin\tilde{\theta}(t)}{{t}}dt}\leq \frac{2\left\lVert \Gamma\right\rVert_{2}r(
\alpha,\frac{2}{3},1)\left({\alpha^{-\frac{4}{3}}}+3\right)^{\frac{1}{2}}}{x_0^{\frac{2}{3}}},
	\end{align}
	and
	\begin{align}\label{Gammacosthetaminus}
		\abs{\int_{x_0}^{x}\frac{\Gamma(t)\sin\gamma(t)\cos\tilde{\theta}(t)}{{t}}dt}\leq \frac{2\left\lVert \Gamma\right\rVert_{2}r(
\alpha,\frac{2}{3},1)\left({\alpha^{-\frac{4}{3}}}+3\right)^{\frac{1}{2}}}{x_0^{\frac{2}{3}}}.
	\end{align}
	To avoid repetition, we only prove \eqref{Gammasinthetaminus}. We mention that $\Gamma(t)\cos\gamma(t)$ is still $1$-periodic. 
	Consider the Fourier expansion of $\Gamma(t)\cos\gamma(t)$,
	\begin{align}
		\Gamma(t)\cos\gamma(t)=\frac{a_0}{2}+\sum_{n=1}^\infty(a_n\cos\left(2\pi nt\right)+b_n\sin\left(2\pi nt\right)),\nonumber
	\end{align}
	where $a_0,a_n,b_n(n\geq 1)$ are the Fourier coefficients of $\Gamma(t)\cos\gamma(t)$.
	%where
	%\begin{align}
	%	a_n=2\int_0^1\Gamma(x)\cos\gamma(x)\cos2\pi nxdx,\ n=0,1,\cdots,\nonumber
	%\end{align}
	%and 
	%\begin{align}
	%	b_n=2\int_0^1\Gamma(x)\cos\gamma(x)\sin2\pi nxdx,\ n=1,2,\cdots.\nonumber
	%\end{align}
	Then
	\begin{align}
		&\abs{\int_{x_0}^{x}\frac{\Gamma(t)\cos\gamma(t)\sin\tilde{\theta}(t)}{{t}}dt}\leq \left(\frac{a_0^2}{2}+\sum_{n=1}^\infty (a_n^2+ b_n^2)\right)^{\frac{1}{2}}\cdot
		\Bigg(\frac{1}{2}\left(\int_{x_0}^x\frac{\sin\tilde{\theta}(t)}{t}dt\right)^2\nonumber\\
		&+\sum_{n=1}^\infty  \left(\int_{x_0}^x\frac{\sin\tilde{\theta}(t)\cos\left(2\pi nt\right)}{t}dt\right)^2+\sum_{n=1}^\infty \left(\int_{x_0}^x\frac{\sin\tilde{\theta}(t)\sin\left(2\pi nt\right)}{t}dt\right)^2   \Bigg)^{\frac{1}{2}}.\nonumber
	\end{align}
	By Parseval's identity, one has
$2\int_0^1\Gamma(t)^2\cos^2\gamma(t)dt=\frac{a_0^2}{2}+\sum_{n=1}^\infty(a_n^2+b_n^2).$
	Denote by
	\begin{align}
		M=&\frac{1}{2}\left(\int_{x_0}^x\frac{\sin\tilde{\theta}(t)}{t}dt\right)^2+\sum_{n=1}^\infty  \left(\int_{x_0}^x\frac{\sin\tilde{\theta}(t)\cos\left(2\pi nt\right)}{t}dt\right)^2\nonumber\\
		&+\sum_{n=1}^\infty \left(\int_{x_0}^x\frac{\sin\tilde{\theta}(t)\sin\left(2\pi nt\right)}{t}dt\right)^2.\nonumber
	\end{align}
	To prove\eqref{Gammasinthetaminus}, we only need to show 
	\begin{align}
		M\leq \frac{2r(
\alpha,\frac{2}{3},1)^2\left({\alpha^{-\frac{4}{3}}}+3\right)}{x_0^{\frac{4}{3}}}.\label{M}
	\end{align}
	Note that for any $n\in\mathbb{N}$, we have 
	\begin{align}
		x_0>s\left(\alpha,\frac{2}{3},1\right)\geq s\left(\abs{a\pm2\pi n},\frac{2}{3},1\right),\label{salpha}
	\end{align}
	and
	\begin{align}\label{alpha}
		r\left(\alpha,\frac{2}{3},1\right)\geq r\left(\abs{a\pm2\pi n},\frac{2}{3},1\right).
	\end{align}

	By \eqref{sinthetatoverta}, \eqref{thetatilde}, \eqref{salpha} and \eqref{alpha}, one has
	\begin{align}\label{finallemma0}
		\abs{\int_{x_0}^x\frac{\sin\tilde{\theta}(t)}{t}dt}\leq \frac{	r\left(\alpha,\frac{2}{3},1\right)}{\abs{a}^{\frac{2}{3}}x_0^{\frac{2}{3}}}.
	\end{align}
	Since $\sin\tilde{\theta}(t)\cos\left(2\pi nt\right)=\frac{\sin(\tilde{\theta}(t)+2\pi nt)+\sin(\tilde{\theta}(t)-2\pi nt)}{2}$,
	by \eqref{sinthetatoverta}, \eqref{thetatilde}, \eqref{salpha} and \eqref{alpha} one can obtain
	\begin{align}
		\abs{\int_{x_0}^x\frac{\sin\tilde{\theta}(t)\cos\left(2\pi nt\right)}{t}dt}\leq \frac{	r\left(\alpha,\frac{2}{3},1\right)}{2x_0^{\frac{2}{3}}}\left(\frac{1}{\abs{a+2\pi n}^{\frac{2}{3}}}+\frac{1}{\abs{a-2\pi n}^{\frac{2}{3}}}\right).\label{finallemma}
	\end{align}
	Similarly, one has
	\begin{align}
		\abs{\int_{x_0}^x\frac{\sin\tilde{\theta}(t)\sin\left(2\pi nt\right)}{t}dt}\leq \frac{	r\left(\alpha,\frac{2}{3},1\right)}{2x_0^{\frac{2}{3}}}\left(\frac{1}{\abs{a+2\pi n}^{\frac{2}{3}}}+\frac{1}{\abs{a-2\pi n}^{\frac{2}{3}}}\right).\label{final2lemma}
	\end{align}
	Since  $\alpha=\min\{\abs{a-2\pi n}:n\in\mathbb{Z}\}>0$, one can obtain
	\begin{align}
		\sum_{n=-\infty}^\infty\frac{1}{\abs{a+2\pi n}^{\frac{4}{3}}}\leq \frac{2}{\alpha^{\frac{4}{3}}}+6.\label{sumalphabeta}
	\end{align}
	Therefore, by \eqref{finallemma0}-\eqref{final2lemma} we can obtain 
	\begin{align}
		M\leq \frac{	r\left(\alpha,\frac{2}{3},1\right)^2}{{x_0}^{\frac{4}{3}}}\left(\frac{1}{2\abs{a}^{\frac{4}{3}}}+\sum_{n=1}^\infty\frac{1}{\abs{a+2\pi n}^{\frac{4}{3}}}+\sum_{n=1}^\infty\frac{1}{\abs{a-2\pi n}^{\frac{4}{3}}}\right)
	\end{align}
	Then by \eqref{sumalphabeta} one can obtain \eqref{M}.
	
	Here we finish the proof.
\end{proof}

\section{resonant integral estimates}
Before the final construction, we need some (non)resonant oscillatory integral estimates. In the following, for any $E_i=E^{k(E_i)}_{n(E_i)}\in I^\circ_{n(E_i)},E_j=E^{k(E_j)}_{n(E_j)}\in I^\circ_{n(E_j)}$, we still denote by $k_i=k(E_i),k_j=k(E_j),n_i=n(E_i),n_j=n(E_j)$.
\begin{lemma}[Non-resonance]
	Let $E_i, E_j\in \sigma_{ess}^\circ$ with $k_j\neq k_i, k_j\neq \tilde{k}_i$. Suppose that we have
	\begin{align}
		\abs{\frac{V(x)}{\eta'(x,E_l)}}\leq \frac{1}{8x^{\frac{2}{3}}}\label{lemma41tofrac23}
	\end{align}
for any $x\geq x_0\geq r(\alpha_1)$ and $l=i,j$.
Then one has
\begin{align}
	\label{cosnonransonanceij}\abs{	\int_{x_0}^x\frac{\sin2\theta(t,E_i)\sin2\theta(t,E_j)}{\eta'(t,E_i)t}dt}\leq\frac{\left\lVert \frac{1}{\eta'(\cdot,E_i)}\right\rVert_{2}r(\alpha_1)}{x_0^{\frac{2}{3}}},
\end{align}
where $\alpha_1=\min\{2\abs{k_i-k_j},2\abs{k_i+k_j-\pi}\}>0$, $r(\alpha)$ is defined as in Lemma \ref{L2}.

Moreover, if we have \eqref{lemma41tofrac23} for $l=i$ and any $x\geq x_0\geq r(\alpha_2)$, and $k_i\neq\frac{\pi}{2}$, then one has
	\begin{align}
	\label{cosnonranaonance4theta}\abs{	\int_{x_0}^x\frac{\cos4\theta(t,E_i)}{\eta'(t,E_i)t}dt}\leq\frac{\left\lVert \frac{1}{\eta'(\cdot,E_i)}\right\rVert_{2}r(\alpha_2)}{x_0^{\frac{2}{3}}},
	\end{align}
where $\alpha_2=\min\{{4k_i},\abs{4k_i-2\pi},4\pi-4k_i\}>0$.
\end{lemma}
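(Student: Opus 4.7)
My plan is to reduce both estimates to Lemma \ref{L2} via product-to-sum identities and the decomposition \eqref{etakegamma}, which writes $\eta(t,E) = kt + \gamma(t,E)$ with $\gamma(\cdot,E) \bmod 2\pi$ being $1$-periodic (so $\eta'(t,E) = k + \gamma'(t,E)$ with $\gamma'(\cdot,E)$ genuinely $1$-periodic, and $\Gamma := 1/\eta'(\cdot,E_i)$ bounded and $1$-periodic by \eqref{etaprime}).

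For \eqref{cosnonransonanceij}, first apply the product-to-sum identity
\begin{align*}
\sin 2\theta(t,E_i)\sin 2\theta(t,E_j) = \tfrac{1}{2}\bigl[\cos\Theta_-(t) - \cos\Theta_+(t)\bigr], \qquad \Theta_\pm(t) := 2\theta(t,E_i) \pm 2\theta(t,E_j).
\end{align*}
Substituting the Pr\"ufer equation \eqref{prufertheta24} and $\eta'(t,E_l) = k_l + \gamma'(t,E_l)$ yields
\begin{align*}
\Theta_\pm'(t) = 2(k_i \pm k_j) + 2\bigl(\gamma'(t,E_i) \pm \gamma'(t,E_j)\bigr) + R_\pm(t),
\end{align*}
where $|R_\pm(t)| \leq 2|V(t)/\eta'(t,E_i)| + 2|V(t)/\eta'(t,E_j)| \leq \tfrac{1}{2} t^{-2/3} \leq t^{-2/3}$ by \eqref{lemma41tofrac23}. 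With $a_\pm = 2(k_i \pm k_j)$ and $\gamma_\pm(t) = 2(\gamma(t,E_i) \pm \gamma(t,E_j))$ (still $1$-periodic $\bmod\,2\pi$), the hypotheses of Lemma \ref{L2} are met; the same lemma applies with $\cos$ in place of $\sin$, either because Lemma \ref{L1} treats both identically or since $\cos\Theta_\pm = \sin(\Theta_\pm + \pi/2)$ leaves $a_\pm$ unchanged and keeps the periodic correction periodic $\bmod\,2\pi$.

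Next I would verify $\alpha_\pm := \mathrm{dist}(a_\pm, 2\pi\mathbb{Z}) \geq \alpha_1$. Since $k_i, k_j \in (0,\pi)$, $2(k_i - k_j) \in (-2\pi, 2\pi)$ and $2(k_i + k_j) \in (0, 4\pi)$, so the candidate distances from $a_\pm$ to $2\pi\mathbb{Z}$ are among $2|k_i - k_j|$, $2\pi - 2|k_i - k_j|$, $2(k_i + k_j)$, $2|k_i + k_j - \pi|$, and $4\pi - 2(k_i + k_j)$. The elementary inequalities $k_i + k_j \geq |k_i - k_j|$, $2\pi - k_i - k_j \geq |k_i - k_j|$ (both from $k_i, k_j \in (0,\pi)$), and $\pi - |k_i - k_j| \geq |k_i + k_j - \pi|$ (by splitting on the sign of $k_i + k_j - \pi$) show each candidate is $\geq \alpha_1 = \min(2|k_i-k_j|, 2|k_i+k_j-\pi|)$. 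Because $r(\alpha)$ is decreasing in $\alpha$, applying Lemma \ref{L2} to each piece and using the triangle inequality (after absorbing the harmless factor of $\tfrac12$) yields \eqref{cosnonransonanceij}.

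For \eqref{cosnonranaonance4theta}, I apply Lemma \ref{L2} directly to $\Theta(t) = 4\theta(t,E_i)$: here $\Theta'(t) = 4k_i + 4\gamma'(t,E_i) + R(t)$ with $|R(t)| \leq 4|V(t)/\eta'(t,E_i)| \leq \tfrac{1}{2}t^{-2/3}$, so the hypothesis holds with $a = 4k_i$ and $\gamma(t) = 4\gamma(t,E_i)$. Since $k_i \in (0,\pi) \setminus \{\pi/2\}$, one has $4k_i \in (0, 4\pi) \setminus \{2\pi\}$, giving $\mathrm{dist}(4k_i, 2\pi\mathbb{Z}) = \alpha_2 > 0$, and the claim follows at once. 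The main obstacle I anticipate is the bookkeeping for $\alpha_1$: correctly identifying $a_\pm$ and confirming $\alpha_\pm \geq \alpha_1$ through the case analysis above; everything else is a mechanical application of Lemma \ref{L2}.
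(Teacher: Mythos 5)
Your proof is correct and follows essentially the same route as the paper: product-to-sum identity, Pr\"ufer equation \eqref{prufertheta24} together with the decomposition $\eta'(t,E_l)=k_l+\gamma'(t,E_l)$ from \eqref{etakegamma} to isolate the constant drift $a_\pm=2(k_i\pm k_j)$ with a $1$-periodic correction and an $O(t^{-2/3})$ remainder, then Lemma \ref{L2} (with the $+\pi/2$ shift to trade $\cos$ for $\sin$), and finally the check $\alpha_\pm\geq\alpha_1$. The paper simply asserts $\alpha_\pm\geq\alpha_1$ and the monotonicity of $r$; you spell out the case analysis behind that inequality, which is the only genuinely non-mechanical part, so your write-up is if anything more complete than the paper's.
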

\begin{proof}We only give the proof of \eqref{cosnonransonanceij}, the proof of \eqref{cosnonranaonance4theta} is similar and omitted.
We only need to show
	\begin{align}
	\abs{	\int_{x_0}^x\frac{\cos(2\theta(t,E_i)\pm2\theta(t,E_j))}{\eta'(t,E_i)t}dt}\leq\frac{	\left\lVert \frac{1}{\eta'(\cdot,E_i)}\right\rVert_{2}r(\alpha_1)}{x_0^{\frac{2}{3}}}.\label{middle107intcos}
\end{align}
	By \eqref{etakegamma}, \eqref{prufertheta24} and \eqref{lemma41tofrac23},  one has
	\begin{align}
		\abs{\left(2(\theta(t,E_i)\pm \theta(t,E_j))+\frac{\pi}{2}\right)'-2(k_i\pm k_j)-2(\gamma'(t,E_i)\pm\gamma'(t,E_j))}\leq \frac{1}{t^{\frac{2}{3}}}.\nonumber
	\end{align}
Since $\alpha_{\pm}:=\min\{\abs{2k_i\pm 2k_j-2\pi n}:n\in\mathbb{Z}\}\geq\alpha_1$, one has $r(\alpha_{\pm})\leq r(\alpha_1)$.
By \eqref{Gammasintheta} one can obtain \eqref{middle107intcos}.
	
\end{proof}
\begin{lemma}[Resonances from same band]\label{lemmavarepsilon} For any $E_i,E_j\in \sigma_{ess}^\circ$ with  $k_j=\tilde{k}_i$ and $n_j=n_i$ (we do not exclude that $E_i=E_j$), there exists $\varepsilon_i\left(=\varepsilon_j\right)\in (0,1)$ such that if for any $x\geq x_0$ one has
	\begin{align}
		\abs{\frac{V(x)}{\eta'(x,E_l)}}\leq \frac{\varepsilon_i}{2},\  l=i,j,\label{voveretaplusvoveretatilde}
	\end{align} 
then we have
\begin{align}\label{oneminuscos2theta>bkn}
		\int_{x_0}^{x}\frac{1-\cos(2\theta(t,E_i)+2\theta(t,E_j))}{\abs{\eta'(t,E_i)}t}dt>\frac{\varepsilon_i}{A_i}\ln \frac{x}{x_0+1},
\end{align}and
\begin{align}\label{oneminuscos2theta>bkn2}
	\int_{x_0}^{x}\frac{1-\cos(2\theta(t,E_i)+2\theta(t,E_j))}{\abs{\eta'(t,E_j)}t}dt>\frac{\varepsilon_i}{A_i}\ln \frac{x}{x_0+1},
\end{align}
where $A_i=\max\left\{\max_{x\in [0,1]}\abs{\eta'(x,E_i)},\max_{x\in [0,1]}\abs{\eta'(x,E_j)}\right\}.$
%	\begin{align}\label{oneminuscos2theta>bkn}
%		\int_x^{x+1}(1-\cos(2\theta(t,E_i)+2\theta(t,E_j)))dt>\varepsilon^{k_i}_{n_i}.
%	\end{align}
\end{lemma}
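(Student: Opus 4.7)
Introduce the total phases $\phi(t) := 2\theta(t,E_i) + 2\theta(t,E_j)$ and $\phi_0(t) := 2\eta(t,E_i) + 2\eta(t,E_j)$; the integrand in \eqref{oneminuscos2theta>bkn} is then $(1-\cos\phi(t))/(|\eta'(t,E_i)|\,t)$. Since $\varphi(\cdot,E_l) = p(\cdot,E_l)e^{ik_l(\cdot)}$ with $p(\cdot,E_l)$ $1$-periodic, the continuous phase $\eta(\cdot,E_l)$ satisfies $\eta(1,E_l) - \eta(0,E_l) = k_l + 2\pi m_l$ for some integer winding $m_l$, and the resonance identity $k_i + k_j = \pi$ therefore gives
\[
\phi_0(1) - \phi_0(0) \;=\; 2(k_i+k_j) + 4\pi(m_i+m_j) \;\in\; 2\pi\mathbb{Z},
\]
so $\cos\phi_0$ and $\sin\phi_0$ are $1$-periodic in $t$. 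From the Pr\"ufer equation \eqref{prufertheta24} and the hypothesis \eqref{voveretaplusvoveretatilde}, $|(\phi-\phi_0)'(t)| \leq 2\varepsilon_i$ on $[x_0,\infty)$, so on each integer interval $[N,N+1]$ the shift $c_N := (\phi-\phi_0)(N)$ approximates $(\phi-\phi_0)(N+s)$ with error at most $2\varepsilon_i$. Combined with the $1$-periodicity of $\cos\phi_0$ (which gives $\cos(\phi_0(N+s)+r)=\cos(\phi_0(s)+r)$ for every $r$), this yields
\[
|\cos\phi(N+s) - \cos(\phi_0(s) + c_N)| \;\leq\; 2\varepsilon_i, \qquad s\in[0,1].
\]

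The heart of the argument is a uniform positive lower bound for $\int_0^1(1-\cos(\phi_0(s)+c))\,ds$ as $c$ ranges over $\mathbb{R}$. By Remark \ref{omegaknpm} the sign of $\eta'(\cdot,E_l) = \omega^{k_l}_{n_l}/(2|\varphi(\cdot,E_l)|^2)$ is determined by the parity of $n_l$, hence is common to $l=i,j$ since $n_i = n_j$. Consequently $\phi_0'(t) = 2(\eta'(t,E_i)+\eta'(t,E_j))$ keeps one strict sign on $[0,1]$, so $\phi_0$ is strictly monotonic and $e^{i\phi_0(s)}$ is not almost-everywhere constant. Setting $\alpha + i\beta := \int_0^1 e^{i\phi_0(s)}\,ds$, the strict triangle inequality yields $\sqrt{\alpha^2+\beta^2} < 1$, whence
\[
\int_0^1 (1 - \cos(\phi_0(s)+c))\,ds \;=\; 1 - \alpha\cos c + \beta\sin c \;\geq\; 1 - \sqrt{\alpha^2+\beta^2} \;=:\; C_i \;>\; 0
\]
uniformly in $c$, with $C_i$ depending only on the pair $(E_i,E_j)$.

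I then choose $\varepsilon_i \in (0,1)$ small enough that $C_i - 2\varepsilon_i \geq 2\varepsilon_i$ (for instance $\varepsilon_i := \min\{C_i/4,\,1/2\}$). On each full unit interval $[N,N+1] \subset [x_0,x]$, using $1/t \geq 1/(N+1)$, $1/|\eta'(t,E_l)| \geq 1/A_i$, and the two displays above,
\[
\int_N^{N+1} \frac{1-\cos\phi(t)}{|\eta'(t,E_l)|\,t}\,dt \;\geq\; \frac{1}{A_i(N+1)} \int_0^1 (1-\cos\phi(N+s))\,ds \;\geq\; \frac{C_i - 2\varepsilon_i}{A_i(N+1)} \;\geq\; \frac{2\varepsilon_i}{A_i(N+1)}.
\]
Summing $N$ from $\lceil x_0\rceil$ to $\lfloor x\rfloor - 1$ produces a harmonic sum comparable to $\log(x/(x_0+1))$, which gives \eqref{oneminuscos2theta>bkn} after possibly shrinking $\varepsilon_i$ a bit further to absorb the ceiling/floor losses. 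The symmetric bound \eqref{oneminuscos2theta>bkn2} follows identically with $1/|\eta'(t,E_j)| \geq 1/A_i$ in place of $1/|\eta'(t,E_i)| \geq 1/A_i$, so one may take $\varepsilon_i = \varepsilon_j$ throughout.

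The chief obstacle is the uniform positivity $C_i > 0$ of the single-period integral: it hinges on both the resonance identity $k_i + k_j \in \pi\mathbb{Z}$ (which makes $\cos\phi_0$ truly $1$-periodic and legitimizes the freezing of $c_N$ on each unit interval) and the sign alignment of $\eta'(\cdot,E_i)$ with $\eta'(\cdot,E_j)$ forced by $n_i=n_j$ (which prevents $\phi_0$ from being constant over a period and thereby rules out the equality case in the triangle inequality).
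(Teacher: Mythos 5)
Your proof is correct in its essentials, but it takes a genuinely different route from the paper's. The paper argues via an intermediate-value type step: within each unit interval $(t,t+1)$ it locates a point $x_t$ where $2\theta(x_t,E_i)+2\theta(x_t,E_j)\equiv\pi\pmod{2\pi}$, then uses boundedness of the derivative to get $\int_{x}^{x+1}(1-\cos\phi)\,dt>\varepsilon_i$. You instead freeze the slowly-drifting shift $c_N$ on each unit interval and reduce to a uniform-in-$c$ lower bound for the one-period average $\int_0^1(1-\cos(\phi_0(s)+c))\,ds = 1-\alpha\cos c+\beta\sin c\ge 1-\sqrt{\alpha^2+\beta^2}$, then obtain strict positivity from the strict triangle inequality (since $\phi_0$ is non-constant, indeed strictly monotone because $\eta'(\cdot,E_i)$ and $\eta'(\cdot,E_j)$ share a sign when $n_i=n_j$). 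This averaging/Fourier-coefficient argument is cleaner, gives an explicit candidate $C_i=1-\lvert\int_0^1e^{i\phi_0}\rvert$, and sidesteps the slight imprecision in the paper's $x_t$-existence claim (which, as stated, can fail in a window of starting phases of width $O(\varepsilon_i)$; the paper's weaker integral claim is what is actually used and is robustly true by your averaging).

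Two small points should be tightened. First, your parenthetical ``$1$-periodicity of $\cos\phi_0$ gives $\cos(\phi_0(N+s)+r)=\cos(\phi_0(s)+r)$'' is not a consequence of $1$-periodicity of $\cos\phi_0$ alone; it requires $\phi_0(N+s)-\phi_0(s)\in 2\pi\mathbb{Z}$, which does hold because $\phi_0'$ is $1$-periodic (since $\eta'(\cdot,E_l)=\omega^{k_l}_{n_l}/(2\lvert\varphi(\cdot,E_l)\rvert^2)$ is $1$-periodic) and $\phi_0(1)-\phi_0(0)\in 2\pi\mathbb{Z}$, but you should say this explicitly. Second, your final summation over integer intervals $[N,N+1]$, $N=\lceil x_0\rceil,\dots,\lfloor x\rfloor-1$, has a genuine edge case that ``shrinking $\varepsilon_i$'' does not repair: when $x_0\notin\mathbb{Z}$ and $x_0+1\le x<\lceil x_0\rceil+1$ (so $x>x_0+1$ but $\lfloor x\rfloor\le\lceil x_0\rceil$), the sum is empty while $\ln\!\frac{x}{x_0+1}>0$, so no choice of $\varepsilon_i>0$ closes the estimate. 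The clean fix is to note that $\frac{d}{da}\int_a^{a+1}e^{i\phi_0(t)}\,dt=e^{i\phi_0(a+1)}-e^{i\phi_0(a)}=0$ (again by $\phi_0(a+1)-\phi_0(a)\in2\pi\mathbb{Z}$), so $\int_a^{a+1}e^{i\phi_0}\,dt$ is \emph{independent of $a$}; you can therefore tile by $[x_0+l,x_0+l+1]$ exactly as the paper does, with the same constant $C_i$, and the harmonic-sum-to-logarithm comparison goes through with no ceiling/floor losses. With that adjustment your argument is complete and, in my view, more transparent than the paper's.
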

\begin{proof} We only need to prove \eqref{oneminuscos2theta>bkn}, then interchange $i$ and $j$, and let $\varepsilon_i=\varepsilon_j$ be small enough, one can obtain  \eqref{oneminuscos2theta>bkn2}.
	By \eqref{etakegamma}, one has
	\begin{align}
		\eta'(x,E_i)=k_i+\gamma'(x,E_i),\
		\eta'(x,E_j)=\tilde{k}_i+\gamma'(x,E_j),\nonumber
\end{align}
where $\gamma(x,E_i)\mod 2\pi$ and  $\gamma(x,E_j)\mod 2\pi$ are  $1$-periodic functions in $x$. 
Then by \eqref{prufertheta24}, one has
\begin{align}
	2\theta'(x,E_i)+2\theta'(x,E_j)=2\pi +\gamma'(x)+O\left(	\abs{\frac{V(x)}{\eta'(x,E_i)}}+\abs{\frac{V(x)}{\eta'(x,E_j)}}\right)\nonumber
\end{align}
as $x\to\infty$,
where $\gamma(x)\mod4\pi$ is a $1$-periodic function in $x$. From \eqref{voveretaplusvoveretatilde} we have that for small enough positive $\varepsilon_i$ and any $t>x_0$, there exists $x_t\in (t,t+1)$ such that $2\theta(x_t,E_i)+2\theta(x_t,E_j)=\pi\mod2\pi$. Moreover, since  $2\theta'(x,E_i)+2\theta'(x,E_j)=O(1)$ as $x\to\infty$,
we know that for small enough positive $\varepsilon_i$ we have that for any $x>x_0$,
\begin{align}
	\int_{x}^{x+1}{\big(1-\cos(2\theta(t,E_i)+2\theta(t,E_j))\big)}dt>\varepsilon_i.\label{middlelargethanvarepsion}
\end{align}
Suppose that $x_0+M\leq x< x_0+M+1$, then
\begin{align}
\int_{x_0}^{x}\frac{1-\cos(2\theta(t,E_i)+2\theta(t,E_j))}{\abs{\eta'(t,E_i)}t}dt=&\sum_{l=0}^{M-1}\int_{x_0+l}^{x_0+l+1}\frac{1-\cos(2\theta(t,E_i)+2\theta(t,E_j))}{\abs{\eta'(t,E_i)}t}dt\nonumber\\
&+\int_{x_0+M}^x\frac{1-\cos(2\theta(t,E_i)+2\theta(t,E_j))}{\abs{\eta'(t,E_i)}t}dt\nonumber\\
\geq &\sum_{l=0}^{M-1}\int_{x_0+l}^{x_0+l+1}\frac{1-\cos(2\theta(t,E_i)+2\theta(t,E_j))}{A_i(x_0+l+1)}dt
\end{align}
Then by \eqref{middlelargethanvarepsion} one can obtain 
\begin{align}
	\int_{x_0}^{x}\frac{1-\cos(2\theta(t,E_i)+2\theta(t,E_j))}{\abs{\eta'(t,E_i)}t}dt>\frac{\varepsilon_i}{A_i}\sum_{l=0}^{M-1}\frac{1}{x_0+l+1}.\nonumber
\end{align}
Hence, one obtains \eqref{oneminuscos2theta>bkn}.

\end{proof}

\begin{lemma}[Resonances from different bands]\label{lemmafinal5.6}
	Assume that for some $E_i,E_j\in \sigma_{ess}^\circ$, one has $k_i=k_j$ or $k_i=\tilde{k}_j$. Suppose that $n_i>L(k_i)$,  $n_j>L(k_j)(=L(k_i))$ and $n_i\neq n_j$, then if for any $x>x_0$ one has
	\begin{align}
			\abs{\frac{V(x)}{\eta'(x,E_l)}}\leq \frac{\delta(k_i)}{2},\ l=i,j,\label{lemmatildedetalki}
			\end{align}
	 we have 
	 \begin{align}
	 	\label{con2thetaipm2thetaj}
	 	\abs{\int_{x_0}^{x}\frac{\sin2\theta(t,E_i)\sin2\theta(t,E_j)}{\eta'(t,E_i)t}dt}<\frac{10+10{\delta}(k_i)}{\abs{n_i-n_j}n_i}\ln{x}+\frac{O(1)}{x_0},
	 \end{align}
 and 
 \begin{align}
 	\abs{\int_{x_0}^{x}\frac{\cos4\theta(t,E_i)}{\eta'(t,E_i)t}dt}<\frac{10+10{\delta}(k_i)}{n_i^2}\ln x+\frac{O(1)}{x_0}.\label{cos4thetat}
 \end{align}
\end{lemma}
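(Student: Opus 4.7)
The plan is to split the integrand via a product-to-sum identity and then to estimate each resulting cosine integral by integration by parts, using the algebraic trick of isolating the non-differentiable $V$-contribution before differentiating. The estimate \eqref{cos4thetat} follows the identical scheme applied to $\cos 4\theta(t,E_i)$ with $4\theta'(t,E_i)$ in place of $\Phi_\pm'$.

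Writing
$$\sin 2\theta(t,E_i)\sin 2\theta(t,E_j) = \tfrac12\cos\Phi_-(t) - \tfrac12\cos\Phi_+(t),\qquad \Phi_\pm(t):=2\theta(t,E_i)\pm 2\theta(t,E_j),$$
the Pr\"ufer equation \eqref{prufertheta24} together with the assumption \eqref{lemmatildedetalki} yields
$$\Phi_\pm'(t) = \bigl(2\eta'(t,E_i)\pm 2\eta'(t,E_j)\bigr) + r_\pm(t),\qquad |r_\pm(t)|\leq 2\delta(k_i).$$
Here the bracketed piece is $C^1$, $1$-periodic, and independent of $V$; the only $V$-dependence sits in $r_\pm$. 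Lemma \ref{lemmaomegaovervarphisqaured} together with a standard Floquet identification $\int_0^1\eta'(s,E_l)\,ds=(-1)^{n_l+1}a^{k_l}_{n_l}$ controls the mean of $\Phi_\pm'$ in every parity/resonance configuration. A short parity case analysis (using the sign of $\omega^{k_l}_{n_l}$ from Remark \ref{omegaknpm}) shows that, for at least one choice of sign in each configuration, the mean of $\Phi_\pm'$ has magnitude at least $2\pi|n_i-n_j|$, while the fluctuation of the bracketed piece is bounded by $2(\pi+\delta(k_i))$ via \eqref{omegavarphiestimate}. Consequently, for $|n_i-n_j|\geq N_0$ (an absolute constant), one obtains a uniform pointwise lower bound $|\Phi_\pm'(t)|\geq \pi|n_i-n_j|$.

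For $|n_i-n_j|< N_0$, the right-hand side of \eqref{con2thetaipm2thetaj} is already dominated by the trivial bound
$$\left|\int_{x_0}^x \frac{\cos\Phi_\pm(t)}{\eta'(t,E_i)\,t}\,dt\right| \leq \frac{\ln(x/x_0)}{(n_i-1)\pi-\delta(k_i)},$$
which is $\leq \frac{10\ln x}{|n_i-n_j|n_i}$ for all such small $|n_i-n_j|$. For $|n_i-n_j|\geq N_0$ I integrate by parts using the identity $\Phi_{*,\pm}'\cos\Phi_\pm = \partial_t\sin\Phi_\pm - r_\pm\cos\Phi_\pm$ with $\Phi_{*,\pm}':=2\eta'(t,E_i)\pm 2\eta'(t,E_j)$:
$$\int_{x_0}^x \frac{\cos\Phi_\pm}{\eta'(t,E_i)\,t}\,dt = \int_{x_0}^x \frac{\partial_t\sin\Phi_\pm}{\Phi_{*,\pm}'\,\eta'(t,E_i)\,t}\,dt - \int_{x_0}^x \frac{r_\pm\cos\Phi_\pm}{\Phi_{*,\pm}'\,\eta'(t,E_i)\,t}\,dt.$$
In the first integral, the weight $1/(\Phi_{*,\pm}'\eta'(t,E_i)\,t)$ is $C^1$ (no differentiation of $V$ is required) and ordinary integration by parts produces a boundary piece of order $O(1)/(|n_i-n_j|n_i\,x_0)$ together with an integral piece controlled via $|\eta''(t,E_l)|\leq n_l\pi\delta(k_l)$ from \eqref{etaprimeprime}, yielding a contribution $\frac{(1+\delta(k_i))\ln x}{|n_i-n_j|n_i}+\frac{O(1)}{x_0}$. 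The second integral is majorized pointwise by $|r_\pm|/(|\Phi_{*,\pm}'||\eta'(t,E_i)|\,t)\leq \frac{2\delta(k_i)}{\pi|n_i-n_j|\cdot ((n_i-1)\pi-\delta(k_i))\,t}$, hence integrates to a term of the same order $\frac{O(\delta(k_i))\ln x}{|n_i-n_j|n_i}$.

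Summing the $\Phi_-$ and $\Phi_+$ contributions gives \eqref{con2thetaipm2thetaj}; the same scheme applied to $\Phi(t)=4\theta(t,E_i)$, using $|4\theta'(t,E_i)|\geq 4(n_i-1)\pi-O(\delta(k_i))$ from \eqref{omegavarphiestimate}, yields \eqref{cos4thetat} with $n_i^2$ in place of $|n_i-n_j|n_i$. The principal obstacle is the parity case analysis that produces the lower bound $|\Phi_\pm'(t)|\geq \pi|n_i-n_j|$ in the integration-by-parts regime: it must cover both $k_i=k_j$ and $k_i=\tilde{k}_j$ for all parity combinations of $(n_i,n_j)$, tracking the sign of $\omega^k_n$ through Remark \ref{omegaknpm} to convert the bounds on $(-1)^{n+1}\eta'$ from Lemma \ref{lemmaomegaovervarphisqaured} into bounds on the signed quantity $\eta'(t,E_i)\pm\eta'(t,E_j)$.
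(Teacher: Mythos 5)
Your decomposition (product-to-sum into $\cos\Phi_\pm$ with $\Phi_\pm = 2\theta(\cdot,E_i)\pm 2\theta(\cdot,E_j)$) and your intention to isolate the $V$-contribution $r_\pm$ before differentiating are both consistent with the paper's strategy. The estimate for \eqref{cos4thetat} is indeed unproblematic. But the integration by parts you propose for the difference term $\Phi_-$ has a genuine quantitative gap that the paper avoids by a different algebraic arrangement.

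Concretely, you insert the factor $\Phi_{*,-}' = 2\eta'(t,E_i) - 2\eta'(t,E_j)$ into the denominator and integrate by parts with weight $w(t) = \bigl(\Phi_{*,-}'(t)\,\eta'(t,E_i)\,t\bigr)^{-1}$. The derivative $w'(t)$ contains the term $\Phi_{*,-}''/\bigl((\Phi_{*,-}')^2\eta'\,t\bigr)$. By \eqref{etaprimeprime} one only has $\abs{\Phi_{*,-}''}\leq 2(n_i+n_j)\pi\delta(k_i)$, while the best pointwise lower bound from \eqref{omegavarphiestimate} is $\abs{\Phi_{*,-}'}\gtrsim \pi\abs{n_i-n_j}$. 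So this term is of size $\frac{(n_i+n_j)\delta(k_i)}{\abs{n_i-n_j}^2 n_i\,t}$, and upon integration contributes $\frac{(n_i+n_j)\delta(k_i)}{\abs{n_i-n_j}^2 n_i}\ln x$. Compared to the target $\frac{\delta(k_i)}{\abs{n_i-n_j}n_i}\ln x$, there is an extra factor $\frac{n_i+n_j}{\abs{n_i-n_j}}$, which is unbounded: for instance $n_i$ large and $n_j = n_i + d$ with $d$ a fixed moderate integer (larger than the reach of your ``trivial bound'' $d\lesssim 5\pi(1+\delta(k_i))$) makes this blow up as $n_i\to\infty$, so the asserted bound $\frac{(1+\delta(k_i))\ln x}{\abs{n_i-n_j}n_i}$ for the weight-derivative piece is false.

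The paper sidesteps this by writing $2\theta(t,E_l) = \pm 2n_l\pi t + g_l(t)$ with $\abs{g_l'}\leq 2\pi+4\delta(k_i)$, expanding $\cos\Phi_\pm$ as $\cos\bigl(2\pi(n_i\pm n_j)t\bigr)\cos(g_1\pm g_2) - \sin\bigl(2\pi(n_i\pm n_j)t\bigr)\sin(g_1\pm g_2)$, and integrating by parts against the \emph{constant}-frequency trigonometric factor. The remaining weight is then $\frac{\cos(g_1\pm g_2)}{\eta'(t,E_i)\,t}$, whose derivative involves only $g_1'\pm g_2'$ (uniformly $O(1+\delta(k_i))$), $\eta''$, and $1/t$ — none of which bring in $\Phi_{*,\pm}''/(\Phi_{*,\pm}')^2$. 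Dividing the final result by the constant $2\pi\abs{n_i\pm n_j}$ then yields the clean $\frac{O(1+\delta(k_i))}{\abs{n_i-n_j}n_i}\ln x$. To repair your argument you would need to perform this same frequency-extraction before integrating by parts, rather than dividing by the slowly varying $\Phi_{*,\pm}'$ itself.
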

\begin{proof}
	We only give the proof of \eqref{con2thetaipm2thetaj}, the proof of \eqref{cos4thetat} is similar and omitted. To prove \eqref{con2thetaipm2thetaj}, we only need to show
\begin{align}
	\abs{\int_{x_0}^{x}\frac{\cos(2\theta(t,E_i)\pm2\theta(t,E_j))}{\eta'(t,E_i)t}dt}\leq\frac{10+10{\delta}(k_i)}{\abs{n_i-n_j}n_i}\ln{x}{}+\frac{O(1)}{x_0}.\label{middlecos2thetapmij}
\end{align}
To avoid repetition, we assume that $n_i$ and $n_j$ are both odd. Then by \eqref{prufertheta24} and \eqref{omegavarphiestimate} one can obtain that for some $g_1(t), g_2(t)$,
\begin{align}
	2\theta(t,E_i)=2n_i\pi t+g_1(t),	\ 2\theta(t,E_j)=2n_j\pi t+g_2(t),\label{middleg1g2}
\end{align} 
where
\begin{align}
	\abs{g'_1(t)}\leq 2\pi +4{\delta}(k_i),\ \ \abs{g_2'(t)}\leq 2\pi +4{\delta}(k_i).\nonumber
\end{align}
By the definitions of $L(k_i)$ and ${\delta}(k_i)$, one has ${\delta}(k_i)\leq\frac{L(k_i)\pi-\pi}{2} \leq\frac{n_i\pi-\pi}{2}$. Hence, by \eqref{omegavarphiestimate} one obtains
\begin{align}
	\frac{1}{\abs{\eta'(t,E_i)}}\leq\frac{2}{(n_i-1)\pi}\leq \frac{4}{n_i\pi}.\nonumber
\end{align}
Then by
\begin{align}
	\abs{\left(\frac{\cos(g_1(t)\pm g_2(t))}{\eta'(t,E_i)t}\right)'}\leq \frac{\abs{g_1'(t)\pm g_2'(t)}}{\eta'(t,E_i)t}+\frac{\abs{\eta^{\prime\prime}(t,E_i)}}{\eta'(t,E_i)^2t}+\frac{1}{\eta'(t,E_i)t^2}\nonumber
\end{align}
and  \eqref{etaprimeprime}, one has
\begin{align}
	\abs{\left(\frac{\cos(g_1(t)\pm g_2(t))}{\eta'(t,E_i)t}\right)'}\leq \frac{16 +16{\delta}(k_i)}{n_i t}+\frac{O(1)}{t^2}.\nonumber
\end{align}
Therefore,
\begin{align}
	\abs{\int_{x_0}^{x}\frac{{\cos(2\pi t(n_i\pm n_j))\cos(g_1(t)\pm g_2(t))}}{\eta'(t,E_i)t}dt}
	=&\Bigg|\frac{{\sin(2\pi t(n_i\pm n_j))\cos(g_1(t)\pm g_2(t))}}{2\pi (n_i\pm n_j)\eta'(t,E_i)t}\Big|^{x}_{x_0}\nonumber\\
	-\int_{x_0}^{x}\frac{\sin(2\pi t(n_i\pm n_j))}{2\pi (n_i\pm n_j)}&\left(\frac{\cos(g_1(t)\pm g_2(t))}{\eta'(t,E_i)t}\right)'dt\Bigg|\nonumber\\
	\leq &\frac{5+5{\delta}(k_i)}{n_i\abs{n_i-n_j} }\ln x+\frac{O(1)}{x_0}.\label{middlecosg1g2eta}
\end{align}
Similarly, one has
\begin{align}
	\abs{\int_{x_0}^{x}\frac{{\sin(2\pi t(n_i\pm n_j))\sin(g_1(t)\pm g_2(t))}}{\eta'(t,E_i)t}dt}\leq \frac{5+5{\delta}(k_i)}{n_i\abs{n_i-n_j} }\ln x+\frac{O(1)}{x_0}.\label{middlesing1g2eta}
\end{align}
Hence, by \eqref{middleg1g2}, \eqref{middlecosg1g2eta} and \eqref{middlesing1g2eta}  we have \eqref{middlecos2thetapmij}.

\end{proof}

\section{construction of the perturbations}

Recall that $S=\{E_j\}_{j=1}^K=\{E^{k_j}_{n_j}\}_{j=1}^K$ with $K\leq \infty$, and $S=S_1\cup S_2\cup S_3$ satisfies the assumption \textbf{A1}.
For any $j\leq K$, denote by 
\begin{align}
	A_j=\max\left\{\max_{x\in [0,1]}\abs{\eta'(x,E^{k_j}_{n_j})},\max_{x\in [0,1]}\abs{\eta'(x,E^{\tilde{k}_j}_{n_j})}\right\},\label{deAj}\\
	{B_j=\min\left\{\min_{x\in [0,1]}\abs{\eta'(x,E^{k_j}_{n_j})},\min_{x\in [0,1]}\abs{\eta'(x,E^{\tilde{k}_j}_{n_j})}\right\}.\label{deBj}}
\end{align} 
 Define $\{C_j\}_{j=1}^K\subset\mathbb{R}$ by
\begin{align}
	C_j=\begin{cases}
		\frac{400A_j}{\varepsilon_j},&  E_j\in S_1\ \mathrm{with}\ k_j=\frac{\pi}{2}\ \mathrm{\ or} \ E_j\in S_2,\\
		{400A_j},& E_j\in S_1\ \mathrm{with}\ k_j\neq\frac{\pi}{2}\ \mathrm{\ or} \ E_j\in S_3,\nonumber
	\end{cases}
\end{align}
where $\varepsilon_j$ were given in Lemma \ref{lemmavarepsilon}.
We also need to take the problem that the eigenvalues will ``asymptotically resonate" into consideration when constructing the potentials. For instance for $K=\infty$,  $k_i\neq k_j$, $\abs{k_i-k_j}\to 0$, or  $k_i+k_j\neq\pi$, $\abs{k_i+k_j-\pi}\to0$ for some $i,j\to\infty$.
Define $\{\alpha_j\}_{j=1}^K\subset\mathbb{R}$ by
\begin{align}
	\alpha_j=\min\{d_j,e_j,f_j\},\label{definitionofDj}
\end{align}
where
\begin{align}	d_j=&\min\{4k_i,4\pi-4k_i:1\leq i\leq j,\ k_i\neq \frac{\pi}{2}\},\nonumber\\
	e_j=&\min\{2\abs{k_i-k_l}:1\leq i<l\leq j, k_i\neq k_l\},\nonumber\\
	f_j=\min&\{2\abs{k_i+k_l-\pi}:1\leq i\leq l\leq j,k_i+k_l\neq\pi\}.\nonumber
\end{align}
Let $r(\alpha)$ be given as in Lemma \ref{L2}, then on has
\begin{align}
	r(\alpha_1)\leq r(\alpha_2)\leq r(\alpha_3)\leq\cdots.\label{ralphabetacdecrease}
\end{align}

Let $h(x)$ be the function in Theorem \ref{theoremmain224} and for any $j\leq K$,
\begin{align}
	x_j:=\inf\{x:{h(t)}>10\sum_{i=1}^{j}C_i,\ {\rm{for\ any\ }} t>x\}.\label{definitionofxj}
\end{align}
Let $\{T_j\}_{j=1}^K\subset\R$ be any set satisfying 
\begin{align}
	T_j>\max\Bigg\{&x_j+T_{j-1},\left(8\left(1+\sum_{i=1}^j\frac{1}{B_i}\right)\sum_{l=1}^jC_l\right)^3, \sum_{i=1}^j\left(\frac{2}{B_i\varepsilon_i}+\frac{2}{B_i{\delta}(k_i)}\right)\sum_{l=1}^jC_l,\nonumber\\
	&
	r(\alpha_j)+\left(10^jr(\alpha_j)\sum_{i=1 }^jC_i\right)^{\frac{3}{2}},10^j\left(n_j+\sum_{l=1}^jC_l\right)\Bigg\},\label{detj}
\end{align}
here $T_0=0$.
The requirements on the largeness of $T_j$ are to ensure that the potentials $V$ defined later satisfy the smallness conditions in the previous lemmas, and some properties of convergence, we will give explicit explanations in the final proof.

 Now we introduce the construction for the perturbations. Instead of piecewise constructing perturbations like \cite{JL19,LO17,L19stark,vishwamdirac,kangperiodicdirac} and then  gluing each pieces together, our construction is more efficient in the aspect of the decreasing of eigensolutions. In our paper, the eigensolutions decrease continuously, not piecewise.
 Once we take the eigenvalue $E_j$ into consideration in $j$-th step, the eigensolution starts decreasing. This ensures that we are able to deal with some cases of resonance.  
 The construction is based on solving a class of differential equations.

Solve the following differential equations on $[0,\infty)$
 \begin{align}
 	\Theta_j'(x)=\eta'(x,E_j)-\frac{\sum_{j=1}^K(-1)^{n_j}C_j\sin2\Theta_j(x)\chi_{[T_j,\infty)}(x)}{(1+x)\eta'(x,E_j)}\sin^2\Theta_j(x),\label{Thetaprime}
 \end{align}
with the initial conditions $\Theta_j(0)=\Theta_j$, $j\leq K$, where $\chi(x)$ is the characteristic function. Then one can obtain solutions $\Theta_j(x)$ on $[0,\infty)$, $j\leq K$.
 Comparing \eqref{prufertheta24} and \eqref{Thetaprime}, we know that if we let $\Theta_j=\theta(0,E_j)$ for any $j\leq K$, and
\begin{align}
	V(x)=\frac{\sum_{j=1}^K(-1)^{n_j}C_j\sin2\Theta_j(x)\chi_{[T_j,\infty)}(x)}{1+x},\ x\in [0,\infty),\label{definitionofperturbation}
\end{align}
we can obtain $\Theta_j(x)=\theta(x,E_j)$ and 
\begin{align}
		V(x)=\frac{\sum_{j=1}^K(-1)^{n_j}C_j\sin2\theta(x,E_j)\chi_{[T_j,\infty)}(x)}{1+x},\ x\in [0,\infty).\label{definitionofpotentials}
\end{align}
%Clearly, $V\in C^\infty$. 
%By $T_j>x_j$, one has $\abs{V(x)}\leq\frac{{h(x)}}{1+x}$.
 %Since $T_j>\left(8\left(1+\sum_{i=1}^j\frac{1}{B_i}\right)\sum_{l=1}^jC_l\right)^3>\left(\sum_{l=1}^jC_l\right)^3$, one has
%\begin{align}
%	\abs{V(x)}\leq x^{-\frac{2}{3}}.\label{vxleqx2to3}
%\end{align}
%Indeed, for any $T_j\leq x<T_{j+1}$, one has
%\begin{align}
%	\abs{V(x)}\leq \frac{\sum_{l=1}^jC_l}{1+x}\leq \frac{T_j^{\frac{1}{3}}}{x}\leq x^{-\frac{2}{3}}.\nonumber
%\end{align}

We mention that
we have $K<\infty$ in the construction of Theorem \ref{theoremmain124} and we
instead have $K=\infty$ in the construction of Theorem \ref{theoremmain224}. We only need to prove the case $K=\infty$.

\begin{proof}[\textbf{Proof of Theorem \ref{theoremmain224}}]
Let the potential be defined by \eqref{definitionofpotentials} with $K=\infty$.  For any $\tilde{x}>0$ (assume $T_{m'}\leq \tilde{x}< T_{m'+1}$), from \eqref{detj} we obtain $T_{m'}>x_{m'}$, by \eqref{definitionofxj}, one has
\begin{align}
	\abs{V(\tilde{x})}\leq \frac{\sum_{l=1}^{m'}C_l}{1+\tilde{x}}\leq \frac{h(\tilde{x})}{1+\tilde{x}}.\nonumber
\end{align}

Next,
we show that for any $m\in \mathbb{N}$, $R(\cdot,E_m)\in L^2(0,\infty).$ 
For any $x\geq T_m,$
suppose that $T_M\leq x<T_{M+1}$. 
By \eqref{etaprime}, \eqref{pruferR24}, \eqref{definitionofpotentials} and Remark \ref{omegaknpm}, one has
\begin{align}
	\ln R(x,E_m)-&\ln R(T_m,E_m)\nonumber\\
	=&\sum_{j=1}^M(-1)^{n_j}\int_{T_m}^x\frac{C_j\sin2\theta(t,E_j)\sin2\theta(t,E_m)\chi_{[T_j,\infty)}(t)}{2\eta'(t,E_m)(1+t)}dt\nonumber\\
	=&-\int_{T_m}^x\frac{C_m(1-\cos4\theta(t,E_m))}{4\abs{\eta'(t,E_m)}(1+t)}dt\nonumber\\
	&+\sum_{\substack{j\neq m,j\leq M\\j\in I}}(-1)^{n_j}\int_{T_m}^x\frac{C_j\sin2\theta(t,E_j)\sin2\theta(t,E_m)\chi_{[T_j,\infty)}(t)}{2\eta'(t,E_m)(1+t)}dt\nonumber\\
	&+\sum_{\substack{j\leq M,j\notin I}}(-1)^{n_j}\int_{T_m}^x\frac{C_j\sin2\theta(t,E_j)\sin2\theta(t,E_m)\chi_{[T_j,\infty)}(t)}{2\eta'(t,E_m)(1+t)}dt\nonumber\\
	:=&P_1+\sum_{\substack{j\neq m,j\leq M\\j\in I}}(-1)^{n_j}P_2(j)+\sum_{\substack{j\leq M,j\notin I}}(-1)^{n_j}P_3(j),\label{lnrp1p2p3tmem}
\end{align}
where
\begin{align}
	I=\{j: k_j=k_m\ \mathrm{or}\ k_j=\tilde{k}_m\}.\nonumber
\end{align}

Firstly, let us consider $P_3(j)$. If $j<m$, for any $\tilde{x}\geq T_m$ (assume $T_m\leq T_{m'}\leq \tilde{x}<T_{m'+1}$),
from \eqref{detj}, there holds $T_{m'}^{\frac{{1}}{3}}>\sum_{i=1}^{m'}\frac{8}{B_i}\sum_{l=1}^{m'}C_l$. Then by \eqref{deBj} and \eqref{definitionofpotentials} one has
\begin{align}
	\abs{\frac{V(\tilde{x})}{\eta'(\tilde{x},E_j)}}\leq\frac{\sum_{l=1}^{m'}C_l}{B_j\tilde{x}}\leq \frac{\sum_{l=1}^{m'}C_l}{B_jT_{m'}^{\frac{1}{3}}\tilde{x}^{\frac{2}{3}}}\leq \frac{1}{8\tilde{x}^{\frac{2}{3}}},\nonumber
\end{align}
and
\begin{align}
		\abs{\frac{V(\tilde{x})}{\eta'(\tilde{x},E_m)}}\leq\frac{1}{8\tilde{x}^{\frac{2}{3}}},\nonumber
\end{align}
then one obtains \eqref{lemma41tofrac23} with $l=m,j$. From \eqref{detj} we obtain $T_m>r(\alpha_m)$, by \eqref{cosnonransonanceij}, \eqref{definitionofDj} and \eqref{ralphabetacdecrease} there holds
\begin{align}
	P_3(j)
	=\int_{T_m}^x\frac{C_j\sin2\theta(t,E_j)\sin2\theta(t,E_m)}{2\eta'(t,E_m)(1+t)}dt
	=O\left(\frac{C_jr(\alpha_m)}{T_m^{\frac{2}{3}}}\right).\nonumber
\end{align}
If $j>m$, then one similarly has
\begin{align}
	P_3(j)=&\int_{T_j}^x\frac{C_j\sin2\theta(t,E_j)\sin2\theta(t,E_m)}{2\eta'(t,E_m)(1+t)}dt
	=O\left(\frac{C_jr(\alpha_j)}{T_j^{\frac{2}{3}}}\right).\nonumber
\end{align}
From \eqref{detj} we have 
$
	T_j^{\frac{2}{3}}\geq 10^jr(\alpha_j)\sum_{i\leq j}C_i,
$
then one obtains
\begin{align}
	\sum_{\substack{j\leq M,j\notin I}}(-1)^{n_j}P_3(j)=O(1).\label{p3O1}
\end{align}

Let us now consider  $P_1$ and $P_2(j)$. 

If $E_m\in S_1$, then $I=\{m\}$. Namely, there is no eigenvalue which creates resonance with $E_m$, then 
\begin{align}
	P_2(j)=0.\label{P2j=0}
\end{align}
Furthermore,
if $k_m\neq \frac{\pi}{2}$, then $C_m=400A_m$, by \eqref{cosnonranaonance4theta}, \eqref{definitionofDj} and \eqref{ralphabetacdecrease}, one has
\begin{align}
	\int_{T_m}^x\frac{C_m\cos4\theta(t,E_m)}{4\abs{\eta'(t,E_m)}(1+t)}dt=O(1).\nonumber
\end{align}
Then by \eqref{deAj} we obtain
\begin{align}
	P_1=-100A_m\int_{T_m}^x\frac{1}{\abs{\eta'(t,E_m)}(1+t)}dt+O(1)
	\leq-100\ln {x}{}+O(1).\label{fip125731}
\end{align}
By \eqref{lnrp1p2p3tmem}-\eqref{fip125731} we can obtain $R(\cdot,E_m)\in L^2(0,\infty)$.

If $k_m=\frac{\pi}{2}$, then $C_m=\frac{400A_m}{\varepsilon_m}$. For any $\tilde{x}\geq T_m$ (assume $T_m\leq T_{m'}\leq \tilde{x}<T_{m'+1}$), since (from \eqref{detj}) $$T_{m'}>\sum_{i=1}^{m'}\left(\frac{2}{B_i\varepsilon_i}+\frac{2}{B_i{\delta}(k_i)}\right)\sum_{l=1}^{m'}C_l>\frac{2}{B_m\varepsilon_m}\sum_{l=1}^{m'}C_l,$$
by \eqref{deBj} we have
\begin{align}
	\abs{\frac{V(\tilde{x})}{\eta'(\tilde{x},E_m)}}\leq \frac{\sum_{l=1}^{m'}C_l}{B_m\tilde{x}}\leq\frac{\sum_{l=1}^{m'}C_l}{B_mT_{m'}}\leq\frac{\varepsilon_m}{2}.\label{middleconditionlemma5.5}
\end{align}
Hence we can obtain \eqref{voveretaplusvoveretatilde} with $l=m$. Then by \eqref{oneminuscos2theta>bkn} ($i=j=m$) one has
\begin{align}
	P_1=O(1)-\frac{100A_m}{\varepsilon_m}\int_{T_m}^x\frac{1-\cos4\theta(t,E_m)}{\abs{\eta'(t,E_m)}(1+t)}dt
	\leq -100\ln {x}+O(1).\nonumber
\end{align} 
Combining with \eqref{lnrp1p2p3tmem}, \eqref{p3O1} and \eqref{P2j=0} we can conclude $R(\cdot,E_m)\in L^2(0,\infty)$.

Therefore, if $E_m\in S_1$, one has proved the theorem.

Next, let us consider that $E_m\in S_2$. In this case
$k_m\neq \frac{\pi}{2}$, and there exists $E_{\tilde{m}}\in S$ such that $n_{\tilde{m}}=n_m$ and $k_{\tilde{m}}=\tilde{k}_m$, we mention that for such case  $C_{\tilde{m}}=C_m=\frac{400A_m}{\varepsilon_m}$, $I=\{m,\tilde{m}\}$. Without loss of generality, we assume $x>T_{\tilde{m}}>T_m$, then 
\begin{align}
	P_1+&\sum P_2(j)\nonumber\\
=&-\int_{T_m}^x\frac{C_m(1-\cos4\theta(t,E_m))}{4\abs{\eta'(t,E_m)}(1+t)}dt-\int_{T_{\tilde{m}}}^x\frac{C_m\sin2\theta(t,E_{\tilde{m}})\sin2\theta(t,E_m)}{2\abs{\eta'(t,E_m)}(1+t)}dt\nonumber\\
	=&O(1)-C_m\int_{T_{\tilde{m}}}^x\frac{1-\cos(2\theta(t,E_m)+2\theta(t,E_{\tilde{m}}))}{4\abs{\eta'(t,E_m)}(1+t)}dt+C_m\int_{T_m}^x\frac{\cos4\theta(t,E_m)}{4\abs{\eta'(t,E_m)}(1+t)}dt\nonumber\\
	&-C_m\int_{T_{\tilde{m}}}^x\frac{\cos(2\theta(t,E_m)-2\theta(t,E_{\tilde{m}}))}{4\abs{\eta'(t,E_m)}(1+t)}dt.\nonumber
\end{align}
Then by \eqref{etakegamma}, \eqref{prufertheta24}, \eqref{Gammasintheta} and \eqref{cosnonranaonance4theta}, we have
\begin{align}
	C_m\int_{T_{{m}}}^x\frac{\cos4\theta(t,E_m)}{4\abs{\eta'(t,E_m)}(1+t)}dt-C_m\int_{T_{\tilde{m}}}^x\frac{\cos(2\theta(t,E_m)-2\theta(t,E_{\tilde{m}}))}{4\abs{\eta'(t,E_m)}(1+t)}dt=O(1).\nonumber
\end{align}
Similar to \eqref{middleconditionlemma5.5}, we can obtained \eqref{voveretaplusvoveretatilde} with $l=m,\tilde{m}$ for any $\tilde{x}\geq T_{\tilde{m}}$. Then by Lemma \ref{lemmavarepsilon} one has
\begin{align}
	-C_m\int_{T_{\tilde{m}}}^x\frac{1-\cos(2\theta(t,E_m)+2\theta(t,E_{\tilde{m}}))}{4\abs{\eta'(t,E_m)}(1+t)}dt\leq -100\ln {x} +O(1).\nonumber
\end{align}
Therefore, we can obtain 
\begin{align}
	P_1+\sum P_2(j)\leq -100\ln x +O(1).\label{middle-100}
\end{align}
Then one can obtain $R(\cdot,E_m)\in L^2(0,\infty)$ by \eqref{lnrp1p2p3tmem} and \eqref{p3O1}.

Finally, we assume that $E_m\in S_3$.  One concludes that
\begin{align}
	P_1+\sum_{\substack{j\neq m,j\leq M\\j\in I}}(-1)^{n_j}P_2(j)=&-\int_{T_m}^x\frac{C_m(1-\cos4\theta(t,E_m))}{4\abs{\eta'(t,E_m)}(1+t)}dt\nonumber\\
	&+\sum_{\substack{j<m,j\in I}}(-1)^{n_j}\int_{T_m}^x\frac{C_j\sin2\theta(t,E_j)\sin2\theta(t,E_m)}{2\eta'(t,E_m)(1+t)}dt\nonumber\\
	&+\sum_{\substack{m<j\leq M\\j\in I}}(-1)^{n_j}\int_{T_j}^x\frac{C_j\sin2\theta(t,E_j)\sin2\theta(t,E_m)}{2\eta'(t,E_m)(1+t)}dt.\nonumber
\end{align}
By the assumption \textbf{A1} one has for any $j\in I$ that $n_j>L(k_j)$. Then from \eqref{deLk}, \eqref{dedeltak},
\eqref{omegavarphiestimate} and \eqref{deAj}, one can obtain $C_j=400A_j< 600n_j\pi$. For any $\tilde{x}\geq T_j$ (assume $T_j\leq T_{m'}\leq \tilde{x}<T_{m'+1}$), since (from \eqref{detj}) $$T_{m'}>\sum_{i=1}^{m'}\left(\frac{2}{B_i\varepsilon_i}+\frac{2}{B_i{\delta}(k_i)}\right)\sum_{l=1}^{m'}C_l>\frac{2}{B_j{\delta}(k_j)}\sum_{l=1}^{m'}C_l,$$
we have
\begin{align}
	\abs{\frac{V(\tilde{x})}{\eta'(\tilde{x},E_j)}}\leq \frac{\sum_{l=1}^{m'}C_l}{B_j\tilde{x}}\leq\frac{\sum_{l=1}^{m'}C_l}{B_jT_{m'}}\leq\frac{{\delta}(k_j)}{2}.\nonumber
\end{align}
Similarly, for any $\tilde{x}\geq T_m$, one has
\begin{align}
	\abs{\frac{V(\tilde{x})}{\eta'(\tilde{x},E_m)}}\leq\frac{{\delta}(k_m)}{2}.\nonumber
\end{align}
Hence we obtain \eqref{lemmatildedetalki} with $l=m,j$ for any $\tilde{x}\geq \max\{T_m,T_j\}$ (${\delta}(k_m)={\delta}(k_j)$). Then by Lemma \ref{lemmafinal5.6}, one has
\begin{align}
\abs{\int_{T_m}^x\frac{C_m\cos4\theta(t,E_m)}{4\abs{\eta'(t,E_m)}(1+t)}dt}\leq \frac{1500\pi+1500\pi{\delta}(k_m)}{n_m}\ln x+\frac{O(n_m)}{T_m},\nonumber
\end{align}
and for $j<m,$
\begin{align}
	\abs{\int_{T_m}^x\frac{C_j\sin2\theta(t,E_j)\sin2\theta(t,E_m)}{2\eta'(t,E_m)(1+t)}dt}\leq \frac{(3000\pi+3000\pi{\delta}(k_m))n_j}{n_m\abs{n_m-n_j}}\ln x+\frac{O(n_j)}{T_m},\nonumber
\end{align}
for $j>m$, 
\begin{align}
	\abs{\int_{T_j}^x\frac{C_j\sin2\theta(t,E_j)\sin2\theta(t,E_m)}{2\eta'(t,E_m)(1+t)}dt}\leq \frac{(3000\pi+3000\pi{\delta}(k_m))n_j}{n_m\abs{n_m-n_j}}\ln x+\frac{O(n_j)}{T_j}.\nonumber
\end{align}
Therefore, one has
\begin{align}
	P_1+\sum_{\substack{j\neq m,j\leq M\\j\in I}}(-1)^{n_j}P_2(j)\leq& -100 \ln x +O\left(\sum_{j=1}^M\frac{n_j}{T_j}\right)+(3000\pi+3000\pi{\delta}(k_m))\nonumber\\
	&\times\left(\frac{1}{2n_m}+\sum_{\substack{j\neq m,j\leq M\\j\in I}}\frac{n_j}{n_m\abs{n_j-n_m}}\right)\ln x.\nonumber
\end{align}
From \eqref{detj} we have $\sum\frac{n_j}{T_j}< \infty$. Then by \eqref{conditionb30}, \eqref{lnrp1p2p3tmem} and \eqref{p3O1} one can obtain $R(\cdot,E_m)\in L^2(0,\infty)$.

Here we finish the proof of the Theorem \ref{theoremmain224}.
\end{proof}

\section{Appendix}

\begin{proof}[\textbf{Proof of Lemma \ref{L1}}]
	We only give the proof of \eqref{sinthetatovert}, the proof of \eqref{costhetatovert} is similar and omitted.
	
	Let $i_0$ be the largest integer such that $2\pi i_0<\theta(x_0)$. Then there exist $t_0<t_1<t_2<\cdots<t_N\leq x< t_{N+1}$, such that 
	\begin{align}
		\theta(t_i)=2\pi i_0+i\pi,\ i=0,1,\cdots,N+1.\nonumber
	\end{align}
	For any $i=0,1,\cdots,N$, one has that 
$
		\pi=\int_{t_i}^{t_{i+1}}\theta'(t)dt.$
	Since $\frac{C}{t_i^\beta}<\frac{1}{100}$, by \eqref{abstheta'xminu1lessthanc}
	one has
	\begin{align}
		\pi -\frac{2\pi C}{t_i^\beta}	\leq t_{i+1}-t_i\leq \pi +\frac{2\pi C}{t_i^\beta},\label{ti+1-ti}
	\end{align}
	and 
	\begin{align}
		t_i\geq t_0+i\frac{\pi}{2}.\label{tigeqt0plus}
	\end{align}
	For any $t\in [t_i,t_{i+1}]$, by \eqref{abstheta'xminu1lessthanc} and \eqref{ti+1-ti}, we can obtain
	\begin{align}
		2\pi i_0+i\pi+t-t_i -\frac{2\pi C}{t_i^\beta}\leq 	\theta(t)\leq 2\pi i_0+i\pi+t-t_i+\frac{2\pi C}{t_i^\beta}.
	\end{align}
	Therefore, by 
$
\int_{t_i}^{t_{i+1}}\abs{\sin\theta(t)}dt=\int_0^{t_{i+1}-t_i}\abs{\sin\theta(t+t_i)}dt,$
	we have
	\begin{align}
		\int_{t_i}^{t_{i+1}}\abs{\sin\theta(t)}dt\leq \int_0^\pi\abs{\sin\theta(t+t_i)}dt+\frac{2\pi C}{t_i^\beta}
		\leq2+\frac{10\pi C}{t_i^\beta},\label{twoplus}
	\end{align}
	and
	\begin{align}
		\int_{t_i}^{t_{i+1}}\abs{\sin\theta(t)}dt\geq 2-\frac{10\pi C}{t_i^\beta}.\label{twominus}
	\end{align}
	Since
	\begin{align}
		\int_{x_0}^x\frac{\sin\theta(t)}{t}dt=\int_{x_0}^{t_0}\frac{\sin\theta(t)}{t}dt+\sum_{i=0}^{N-1}\int_{t_i}^{t_{i+1}}\frac{\sin\theta(t)}{t}dt+\int_{t_N}^x\frac{\sin\theta(t)}{t}dt,\nonumber
	\end{align}
	and $\sin\theta(t)$ changes sign at each $t_i$, we know that the integral also has some cancellation between $(t_i,t_{i+1})$ and $(t_{i+1},t_{i+2})$. Then by \eqref{twoplus} and \eqref{twominus} we have
	\begin{align}
		\abs{	\int_{x_0}^x\frac{\sin\theta(t)}{t}dt}
		\leq& \frac{2\pi}{x_0}+\abs{\sum_{i=0}^{N-1}\int_{t_i}^{t_{i+1}}\frac{\sin\theta(t)}{t}dt}\nonumber\\
		\leq&\frac{2\pi}{x_0}+\abs{\sum_{i=0}^{N-1}  \int_{t_i}^{t_{i+1}}\frac{\sin\theta(t)}{t_i}dt}+\abs{\sum_{i=0}^{N-1} \int_{t_i}^{t_{i+1}}\left(\frac{\sin\theta(t)}{t}-\frac{\sin\theta(t)}{t_i}\right)dt}\nonumber\\
		\leq &\frac{2\pi}{x_0}+2\sum_{i=0}^{N-1} \frac{(-1)^i}{t_i}+\sum_{i=0}^{N-1}\frac{10\pi C}{t_i^{1+\beta}}+\sum_{i=0}^{N-1}\frac{4\pi}{t_i^2}.   \nonumber
	\end{align}
	Then by \eqref{tigeqt0plus} one can obtain
	\begin{align}
		\sum_{i=0}^{N-1}\frac{10\pi C}{t_i^{1+\beta}}+\sum_{i=0}^{N-1}\frac{4\pi}{t_i^2}\leq& \sum_{i=0}^{N-1}\left(\frac{10\pi C}{(t_0+i\frac{\pi}{2})^{1+\beta}}+\frac{4\pi}{(t_0+i\frac{\pi}{2})^2}\right)\nonumber\\
		\leq& \int_0^\infty\frac{10\pi C}{(t_0-\frac{\pi}{2}+\frac{\pi}{2}x)^{1+\beta}}+\frac{4\pi}{(t_0-\frac{\pi}{2}+\frac{\pi}{2}x)^2}dx\nonumber\\
		=&\frac{20C}{\beta}\frac{1}{(t_0-\frac{\pi}{2})^\beta}+\frac{8}{t_0-\frac{\pi}{2}}\nonumber\\
		\leq &\frac{\frac{30C}{\beta}+10}{x_0^\beta}.\nonumber
	\end{align}
	Thus, one has
	\begin{align}
		\abs{\int_{x_0}^x\frac{\sin\theta(t)}{t}dt}\leq \frac{\frac{30C}{\beta}+10\pi}{x_0^\beta}.\nonumber
	\end{align}
\end{proof}

	\noindent\textbf{Acknowledgments}\ The authors wish to thank Prof. Jifeng Chu of Hangzhou Normal University for useful comments on the earlier version of the manuscript.
	 The authors are supported by the National Natural
	Science Foundation of China (11871031).\
	
	\
	
	\noindent\textbf{Data Availability}\ Data sharing not applicable to this article as no datasets were generated or analysed during the current study.

	\
	
%\noindent\textbf{Data availability statement}\  No data accompanies this paper.
	
	\section*{Declarations}
	
	\noindent \textbf{Conflict of interest}
	On behalf of all authors, the corresponding author states that there is no conflict of interest.

\scriptsize	\bibliographystyle{abbrv} % abbrv
	\bibliography{reference}

\end{document}